\def\ignore #1 {}
 \newtheorem{theorem}{Theorem}
\newtheorem{lemma}[theorem]{Lemma}
\newtheorem{proposition}[theorem]{Proposition}
\newtheorem{corollary}[theorem]{Corollary}
\theoremstyle{definition}
\newtheorem{definition}[theorem]{Definition}
\theoremstyle{remark}
\newtheorem{remark}[theorem]{Remark}
\numberwithin{equation}{section}
\numberwithin{theorem}{section}
 \def\CC{\mathbb{C}}
  \def\RR{\mathbb{R}}
\def\V{\mathcal{V}}
\def\A{\mathcal{A}}
\def\L{\mathcal{L}}
\def\N{\mathcal{N}}
\def\R{\mathbb{R}}
\def\U{\mathcal{U}}
\def\V{\mathcal{V}}
\def\ZZ{\mathbb{Z}}
\DeclareMathOperator\End{End}
\DeclareMathOperator\Hom{Hom}
\DeclareMathOperator\ch{ch}
\DeclareMathOperator\even{even}
\DeclareMathOperator\odd{odd}
\DeclareMathOperator\Ind{Ind}
\DeclareMathOperator{\Tr}{Tr}
\DeclareMathOperator{\Ima}{Im}
\DeclareMathOperator\Sym{Sym}
\DeclareMathOperator\Spin{Spin}
\DeclareMathOperator\Fr{Fr}
\title[An Analytical approach to the equivariant index and Witten genus]{An analytical approach to the equivariant index and Witten genus on spin manifolds}
\author{Juan Jose Villarreal}
\address{Virginia Commonwealth University,
Richmond, VA, USA}
\email{juanjos3villarreal@gmail.com}
\begin{document}

\maketitle

\begin{abstract}
This work is divided in two cases. In the first case, we consider a spin manifold $M$ as the set of fixed points of an $S^{1}$-action on a spin manifold $X$, and in the second case we consider the spin manifold $M$ as the set of fixed points of an $S^{1}$-action on the loop space of $M$. For each case, we build on $M$ a vector bundle, a connection and a set of bundle endomorphisms. These objects are used to build global operators on $M$ which define an analytical index in each case. In the first case, the analytical index is equal to the topological equivariant Atiyah-Singer index, and in the second case the analytical index is equal to a topological expression where the Witten genus appears.  
\end{abstract}


\section{Introduction}

Let $X$ be an oriented, compact, even dimensional, spin Riemannian manifold. On $X$ there is a bundle of complex spinors $\bigtriangleup(X)$, and a Dirac operator $\slashed{D}_{X}$ acting on sections of this bundle. The orientation on $X$ defines a Riemannian volume form, and its image in the Clifford algebra splits the bundle of complex spinors $\bigtriangleup(X)=\bigtriangleup^{+}(X)\oplus \bigtriangleup^{-}(X)$, we call $\bigtriangleup^{+}(X)$ the positive spinor bundle and $ \bigtriangleup^{-}(X)$ the negative spinor bundle. The index of the Dirac operator is calculated using the Atiyah-Singer Index theorem 
$$\Ind (\slashed{D}_{X})=\dim \ker (\slashed{D}_{X}|_{\bigtriangleup^{+}(X)})-\dim \ker (\slashed{D}_{X}|_{\bigtriangleup{-}(X)})=\int_{M} A(X),$$ 
where $A(X)$ is the topological $\hat{A}$ class, the notation $|_{E}$ means that the operator is restricted to the sections $\Gamma(E)$ for a vector bundle $E$. For an introduction to the Atiyah-Singer index theorems see \cite{lawson2016spin, atiyah1968index}. Additionally, If $R\rightarrow X$ is a complex bundle equipped with a connection, then we have a twisted Dirac operator $\slashed{D}_{X}\otimes R:\Gamma (\bigtriangleup^{+}(X)\otimes R)\rightarrow \Gamma(\bigtriangleup^{-}(X)\otimes R)$ and the Atiyah-Singer index theorem states
\begin{equation}\Ind (\slashed{D}_{X}\otimes R)=\dim \ker (\slashed{D}_{X}\otimes R|_{\bigtriangleup^{+}(X)\otimes R})-\dim \ker (\slashed{D}_{X}\otimes R|_{\bigtriangleup^{-}(X)\otimes R})=\int_{X} A(X)\ch (R),\label{TAS}\end{equation}
where $\ch(R)$ is the Chern character of the bundle $R$.   

Now, let us assume that there is an $S^{1}$-action orientation  preserving isometry acting on $X$ such that the action lifts to the spin bundle. Then this action preserves the splitting $\bigtriangleup(X)=\bigtriangleup^{+}(X)\oplus \bigtriangleup^{-}(X)$ and commutes with the Dirac operator. The equivariant index is defined by
\begin{equation}\Ind (\slashed{D}_{X},q)=\Tr (q|_{\ker (\slashed{D}_{X}|_{\bigtriangleup^{+}(X)})})-\Tr (q|_{\ker (\slashed{D}_{X}|_{\bigtriangleup^{-}(X)})}),\label{equiin}\end{equation}
where $q=e^{i\theta}\in\mathbb{C}^{\times}$ and $q|_{E}$ denotes the $S^{1}$-action on sections $\Gamma(E)$ of the vector bundle $E$. 

There is also a topological formula for the analytical equivariant index above, in order to express this formula we consider the following definitions and properties, see \cite{atiyah1970spin} for details. The set of fixed points of the $S^{1}$-action on $X$ is a manifold that we denote by $M$. We assume that $M$ is connected. The normal bundle $N'$ of $M$ in $X$ can be written as a direct sum $N'=\bigoplus _{r\in \mathcal{A}}N'_{r}$ where each $N'_{r}$ is a real bundle equipped with a complex structure and $\mathcal{A}$ is a finite set of positive integers, see Theorem \ref{Koba2}. We denote by $N'_{r}$ the underlying real bundle and by $N_{r}$ the complex bundle. On $N_{r}$ the $S^{1}$-action acts as $q^{r}$. Note that the fixed point manifold $M$ has an induced orientation from the complex bundles $N_{r}$ and the orientation on $X$. We define $S_{t}(E):=1+tE+t^{2}\Sym^{2}(E)+\cdots $ as the formal power series with values in vector bundles on $M$.  Then the equivariant index in \eqref{equiin} is given by  
\begin{equation}\Ind (\slashed{D}_{X},q)= q^{\frac{1}{2}c_{1}}\int_{M} A(M)\ch (\sqrt{\det N}\otimes_{r\in\mathcal{A}} S_{q^{r}}(N_{r})), \label{ASS}\end{equation}
where $c_{1}:=\sum_{r\in\mathcal{A}}rd_{r} $ for $d_{r}:=\dim_{\mathbb{C}}N_{r}$. Note that $q$ is not branched because $\sum_{r\in\mathcal{A}}r d_{r}=0$ (mod $2$), this follows from the lifting of the action to the spin bundle on $X$. The topological expression above makes sense even when the bundle $\det N$ does not have a square root bundle. In this paper, we consider the particular case where $M$ is a spin manifold, in this case the square root of $\det N$ does exists (Proposition \ref{pro2.2}). Finally, note that the Atiyah-Hirzebruch theorem states that \eqref{ASS} is indeed zero. Even with this result the expression above can be easily generalized for more interesting cases, the twisted cases. Our interest in $\eqref{ASS}$ is that this expression can be seen as a finite dimensional analogous of the Witten genus.


For an introductory review of elliptic and Witten genus see \cite{hirzebruch1992manifolds}. Let $M$ be a compact, oriented manifold such that $\dim M=4k$, where $k$ is a positive integer, the Witten genus is defined as follows  
 $$\phi_{W}(M,q):=\int_{M} \prod_{i=1}^{2k} \exp(\sum_{k=2}^{\infty}\frac{2}{(2k)!}G_{2k}(\tau) x^{2k}_{i})\in \mathbb{Q}[[q]],$$
where $x_{1},\cdots ,x_{2k}$ are the Chern roots  of the bundle $TM_{\mathbb{C}}=TM\otimes\mathbb{C}$, $q=e^{2\pi i\tau}$ and $G_{2k}(\tau)$ are Eisenstein series. The Witten genus $\phi_{W}(M,q)$ is a modular form of weight $2k$.

 The Witten genus was defined in \cite{witten1987elliptic}, in this work the expression below appeared considering the index of a Dirac operator on the loop space $\L M$ 
 \begin{equation}\Phi(M,q):=\int_{M} A(M)\ch (\bigotimes_{n=1}^{\infty}S_{q^{n}}(TM_{\mathbb{C}}))\prod_{n=1}^{\infty}(1-q^{n})^{\dim M}\in \mathbb{Q}[[q]].\label{WG}\end{equation}
 
It was proved in \cite{zagier1988note} that if the first Pontrjagin class is zero $p_{1}(M)=0$ on $H^{4}(M,\mathbb{R})$ the following identity is satisfied
$$\Phi(M,q)=\phi_{W}(M,q) .$$
 And if  $M$ is a spin manifold then $\Phi(M,q)\in\mathbb{Z}[[q]]$. The condition on the first Pontrjagin class was later refined in \cite{mclaughlin1992orientation}, where from a topological point of view the condition  $\frac{1}{2}p_{1}(M,\mathbb{Z})=0$ on $H^{4}(M,\mathbb{Z})$ appeared considering extensions of spin loop groups.

 In \cite{witten1987elliptic}, the author was motivated by some construction in quantum field theory \cite{alvarez1987dirac,witten1985fermion, witten1988index}. These works have motivated advances in different mathematical areas in order to provide explanations to the properties of the Witten genus and elliptic genus; some works in this direction are \cite{ taubes1989s, costello2010geometric, ando2001elliptic, han2018witten}, in particular considering vertex algebras  \cite{  gorbounov2004gerbes, ma2005elliptic, cheung2008witten, tamanoi2006elliptic, bressler2007first, huang2012meromorphic, borisov2000elliptic}.

Now we express our main results.  In this paper, we combine the geometrical constructions done in \cite{taubes1989s} with a language of operators. Following \cite{taubes1989s}, we consider first the case where $M$ is the set of fixed points of an $S^{1}$-action on a finite dimensional spin manifold $X$, we call this case the finite dimensional case. Additionally we assume that $M$ is a connected spin manifold. In this case, we define a set of objects: a $\mathbb{Z}_{2}$ graded vector bundle $\V\cong \V^{+}\oplus \V^{-}\rightarrow M$, a connection and a set of bundle endomorphisms. These objects will be used to build a global operator $Q:\Gamma(\V)\rightarrow \Gamma(\V)$ and a bundle morphism $L_{K}:\V\rightarrow \V$ such that the analytical index defined by
   \begin{equation}\Ind (Q,q):=\Tr (q^{L_{K}}|_{\ker (Q|_{\V^{+}})})-\Tr (q^{L_{K}}|_{\ker (Q|_{\V^{-}})}),\label{intro1}\end{equation}
it is going to be proved (Theorem \ref{index}) to satisfy the following identity
  \begin{equation}\Ind (Q,q)=q^{\frac{1}{2}c_{1}}\int_{M} A(M)\ch (\sqrt{\det N}\otimes_{r\in\mathcal{A}} S_{q^{r}}(N_{r})).\label{teo1}\end{equation}
  
  And we consider a second case, where $M$ is seen as the fixed point set of an $S^{1}$-action on the loop space $\L M$, additionally we assume that $M$ is a connected spin manifold. We call this case infinite dimensional case. As in the finite dimensional case,  we define a set of objects: a $\mathbb{Z}_{2}$ graded vector bundle $\V_{R}\cong \V^{+}_{R}\oplus \V^{-}_{R}\rightarrow M$, a connection and a set of bundle endomorphisms. These objects will be used to build a global operator $Q_{R}:\Gamma(\V_{R})\rightarrow \Gamma(\V_{R})$ and a bundle morphism $P:\V_{R}\rightarrow \V_{R}$ such that the index defined by
   \begin{equation}\Ind (Q_{R},q):=\Tr(q^{P}|_{\ker (Q_{R}|_{\V^{+}_{R}})})-\Tr(q^{P}|_{\ker (Q|_{\V^{-}_{R}})}),\label{intro2}\end{equation}
it is going to be proved (Theorem \ref{index2}) to satisfy the following identity
  \begin{equation}\Ind (Q_{R},q)=\frac{\Phi(M,q)}{\eta(q)^{\dim M}},\label{teo2}\end{equation}
where $\eta(q)$ is the Dedekind eta function.

  In this work, the vector bundles, connections and bundle endomorphisms are defined on the fixed point manifold of the $S^{1}$-action; this is a main difference with \cite{taubes1989s} where vector bundles, connections and bundle endomorphisms are defined on a neighborhood of the fixed point manifold. This approach in particular makes us to consider infinite dimensional vector bundles even for the finite dimensional case. Additionally this approach allows us to see a language similar to vertex algebras arising naturally. We plan to study in a future work this relation between Elliptic genus and vertex algebras.

  
The paper is organized as follows. In Section \ref{sec2}, first we fix a notation in Section \ref{sec2.1}, then we state some known geometric results Section in \ref{sec2.2}. Finally, in Section \ref{sec2.3} we give a short geometrical motivation to the Sections \ref{sec3} and \ref{sec4}.

 In Section \ref{sec3}, we consider the finite dimensional case. In Section \ref{sec3.1}, we built a $\mathbb{Z}_{2}$ graded complex bundle $\V=\V^{+}\oplus \V^{-}$ on $M$ and we define an Hermitian product on it. In Section \ref{sec3.2}, we define an Hermitian connection $D:\Gamma(\V)\rightarrow \Gamma(\V\otimes T^{*}M)$.  
 And, we define a finite set of bundles endomorphisms (Definition \ref{oscilator}) which satisfy the Heisenberg and Clifford algebra (Proposition \ref{ad1}).
In Section \ref{2.2.4} are defined the global operators, in particular, we define the global operators $Q$ and $L_{K}$ (Definition \ref{global1}). Finally in Section \ref{sec3.4}, we define the analytical index in \eqref{intro1} and it is proved in Theorem \ref{index} the equality \eqref{teo1}.

In Section \ref{sec4}, we consider the infinite dimensional case. 
This section is a generalization of the previous section.
In Section \ref{sec4.1}, we built a $\mathbb{Z}_{2}$ graded complex bundle $\V_{R}=\V_{R}^{+}\oplus \V_{R}^{-}$ on $M$ and we define an Hermitian product on it. In Section \ref{sec4.2}, we define an Hermitian connection $D:\Gamma(\V_{R})\rightarrow \Gamma(\V_{R}\otimes T^{*}M)$. And, we define an infinite set of bundles endomorphisms (Definition \ref{oscilator2}) which satisfy the Heisenberg and Clifford algebra. In Section \ref{sec4.3}, we define the global operators, in particular, we define the global operators $Q_{R}$ and $P$ (Definitions \ref{4.1} and \ref{4.2}). Finally in Section \ref{sec4.4}, it is defined the analytical index in \eqref{intro2} and it is proved in Theorem \ref{index2} the equality \eqref{teo2}.\\

\textbf{Acknowledgment}: I would like to thank Marco Aldi for his continued support while this work was done. A first version of this work was presented in the seminar Lie Group/Quantum Mathematics Seminar at Rutgers University, I am very grateful with Yi-Zhi Huang and James Lepowski by their comments and hospitality.  I am grateful to Reimundo Heluani, Bojkov Bakalov, Nicola Tarasca and Michael Penn for discussions on this and related subjects. This work was done at Virginia Commonwealth University, I am very grateful to this institution by its hospitality and excellent working conditions.

\section{Preliminaries}\label{sec2}

\subsection{Assumptions and notations}\label{sec2.1}

In this paper, $M$ denotes a $C^{\infty}$ compact connected oriented Riemannian manifold of dimension $2l$. $TM$ denotes its tangent bundle, $T^{*}M$ the cotangent bundle. $TM$ and $T^{*}M$ are $2l$-dimensional real vector bundles identified by the metric.

Let  $V\rightarrow M$ be a real, oriented $2r$-dimensional vector bundle equipped with a metric. The bundle of positively oriented, orthonormal frames $\Fr V\rightarrow M$ is a principal $SO(2r)$ bundle over $M$. The second Stiefel-Whitney class, $w_{2}(V)\in H^{2}(M,\mathbb{Z}_{2})$, is the obstruction to the existence of a principal $\Spin(2r)$ bundle, $\Fr' V\rightarrow M$ with the property that $\Fr V=\Fr' V/\{\pm 1\}$, we say that $\Fr'V$ is a spin structure. 

Let $\bigtriangleup$ be the $2^{r}$-dimensional Hermitian space of spinors, which split into $\bigtriangleup=\bigtriangleup^{+}\oplus \bigtriangleup^{-}$, where $\bigtriangleup^{+}$ and $\bigtriangleup^{-}$ are the spaces of positive and negative spinors. $\bigtriangleup^{+}$ and $\bigtriangleup^{-}$ have dimension $2^{r-1}$ and are orthogonal in $\bigtriangleup$. Moreover, $\Spin(2r)$ acts irreducibly and unitarily on $\bigtriangleup^{+}$ and $\bigtriangleup^{-}$.


When $w_{2}(V)=0$, let $\bigtriangleup(V), \bigtriangleup^{+}(V), \bigtriangleup^{-}(V)$ be the bundles of spinors over $M$
$$\bigtriangleup(V)=\Fr 'V\times_{\Spin (2r)}\bigtriangleup,\qquad \bigtriangleup^{\pm}(V)=\Fr'V\times_{\Spin (2r)}\bigtriangleup^{\pm}.$$
For every $x\in M$, the fiber $(\Fr'V)_{x}$ can be identified to a set of unitary operators from $\bigtriangleup$ into $\bigtriangleup(V)$, which send $\bigtriangleup^{\pm}$ into $\bigtriangleup(V)^{\pm}$. Let $w_{2}(M)=w_{2}(TM)$, when $w_{2}(M)=0$ we say that $M$ is a spin manifold and we denote the bundle of spinors as $\bigtriangleup(M)=\bigtriangleup(TM)$ and $\bigtriangleup^{\pm}(M)=\bigtriangleup^{\pm}(TM)$.

Let $E$ be a complex vector bundle on $M$,  we denote by $\Gamma(E)$ the space of $C^{\infty}$ complex sections. Moreover, we associate to $E$ the following vector bundles on $M$
$$S(E):=1+E+\Sym^{2}(E)+\cdots , \qquad \wedge^{*}(E):=1+E+\wedge^{2}(E)+\cdots .$$
Additionally if $m=\dim_{\mathbb{C}} E$, we associate to $E$ the following line bundle  
$$\det E= \wedge^{m}E .$$

Now, we say that a complex line bundle $L$ on $M$ has a square root if there is a  complex line bundle $S$ on $M$ such that $L=S\otimes S$, we use the notation $S=\sqrt{L}$. 

Let $X$ be a spin manifold. An orientation preserving isometric $S^{1}$-action on a spin manifold $X$ lifts to the bundle $\bigtriangleup (V)$ if there is an $S^{1}$-action on $\Fr'V$ (commuting with the action of $\Spin(2r)$ from the right $\Fr'V\times \Spin(n)\rightarrow \Fr'V$) which is compatible with the covering map $\Fr'V\rightarrow \Fr V$. 

Let $M$ be a submanifold of $X$, we say that $M$ is totally geodesic if any geodesic on the submanifold $M$ with its induced Riemannian metric is also a geodesic on the Riemannian manifold $X$.

\subsection{Geometric preliminaries}\label{sec2.2}
Let $X$ be a Riemannian manifold with an $S^{1}$-action by isometries, we denote by $K$ the Killing vector field associated to this action. We have the following general result
\begin{proposition}\label{pro2.1} The zero set of a Killing field is a disjoint union of totally geodesic submanifolds each of even codimension.
\end{proposition}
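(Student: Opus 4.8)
The plan is to establish the two claims separately: first that the zero set $\mathrm{Zero}(K)$ is a totally geodesic submanifold, and then that each connected component has even codimension. For the first part, I would work locally near a point $p \in \mathrm{Zero}(K)$. Since $K$ is a Killing field generating the $S^1$-action by isometries, the flow $\varphi_t$ of $K$ consists of isometries fixing $p$, so each differential $d(\varphi_t)_p$ is an orthogonal transformation of $T_pX$; this gives a one-parameter subgroup of $O(T_pX)$, i.e.\ an element $A = \nabla K|_p \in \mathfrak{so}(T_pX)$. The key structural fact is that near $p$ the exponential map intertwines the linear flow $e^{tA}$ on $T_pX$ with the isometry flow $\varphi_t$ on $X$ (because isometries commute with $\exp_p$ when they fix $p$). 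Consequently a point $\exp_p(v)$ is fixed by the whole flow iff $v \in \ker A$, which identifies a neighborhood of $p$ in $\mathrm{Zero}(K)$ with $\exp_p(\ker A \cap U)$. This shows $\mathrm{Zero}(K)$ is a submanifold with $T_p(\mathrm{Zero}(K)) = \ker A = \ker(\nabla K|_p)$, and since it is (locally) the image under $\exp_p$ of a linear subspace of $T_pX$, it is totally geodesic: geodesics of $X$ starting in the submanifold with initial velocity tangent to it stay in the image of that subspace.

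For the second part — the even codimension — the point is that the normal space to $\mathrm{Zero}(K)$ at $p$ is exactly the orthogonal complement $(\ker A)^\perp$, on which $A$ acts as an invertible skew-symmetric endomorphism. A skew-symmetric real endomorphism of a real inner product space has no nonzero real eigenvalues and its nonzero (purely imaginary) eigenvalues come in conjugate pairs, so the space on which it is invertible is even-dimensional. Hence $\operatorname{codim} \mathrm{Zero}(K) = \dim (\ker A)^\perp$ is even. Moreover $A|_{(\ker A)^\perp}$ endows the normal bundle with a complex structure (after choosing the orientation), which is the mechanism behind the decomposition $N' = \bigoplus_{r} N'_r$ referenced later in the paper, though that refinement is not needed here.

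The remaining point is to assemble the local pictures into the global statement that $\mathrm{Zero}(K)$ is a \emph{disjoint union} of such submanifolds. This is immediate: $\mathrm{Zero}(K)$ is closed, and the local analysis shows every point has a neighborhood in which it is a single embedded totally geodesic submanifold, so the connected components are the pieces; different components may have different dimensions (hence "disjoint union" rather than "submanifold"), but each is separately totally geodesic of even codimension in $X$.

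I expect the main obstacle to be the careful justification that $\exp_p$ conjugates the isometry flow to its linearization — i.e.\ the identity $\varphi_t(\exp_p v) = \exp_p(d(\varphi_t)_p v)$ for $v$ in a normal neighborhood — and the attendant verification that the fixed-point set is genuinely a smooth submanifold of the expected dimension rather than something with lower-dimensional strata near $p$. Once that normal-coordinates description is in hand, both the totally geodesic property and the parity of the codimension follow from elementary linear algebra of skew-symmetric operators.
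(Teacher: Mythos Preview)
Your argument is correct and is essentially the standard one. Note, however, that the paper does not actually prove this proposition: it simply cites Kobayashi \cite{kobayashi1995transformation}, Theorem~II.5.1, and moves on. What you have written is precisely the line of reasoning found there --- linearize the isometry flow at a zero, use $\varphi_t \circ \exp_p = \exp_p \circ\, d(\varphi_t)_p$ to identify the local zero set with $\exp_p(\ker A)$, and read off total geodesy and even codimension from the skew-symmetry of $A=(\nabla K)_p$.

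One small point worth tightening: the sentence ``since it is (locally) the image under $\exp_p$ of a linear subspace of $T_pX$, it is totally geodesic'' only directly controls geodesics emanating from $p$. The cleanest way to get total geodesy at \emph{every} point is to observe that the zero set is the common fixed-point set of the family of isometries $\{\varphi_t\}$, and the fixed set of any isometry is totally geodesic (a geodesic with initial point and velocity fixed by $\varphi$ is sent to a geodesic with the same initial data, hence to itself). You allude to this but do not quite say it; since $p$ was arbitrary your argument can be repaired by repeating it at each point, but invoking the fixed-set-of-an-isometry fact is more direct. The paper's proof of the subsequent Theorem~\ref{Koba} takes a slightly different tack, working with the identity $\nabla^X_u\nabla^X K = -R(K,u)$ to show $\nabla^X K$ is parallel along $M$, but that is for the finer structure of the normal bundle rather than for Proposition~\ref{pro2.1} itself.
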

 See \cite[Theorem II.5.1]{kobayashi1995transformation} 
 for a proof of this proposition.

\begin{theorem}\label{Koba} Let $M$ be the set of fixed points for the Killing vector field $K$ on a Riemannian manifold $X$, and let	 $N'=TM^{\perp}$ be the normal bundle of $M$ in $X$. Then

\emph{(i)}\; Considered as a linear endomorphims of $TX$, the covariant derivative $\nabla^{X} K$ annihilates the tangent bundle $TM$ and induces a (skew-symmetric) automorphism of the normal bundle $N'$. Restricted to $N$, $\nabla^{X} K$ is parallel, $\nabla^{X}_{u}(\nabla^{X} X)$ for every $u\in TM$. 

\emph{(ii)}\; The normal bundle $N'$ is a complex vector bundle i.e. it exists an endomorphism called {complex structure} $J:N'\rightarrow N'$ s.t $J^{2}=-Id$.

\emph{(iii)}\; If $M$ is orientable, then $N'$ is orientable.

\end{theorem}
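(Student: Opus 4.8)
This is classical (cf.~\cite{kobayashi1995transformation}), and the plan is to deduce all three parts from two standard facts about the Killing field $K$ with flow $\phi_{t}$: first, the endomorphism $A_{K}:=\nabla^{X}K$ of $TX$, $u\mapsto\nabla^{X}_{u}K$, is skew-adjoint for the metric; second, $\nabla^{X}(\nabla^{X}K)$ is, pointwise, a contraction of $K$ against the curvature $R^{X}$ (the Bochner--Kostant identity, equivalently the fact that $K$ restricts to a Jacobi field along every geodesic of $X$), so it vanishes wherever $K=0$. The one essential new input is the hypothesis that $M$ is the \emph{entire} fixed-point set of the $S^{1}$-action, which will be used exactly once, to force $A_{K}$ to be nondegenerate on the normal bundle.

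\emph{Part (i).} Since $K\equiv 0$ on $M$, differentiating $K$ in a direction $u$ tangent to $M$ gives $A_{K}u=0$; thus $A_{K}$ kills $TM$, and by skew-adjointness $\langle A_{K}u,w\rangle=-\langle u,A_{K}w\rangle=0$ for $u\in N'$ and $w\in TM$, so $A_{K}(N')\subseteq N'$ and $A:=A_{K}|_{N'}$ is a skew-adjoint endomorphism of $N'$. To see $A$ is an automorphism I would prove $\ker A_{K,p}=T_{p}M$ for each $p\in M$: the inclusion $\supseteq$ is what was just shown, and for $\subseteq$, if $A_{K,p}v=0$ then $v$ is fixed by the isotropy representation $d(\phi_{t})_{p}=\exp(tA_{K,p})$ of $S^{1}$ on $T_{p}X$; since each $\phi_{t}$ is an isometry fixing $p$, the geodesic $s\mapsto\exp_{p}(sv)$ is then pointwise $S^{1}$-fixed and hence contained in $M$, so $v\in T_{p}M$. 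As $T_{p}X=T_{p}M\oplus N'_{p}$ orthogonally, this forces $A_{K,p}|_{N'_{p}}$ to be injective, hence an isomorphism of $N'_{p}$. Finally, $\nabla^{X}_{u}(\nabla^{X}K)$ vanishes at points of $M$ for every $u$ (it is a contraction of $K=0$ against $R^{X}$), which is the asserted parallelism $\nabla^{X}_{u}(\nabla^{X}K)=0$; since $A_{K}$ preserves the splitting $TX|_{M}=TM\oplus N'$, applying the orthogonal projection $\pi^{\perp}$ onto $N'$ to the Leibniz rule gives $\nabla^{N'}_{u}(As)=A(\nabla^{N'}_{u}s)$ for the induced normal connection $\nabla^{N'}:=\pi^{\perp}\circ\nabla^{X}$, i.e.\ $A$ is parallel on $N'$.

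\emph{Parts (ii) and (iii).} Now $A$ is a parallel, skew-adjoint automorphism of $N'$, so $-A^{2}=A^{*}A$ is parallel, self-adjoint and positive-definite; let $P:=(-A^{2})^{1/2}$ be its positive square root, a smooth, parallel endomorphism commuting with $A$ (the square-root function is smooth on positive-definite self-adjoint operators). Set $J:=AP^{-1}=P^{-1}A$. Then $J^{2}=A^{2}P^{-2}=A^{2}(-A^{2})^{-1}=-\mathrm{Id}$ and $J^{*}J=P^{-1}A^{*}AP^{-1}=P^{-1}(-A^{2})P^{-1}=\mathrm{Id}$, so $J$ is an orthogonal complex structure, parallel because $A$ and $P^{-1}$ are; this proves (ii) and in fact exhibits $N'$ as a parallel Hermitian bundle. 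Part (iii) then follows because a complex vector bundle carries a canonical orientation, so $N'$ is orientable — in fact regardless of $M$; equivalently, from $TX|_{M}=TM\oplus N'$ any two of the orientations of $TX|_{M}$, $TM$, $N'$ determine the third.

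\emph{Where the work is.} The only step that is not pure formalism or a standard Killing-field identity is the nondegeneracy of $A$ on $N'$, and the point there is exactly that one must use that $M$ is the full fixed locus of the $S^{1}$-action rather than merely a component of the zero set of $K$; the geodesic argument above is the usual device for this. A minor additional point requiring care is passing from parallelism of $A_{K}$ in the ambient Levi-Civita connection to parallelism of $A$ in the normal connection $\nabla^{N'}$, which goes through because $A_{K}$ respects the orthogonal splitting along $M$; the construction of $J$ and part (iii) are routine.
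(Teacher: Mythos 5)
Your proof is correct and, for parts (ii) and (iii), takes a genuinely different route than the paper. The paper's proof of (ii) proceeds by diagonalizing the skew-symmetric endomorphism $\nabla^{X}K$ at a point of $M$ into $2\times 2$ blocks with entries $\pm a_{i}$, then uses parallelism of $\nabla^{X}K$ to conclude that the eigenvalues are constant along $M$, decomposes $N'$ into eigenbundles $N'_{r}$, and sets $J|_{N_{r}}:=\frac{1}{r}\nabla^{X}K$; for (iii) it argues via the orientations of $TM$ and $TX|_{M}$ in the splitting $TX|_{M}=TM\oplus N'$. You instead construct $J$ by a polar-decomposition argument, $J:=A(-A^{2})^{-1/2}$, which avoids the eigenbundle decomposition entirely and gives orthogonality of $J$ for free, and you get orientability of $N'$ directly from the canonical orientation of a complex bundle, independently of any orientation on $TM$ or $X$. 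Both are valid; your construction of $J$ is cleaner as a proof of the theorem as stated, but note that the paper's eigenvalue decomposition $N'=\bigoplus_{r}N'_{r}$ is not a detour — it is reused immediately in Corollary~\ref{Koba2}, so the polar-decomposition route would require that analysis to be done separately. You also supply a detail the paper glosses over: the identification $T_{p}M=\ker(\nabla^{X}K)_{p}$ is asserted in the paper ``by definition of $M$,'' whereas you give the actual argument (the geodesic through $p$ in the direction of a vector annihilated by $A_{K,p}$ is pointwise fixed by the $S^{1}$-flow, hence lies in $M$), correctly flagging that this is where the hypothesis that $M$ is the full fixed-point set, not merely a component of the zero set of $K$, is used.
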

\begin{proof} 
We follow the proof in \cite[Theorem II.5.3]{kobayashi1995transformation}. We denote by $\nabla^{X}$ the Levi-Civita connection on $X$. Let $K$ be the Killing vector field associated to the $S^{1}$-action on $X$. We have the linear map 
 $$(\nabla^{X} K):TX\rightarrow TX, \qquad u\mapsto (\nabla^{X}_{u} K), $$
 where $u\in TM$.  Then by definition of $M$ we have that $T_{x}M = \ker (\nabla^{X} K)_{x}$ and $N'_{x}=\Ima(\nabla^{X} K)_{x}$ for $x\in M$. Additionally, Killing vector fields satisfy the Killing equation, hence $\nabla^{X} K$ defines a skew-symmetric map on $TM$.  Then,  for each $x\in M$ there is a basis of $T_{x}X= T_{x}M\oplus N'_{x}$ such that the linear endomorphism $(\nabla^{X} K)_{x}$ is given by a matrix of the form
 $$\begin{bmatrix}0 &&& & & & &\\
                                  & \cdots&&& & && \\
                                  & & 0 && & &&\\
                                  && & 0& a_{1}&& &\\
                                  &&&-a_{1} &0 &&&\\
                                  &&& & &\cdots&&\\
                                  &&& & && 0& a_{n} \\
                                  &&&&&&-a_{n} &0.  \end{bmatrix}, \quad\quad a_{i}> 0 ,$$
 where $2n=\dim X-\dim M$.
 
Now, Killing vector fields satisfy the following general equation 
$$\nabla^{X}_{u}\nabla^{X} K=-R(K,u), \qquad \text{for every }u\in TX , $$
where $R$ is the tensor curvature.  At every point of $M$, the right hand side of $R(K,u)$ vanishes and hence $\nabla^{X}_{v}\nabla^{X} K=0$ for every vector tangent to $X$ at a point of $M$, see the reference for details. 

Since, the eigenvalues $\pm i a_{1},\cdots, \pm i a_{n}$ of $(\nabla^{X} K)_{x}$ defined above remain constant on $M$ because $\nabla^{X} K$ is parallel on $M$, we can decompose the normal bundle $N'$ in to sub-bundles $N'_{1},\cdots N'_{k}$ as follows
$$N'=N'_{1}+\cdots +N'_{k},\qquad (\text{orthogonal decomposition}), $$
where $N'_{1},\cdots N'_{k}$ correspond to the eigenspaces for $\pm i r_{1},\cdots ,\pm i r_{k}$ of $(\nabla^{X} K)_{x}$ restricted to $N'$. We use the notation $a_{i}$ for possibly equal values and $r_{i}$ for different values.  Let $J$ be the endomorphism of $N'$ defined by $J|_{N_{r}}:=\frac{1}{r}(\nabla^{X} K)$. Then, $J^{2}=-Id$, and $J$ defines a complex vector bundle structure in $N'$. Let $J:N'\rightarrow N'$ be the endomorphism defined by $J|_{N_{r}}:=\frac{1}{r}(\nabla^{X} K)$, then $J^{2}=-Id$. Since, $TX_{M}=TM+N'$, if $TM$ is orientable, $TN$ is orientable.
  \end{proof}
  In the previous theorem, $K$ is a general Killing vector field. We have that the flow of $K$ is given by an $S^{1}$-action, hence $a_{1},\cdots, a_{n}$ are positive integers. From now on, we denote this finite set of integers as follows 
   \begin{equation}\label{set}
   \A=\{r_{1}, \cdots, r_{k}\},
   \end{equation}
   where as before we use the notation $a_{i}$ for possibly equal values and $r_{i}$ for different values.

 From Theorem \ref{Koba} (ii), $N'$ is a real vector bundle on $M$ with a complex structure $J$, hence $N'$ can be turned into a complex vector bundle, denoted by $(N',J)$,  by setting: For any point $x\in M$, any real numbers $a, b\in\mathbb{R}$ and vector $u\in N'_{x}$,
$$(a+ib)u:=au+bJu .$$
Moreover, the endomorphism $J$ is linearly extended to $N'\otimes\mathbb{C}$, and the eigenvalues $i$ and $-i$ of $J$ splits $N'\otimes\mathbb{C}$ in complex bundles $N$ and $\bar{N}$ respectively as follows
\begin{equation}\label{ult}
N'\otimes\mathbb{C}=N+\bar{N}.
\end{equation}
There is a complex isomorphism between $(N',J)$ and $N$ given by $u\rightarrow u-iJu$ for $u\in N'$.

\begin{corollary}\label{proJ}
The complex structure $J$ on $N'$ is orthogonal and parallel i.e. $J_{p}\in SO(2n,\mathbb{R})$ for every $p\in M$ and $\nabla_{u}J=0$ for every $u\in TM$. 
\end{corollary}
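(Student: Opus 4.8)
The plan is to read off both claims from the structure of $\nabla^{X}K$ recorded in Theorem \ref{Koba}. The facts I will use are: on the eigen-subbundle $N'_{r}$ one has $J|_{N'_{r}}=\tfrac1r(\nabla^{X}K)$; the restriction $A:=(\nabla^{X}K)|_{N'}\in\Gamma(\End N')$ is skew-adjoint for the metric, $\langle Au,v\rangle=-\langle u,Av\rangle$ (the Killing equation); $J^{2}=-\mathrm{Id}$; and $A$ is parallel along $M$, i.e.\ $\nabla^{X}_{u}(\nabla^{X}K)=0$ for every $u\in TM$. Here $\nabla$ denotes the connection induced by the Levi--Civita connection $\nabla^{X}$ on the subbundle $N'\subset TX|_{M}$ (the normal connection).

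First I would prove that $J_{p}\in SO(2n,\mathbb{R})$ for all $p\in M$. On each $N'_{r}$ the endomorphism $J$ is a nonzero real multiple of $A$, hence skew-adjoint for the metric; together with $J^{2}=-\mathrm{Id}$ this gives $\langle Ju,Jv\rangle=-\langle J^{2}u,v\rangle=\langle u,v\rangle$, so $J_{p}$ is orthogonal. To pin down $\det J_{p}=1$, note that the complexification of $J_{p}$ has eigenvalues $+i$ and $-i$, each of multiplicity $n$ (these are the complex ranks of $N$ and $\bar N$ in \eqref{ult}), so $\det J_{p}=i^{n}(-i)^{n}=1$. Hence $J_{p}\in SO(2n,\mathbb{R})$.

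For the parallelism of $J$, the key point is that $M$, being the fixed-point set of the $S^{1}$-action, is totally geodesic by Proposition \ref{pro2.1}; therefore its second fundamental form and all shape operators vanish, so $\nabla^{X}$ preserves the orthogonal splitting $TX|_{M}=TM\oplus N'$ along directions tangent to $M$ and restricts on $N'$ exactly to the normal connection $\nabla$. Since $A=(\nabla^{X}K)|_{N'}$ is the restriction of a $\nabla^{X}$-parallel endomorphism that respects this splitting, it is $\nabla$-parallel: $\nabla_{u}A=0$ for every $u\in TM$. Because $M$ is connected and $A$ is parallel, the eigenvalues $\pm i r_{j}$ are globally constant, and the spectral projection $\Pi_{r}\colon N'\to N'_{r}$ is a fixed polynomial in $A$, for instance $\Pi_{r}=\prod_{s\in\A,\,s\neq r}(A^{2}+s^{2}\mathrm{Id})\,(s^{2}-r^{2})^{-1}$, hence $\nabla$-parallel. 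Therefore $J=\sum_{r\in\A}\tfrac1r A\,\Pi_{r}$ is a polynomial in the parallel endomorphism $A$ with constant coefficients, so $\nabla_{u}J=0$ for all $u\in TM$.

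I expect the one genuinely delicate step to be the identification of the induced connection on $N'$ with the restriction of $\nabla^{X}$ in $TM$-directions, which is precisely where the total geodesy of $M$ is used; once this is in hand, the principle that the restriction of a parallel endomorphism to a parallel subbundle is parallel, together with the polynomial expression for $J$ in terms of $A$, makes the remainder routine.
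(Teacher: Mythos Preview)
Your argument is correct and follows exactly the route the paper indicates: the paper's own proof is the single sentence ``This corollary follows directly from the definition of $J$ in Theorem~\ref{Koba},'' and you have simply supplied the details of that implication (skew-symmetry of $\nabla^{X}K$ from the Killing equation plus $J^{2}=-\mathrm{Id}$ for orthogonality, and parallelism of $\nabla^{X}K$ along $M$ for $\nabla J=0$). Your extra care with the spectral projections $\Pi_{r}$ as polynomials in $A$ and with the identification of the normal connection via total geodesy is more than the paper spells out, but it is the natural unpacking of the same idea rather than a different approach.
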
 

This corollary follows directly from the definition of $J$ in Theorem \ref{Koba}. 
Aditionally, we have an Hermitian extension of the Riemannian metric $g_{X}$ on $X$ to the bundle $N'\otimes\mathbb{C}=N+ \bar{N}$ defined by 
\begin{equation}
(u\otimes a, v\otimes b):=a\bar{b}g_{X}(u,v)\label{g}
\end{equation}
where $u,v\in N, a,b\in\mathbb{C}$. The bundles $N$ and $\bar{N}$ form an orthogonal decomposition with respect $(\cdot, \cdot)$. Hence, $N$ and $\bar{N}$ are Hermitian bundles, the Hermitian metric on $N$ is defined by $({n},{m})_{N}:=(n,m)$ where $n,m\in\Gamma(N)$, and the Hermitian metric on $\bar{N}$ is given by $(\bar{n},\bar{m})_{\bar{N}}:=({m},{n})$ where $\bar{n},\bar{m}\in\Gamma(\bar{N})$.

Now, the complex bundles $N$ and $\bar{N}$ have the following decomposition. 

\begin{corollary}\label{Koba2} The operator $\nabla^{X} K:N'\rightarrow N'$ splits the normal bundle as follows 
$$ N'\otimes \mathbb{C}=\bigoplus_{r\in \mathcal{A}}N_{r}+ \bar{N}_{r},\quad\quad N_{r}:=\{v\in N'\otimes\mathbb{C}| \nabla^{X}_{v} K=-irv\},$$
 Therefore, we have isomorphisms $N=\bigoplus_{r\in \mathcal{A}}N_{r}$ and $\bar{N}=\bigoplus_{r\in \mathcal{A}}\bar{N}_{r}$.
\end{corollary}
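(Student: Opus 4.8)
The plan is to refine the eigenspace decomposition already established in Theorem \ref{Koba} by passing to the complexification and tracking the action of the parallel skew-symmetric endomorphism $\nabla^{X}K$ on each isotypic piece. First I would recall from the proof of Theorem \ref{Koba} that, restricted to $N'$, the endomorphism $\nabla^{X}K$ is parallel and has eigenvalues $\pm i r$ with $r\in\mathcal{A}$, giving the real orthogonal decomposition $N'=\bigoplus_{r\in\mathcal{A}}N'_{r}$, where $N'_{r}$ is the (real, $J$-invariant) subbundle on which $(\nabla^{X}K)^{2}=-r^{2}\,\mathrm{Id}$. On each $N'_{r}$ the complex structure $J$ from Theorem \ref{Koba} is exactly $\frac{1}{r}\nabla^{X}K$, so $(N'_{r},J)$ is a complex subbundle of $(N',J)$, and the isomorphism $(N',J)\cong N$ described after \eqref{ult} carries $N'_{r}$ onto a complex subbundle of $N$, which I call $N_{r}$; similarly one gets $\bar N_{r}\subset\bar N$. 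This already gives $N=\bigoplus_{r\in\mathcal{A}}N_{r}$ and $\bar N=\bigoplus_{r\in\mathcal{A}}\bar N_{r}$ as complex bundles.

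The remaining point is the intrinsic characterization $N_{r}=\{v\in N'\otimes\mathbb{C}\mid \nabla^{X}_{v}K=-irv\}$. Here I would extend $\nabla^{X}K$ complex-linearly to $N'\otimes\mathbb{C}$; since it is real and skew-symmetric, on $N'_{r}\otimes\mathbb{C}$ it diagonalizes with eigenvalues $+ir$ and $-ir$. The $(-ir)$-eigenspace is spanned by vectors of the form $u - iJu$ with $u\in N'_{r}$: indeed, using $Ju=\frac{1}{r}\nabla^{X}K\,u$ one computes $\nabla^{X}K(u-iJu)=ru\!\cdot\!J\big|_{N'_r}$-type identity, namely $\nabla^{X}_{u}K = rJu$ and $\nabla^{X}_{Ju}K = -ru$, so $\nabla^{X}K(u-iJu) = rJu - i(-ru) = -ir(u-iJu)$. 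But $u\mapsto u-iJu$ is precisely the isomorphism $(N'_{r},J)\xrightarrow{\sim}N_{r}$ from the text, and the eigenvalue $i$ of $J$ corresponds to $N$ (not $\bar N$), so the $(-ir)$-eigenspace of $\nabla^{X}K$ inside $N'_{r}\otimes\mathbb{C}$ is exactly $N_{r}$; the $(+ir)$-eigenspace is $\bar N_{r}$. Summing over $r\in\mathcal{A}$ and using the orthogonality of the $N'_{r}$ (hence of the $N_{r},\bar N_{r}$ under \eqref{g}) yields $N'\otimes\mathbb{C}=\bigoplus_{r\in\mathcal{A}}(N_{r}\oplus\bar N_{r})$.

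The only genuinely delicate step is the bookkeeping of sign and conjugation conventions: one must check that the $(-ir)$-eigenvalue convention for $N_{r}$ is consistent simultaneously with the convention $N = \{+i\text{-eigenspace of }J\}$ fixed after \eqref{ult} and with the isomorphism $u\mapsto u-iJu$, rather than its conjugate. Everything else — that the decomposition is by subbundles (not just fiberwise), and that it is orthogonal and parallel — is immediate from the fact that $\nabla^{X}K$ is a globally defined parallel bundle endomorphism, so its eigenbundles are parallel and mutually orthogonal; no further analytic input is needed.
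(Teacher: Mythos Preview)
Your approach is exactly the paper's: the paper's entire proof is the single sentence ``This corollary is obtained from proof of Theorem \ref{Koba},'' and you have simply unpacked that reference, extracting the parallel eigenspace decomposition of $\nabla^{X}K$ and complexifying.

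One caution on the sign check you singled out as delicate. Your line $\nabla^{X}K(u-iJu) = rJu - i(-ru) = -ir(u-iJu)$ contains an arithmetic slip: $rJu + iru = +ir(u-iJu)$, not $-ir(u-iJu)$. With the paper's conventions ($N=\{+i\text{-eigenspace of }J\}$ and $J|_{N'_r}=\tfrac{1}{r}\nabla^{X}K$) the image of $u\mapsto u-iJu$ lands in the $+ir$-eigenspace of $\nabla^{X}K$, so the stated eigenvalue $-ir$ in the corollary is in tension with the conventions fixed after \eqref{ult}; your error accidentally masks this. The decomposition itself is unaffected---only the labeling of $N_r$ versus $\bar N_r$ is at stake---so the substance of your argument stands.
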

This corollary is obtained from proof of Theorem \ref{Koba}. As we mention in the introduction, we denote the dimensions of these bundles by $\dim_{\CC}N_{r}=d_{r}$.

The complex line bundles on $M$ form a group with respect to tensor product. The first Chern class defines an isomorphism between this group and $H^{2}(M,\mathbb{Z})$; hence a line bundle $L$ has square root iff $c_{1}(L)=0$ (mod $2$).

\begin{proposition}\label{pro2.2} Let $X$ be a spin manifold such that the $S^{1}$-action on $X$ lifts to the spin bundle $\bigtriangleup(X)$ and $M$ is the fixed point set of the $S^{1}$ action on $X$. We have that $M$ is a spin manifold iff $\sqrt{\det N}$ exists. 
\end{proposition}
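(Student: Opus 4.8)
The plan is to reduce the statement to an identity of second Stiefel--Whitney classes. First I would restrict $TX$ to $M$ and use the orthogonal splitting $TX|_{M}=TM\oplus N'$ from Theorem \ref{Koba}. Since $X$ is spin we have $w_{2}(TX)=0$, hence $w_{2}(TX|_{M})=0$. Both summands are oriented: $N'$ is orientable by Theorem \ref{Koba}(iii) (equivalently it carries the complex structure $J$), and $M$ carries the induced orientation fixed in Section \ref{sec2}, so $w_{1}(TM)=w_{1}(N')=0$. The Whitney sum formula then collapses to $w_{2}(TM)+w_{2}(N')=0$, i.e.
\[
w_{2}(TM)=w_{2}(N')\qquad\text{in }H^{2}(M,\mathbb{Z}_{2}).
\]

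Next I would rewrite $w_{2}(N')$ as a mod-$2$ Chern class. By Theorem \ref{Koba}(ii) the endomorphism $J$ makes $N'$ into the complex bundle $N$ (isomorphic to $\bigoplus_{r\in\mathcal{A}}N_{r}$ by Corollary \ref{Koba2}), and for the underlying real bundle of a complex bundle the total Stiefel--Whitney class is the mod-$2$ reduction of the total Chern class; in particular $w_{2}(N')$ is the mod-$2$ reduction of $c_{1}(N)=c_{1}(\det N)$. Combining this with the previous step, $M$ is spin if and only if the mod-$2$ reduction of $c_{1}(\det N)$ vanishes, which --- by the characterization of square roots of line bundles recalled just before the statement, namely that a line bundle $L$ has a square root iff $c_{1}(L)\equiv 0 \pmod 2$ --- happens if and only if $\sqrt{\det N}$ exists. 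This is the desired equivalence.

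I do not expect a genuine obstacle: the whole argument is characteristic-class bookkeeping. The one point that needs a little care is the Whitney-sum step, where the cross term $w_{1}(TM)\cup w_{1}(N')$ must be seen to vanish, which is exactly what the orientability assertions of Theorem \ref{Koba}(iii) (together with the orientation of $M$) provide. Note also that the full hypothesis that the $S^{1}$-action lifts to $\bigtriangleup(X)$ is part of the standing set-up and is not consumed beyond the requirement that $X$ be spin; the only input the proof actually uses is $w_{2}(TX)=0$.
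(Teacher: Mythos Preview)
Your proposal is correct and follows essentially the same route as the paper: restrict $TX$ to $M$, use the Whitney sum formula on $TM\oplus N'$ with both summands oriented to get $w_{2}(TM)=w_{2}(N')$, identify $w_{2}(N')$ with the mod-$2$ reduction of $c_{1}(N)=c_{1}(\det N)$, and conclude via the square-root criterion for line bundles. Your write-up is in fact more careful than the paper's own proof, particularly in making explicit why the cross term $w_{1}(TM)\cup w_{1}(N')$ vanishes and in noting that only the spin condition on $X$ (not the full lifting hypothesis) is consumed.
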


\begin{proof}  From the restriction of the spin bundle $\Fr'_{X}|_{M}$ we have that the bundle $TX_{M}=TM+N'$ on $M$ has second Stieffel-Whitney class $w_{2}(TM+N')=0$ on $H^{2}(M,\mathbb{Z}_{2})$. Note that $w_{2}$ of a direct sum is the sum of the $w_{2}$'s from each summand when the summands are oriented. Also, for complex bundles we have that $w_{2}(N)=c_{1}(N)$ (mod $2$) and $c_{1}(N)=c_{1}(\det N)$. 
\end{proof}

Additionally, we have the following proposition.  

\begin{proposition}\label{spin}Given three real vector bundles $V_{1}, V_{2}$ and $V=V_{1}+V_{2}$ over a manifold $M$,a choice of orientation on any two of them uniquely determines an orientation on the third. Similarly, a choice of spin structure on any of them uniquely determines a spin structure on the third.
\end{proposition}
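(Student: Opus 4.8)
The plan is to reduce everything to the corresponding statements about principal bundles and their classifying spaces, using the Whitney-sum structure of the frame bundles. First I would treat the orientation statement, which is the prototype: if $V = V_1 \oplus V_2$ with $\operatorname{rk} V_i = n_i$ and $n = n_1 + n_2$, then the oriented frame bundle $\Fr^+ V$ is obtained from $\Fr^+ V_1 \times_M \Fr^+ V_2$ by extension of structure group along the standard block-diagonal inclusion $SO(n_1) \times SO(n_2) \hookrightarrow SO(n)$. Since this inclusion is well-defined (block-diagonal matrices of determinant $1$ have determinant $1$), a reduction of the structure group of $V$ to $SO(n)$ together with one to $SO(n_1)$ (resp. $SO(n_2)$) canonically produces the third; concretely, an orientation of $V$ and of $V_1$ picks out, fiberwise, a complement orientation on $V_2$ by requiring that an oriented frame of $V_1$ followed by an oriented frame of $V_2$ be an oriented frame of $V$, and this condition is consistent because $GL^+(n_1) \times GL^+(n_2)$ is connected, so the induced orientation on $V_2$ does not depend on the chosen frames. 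Uniqueness is immediate from this description.

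Next I would do the spin statement in the same spirit. The key algebraic input is that the block-diagonal inclusion $SO(n_1) \times SO(n_2) \hookrightarrow SO(n)$ lifts to a homomorphism
\[
\Spin(n_1) \times_{\{\pm 1\}} \Spin(n_2) \longrightarrow \Spin(n),
\]
where the amalgamation identifies the two central elements $-1$; this is the standard fact that the Clifford algebra $\mathrm{Cl}(n_1) \hat{\otimes} \mathrm{Cl}(n_2) \cong \mathrm{Cl}(n)$ carries $\Spin(n_1)\times\Spin(n_2)$ into $\Spin(n)$ with the diagonal $\{\pm1\}$ acting trivially. Granting this, a spin structure on $V$ is a principal $\Spin(n)$-bundle $\Fr' V$ double-covering $\Fr^+ V$; given in addition a spin structure $\Fr' V_1$ on $V_1$, one forms $\Fr' V_1 \times_M \Fr' V_2$ for the sought bundle $\Fr' V_2$ and must exhibit an equivalence making the square with $\Fr' V$ commute. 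I would phrase the whole thing cohomologically for cleanliness: orientations correspond to trivializations of $w_1$, spin structures on an oriented bundle form a torsor over $H^1(M;\ZZ_2)$ and exist iff $w_2 = 0$, and the Whitney formula gives $w_1(V) = w_1(V_1) + w_1(V_2)$ and, once all are oriented, $w_2(V) = w_2(V_1) + w_2(V_2)$. Hence $w_2(V_2) = w_2(V) + w_2(V_1)$ as classes, so vanishing of any two forces vanishing of the third, giving existence; and the torsor structure — compatible under the amalgamated Whitney sum of frame bundles — gives the unique choice on the third once the other two are fixed, since the difference classes in $H^1(M;\ZZ_2)$ add under the sum.

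The step I expect to be the genuine content rather than bookkeeping is verifying that the amalgamated product $\Spin(n_1)\times_{\{\pm1\}}\Spin(n_2) \to \Spin(n)$ really covers the block inclusion into $SO(n)$ compatibly, i.e. that forming $\Fr' V_1 \times_M \Fr' V_2$ and pushing forward along this homomorphism yields, after the further reduction, exactly $\Fr' V$ — equivalently, that the three spin structures are linked by a single coherent diagram of double covers over $\Fr^+ V_1 \times_M \Fr^+ V_2 \to \Fr^+ V$. This is where one must be careful that the construction is canonical and not merely an existence statement; once it is set up, uniqueness follows because any two solutions differ by an element of $H^1(M;\ZZ_2)$ that must map to $0$ under the (injective, by the torsor description) comparison with the fixed data on $V$ and $V_1$. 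I would also remark that this proposition is exactly the tool needed for Proposition \ref{pro2.2}: applied to $TX|_M = TM \oplus N'$ with the spin structure on $TX|_M$ restricted from $X$ and the spin structure on $N'$ coming from $\sqrt{\det N}$ via its complex structure, it produces the spin structure on $TM$.
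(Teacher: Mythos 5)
The paper does not prove this proposition but simply refers the reader to Lawson--Michelsohn, \emph{Spin Geometry}, II.1.5; your sketch (Whitney formula for $w_1,w_2$, the block-diagonal inclusion $SO(n_1)\times SO(n_2)\hookrightarrow SO(n)$, and its lift $\Spin(n_1)\times_{\{\pm1\}}\Spin(n_2)\to\Spin(n)$ via the graded tensor decomposition of Clifford algebras, together with the $H^1(M;\ZZ_2)$-torsor structure giving uniqueness) is precisely the standard argument given there. So this is correct and essentially the same route as the paper's cited source.
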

See \cite[II Proposition 1.5]{lawson2016spin} for a proof of this proposition. Let us assume that $\bigtriangleup(V_{1})$ and $\bigtriangleup(V_{2})$ are spin bundles on $M$, then the unique spin structure $\Fr' V$ on $M$ in the previous proposition satisfies that  
\begin{equation}\label{1}
\bigtriangleup(V)=\bigtriangleup(V_{1})\otimes\bigtriangleup(V_{2}), 
\end{equation}
where $\otimes$ is the graded tensor product. Throughtout this work $\otimes$ denotes the $\ZZ_{2}$ graded tensor product. Finally, let $V'$ be a real bundle on $M$ with complex structure $J$, then as before we have that $V'\otimes \CC=V+\bar{V}$. Additionally, if $V'$ defines an spin structure on $M$ one has  
\begin{equation}\label{2}
\bigtriangleup(V')=\sqrt{\det V}\otimes \wedge^{*}\bar{V},
\end{equation}
where $\sqrt{\det V}$ exists and is uniquely defined by the spin structure.

Finally, we have the following proposition.

\begin{proposition}\label{pro2.6} Let $M$ be a submanifold of a Riemannian manifold $X$, we denote by $\nabla^{X}$ the Levi-Civita connection on $X$ and by $\nabla^{M}$ the Levi-Civita connection that comes from the metric induced by $g_{X}$. We define the second fundamental form $II:TM\times TM\rightarrow N'$ of $M$ in $X$ by 
$$II(u,v)=\nabla^{X}_{u}v-\nabla^{M}_{u}v, $$
 for $u,v\in TM$. One has that $II=0$ iff $M$ is totally geodesic.
\end{proposition}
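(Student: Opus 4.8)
The plan is to first record the standard algebraic properties of $II$ and then settle the equivalence by testing against geodesics. Let $\pi^{\top}\colon TX|_{M}\to TM$ and $\pi^{\perp}\colon TX|_{M}\to N'$ denote the fiberwise orthogonal projections determined by $g_{X}$. The first thing I would verify is that $(u,v)\mapsto\pi^{\top}(\nabla^{X}_{u}v)$ is a torsion-free metric connection on $(TM,g_{X}|_{M})$: it is compatible with the induced metric because $\pi^{\top}$ is orthogonal and $\nabla^{X}g_{X}=0$, and it is torsion-free because for $u,v\in\Gamma(TM)$ the difference $\nabla^{X}_{u}v-\nabla^{X}_{v}u=[u,v]$ is itself tangent to $M$. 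By uniqueness of the Levi-Civita connection this forces $\pi^{\top}\circ\nabla^{X}=\nabla^{M}$ on $\Gamma(TM)$, so that $II(u,v)=\nabla^{X}_{u}v-\nabla^{M}_{u}v=\pi^{\perp}(\nabla^{X}_{u}v)$ genuinely takes values in $\Gamma(N')$. One then checks the routine facts that $II$ is $C^{\infty}(M)$-bilinear (the Leibniz terms of $\nabla^{X}$ and $\nabla^{M}$ agree and cancel, and lower-slot linearity is automatic) and symmetric, $II(u,v)=II(v,u)$, again because $[u,v]$ is tangent to $M$. Hence $II$ is a symmetric $(0,2)$-tensor with values in $N'$; in particular it is determined by its diagonal through $2\,II(u,v)=II(u{+}v,u{+}v)-II(u,u)-II(v,v)$, and for any smooth curve $\gamma$ in $M$ it gives $\nabla^{X}_{\dot\gamma}\dot\gamma=\nabla^{M}_{\dot\gamma}\dot\gamma+II(\dot\gamma,\dot\gamma)$ pointwise along $\gamma$.

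Granting this, the implication $II=0\Rightarrow M$ totally geodesic is immediate: if $\gamma$ is a geodesic of $(M,g_{X}|_{M})$ then $\nabla^{M}_{\dot\gamma}\dot\gamma=0$, so the curve formula above gives $\nabla^{X}_{\dot\gamma}\dot\gamma=II(\dot\gamma,\dot\gamma)=0$, and $\gamma$ is a geodesic of $X$.

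For the converse, suppose $M$ is totally geodesic, fix $p\in M$ and $u\in T_{p}M$, and let $\gamma$ be the geodesic of $(M,g_{X}|_{M})$ with $\gamma(0)=p$ and $\dot\gamma(0)=u$. By assumption $\gamma$ is also a geodesic of $X$, so both $\nabla^{X}_{\dot\gamma}\dot\gamma$ and $\nabla^{M}_{\dot\gamma}\dot\gamma$ vanish at $t=0$, whence $II(u,u)=0$. Since $p$ and $u$ were arbitrary and $II$ is symmetric and bilinear, polarization gives $II\equiv 0$.

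The only step that is not purely formal is the first one — identifying the tangential part of $\nabla^{X}$ with $\nabla^{M}$ and checking that $II$ is tensorial and symmetric, so that it is governed by its diagonal values along geodesics. Once that structural input is in place, the two implications are one-line consequences of the defining second-order ODE for geodesics.
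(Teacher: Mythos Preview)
Your argument is correct and is precisely the standard textbook proof: the paper itself does not supply a proof but simply refers the reader to \cite[VI Proposition 2.6]{do1992riemannian}, and what you have written is essentially the argument found there. There is nothing to add.
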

See \cite[VI Proposition 2.6]{do1992riemannian}. Now, we have for an arbitrary $n\in \Gamma(N')$ that
$$0=g_{X}(II(u,v), n)=g_{X}(\nabla^{X}_{u}v, n)=-g_{X}(v, \nabla^{X}_{u}n).$$
Therefore,  $\nabla^{X}$ defines on $M$ a connection on the normal bundle 
\begin{equation}\label{eq2.4}
\nabla^{X}:\Gamma(N')\rightarrow \Gamma(N'\otimes TM),
\end{equation}
we keep the same notation $\nabla^{X}$ by abuse of language.

\subsection{Motivation} In this subsection, we give a short motivation for the constructions in Section \ref{sec3} and \ref{sec4}, we refer to \cite{taubes1989s, witten1985fermion, deligne1999quantum} for details.

\label{sec2.3} The identity \eqref{ASS} asserts that the equivariant index can be computed from geometric data at the fixed point set $M$ of the $S^{1}$-action. This interpretation is formalized considering a modification of the Dirac operator $\slashed{D}_{X}$ as follows
\begin{equation}\label{eq2.16}
Q_{t}:=\slashed{D}_{X}+itK': \Gamma(\bigtriangleup(X))\rightarrow \Gamma(\bigtriangleup(X)\otimes T^{*}X),
\end{equation}
where $K'\in T^{*}M$, the dual of $K\in TM$, acts on $\bigtriangleup(X)$ by Clifford multiplication. Additionally, we consider the Lie derivative of the $S^{1}$ action on the spinors $\bigtriangleup(X)$ which is given by the operator below 
\begin{equation}\label{eq2.17}
 iL_{K}:=D^{X}_{K}-\frac{1}{4}dK': \Gamma(\bigtriangleup(X))\rightarrow \Gamma(\bigtriangleup(X)),
 \end{equation}
where $D^{X}$ is the connection on $\bigtriangleup(X)$ associated to the Levi-Civita connection and $dK'$ is a 2-form which acts on  $\bigtriangleup(X)$ by Clifford multiplication. 

Now, the analytic equivariant index for $Q_{t}$ is defined, as in \eqref{equiin}, as follows
 \begin{equation}\label{qin}
 \Ind (Q_{t},q)=\Tr (q|_{\ker (Q_{t}|_{\bigtriangleup^{+}(X)})})-\Tr (q|_{\ker (Q_{t}|_{\bigtriangleup^{-}(X)})}),
 \end{equation}
and we have the identity
\begin{equation*}
\Ind(Q_{t},q)=\Ind(\slashed{D}_{X},q).
 \end{equation*}
 Hence, the index in \eqref{qin} is independent of $t$. The Weitzenb\"{o}ck formula for $Q_{t}$ is given by 
 $$Q^{2}_{t}=(\slashed{D}_{X})^{2}+t^{2}g_{X}(K,K)-2tL_{K}+it\frac{1}{2}dK'.$$
 Therefore, as $t\rightarrow \infty$ the support of the sections in $\ker (Q_{t}|_{\bigtriangleup^{+}(X)})$ and $\ker (Q_{t}|_{\bigtriangleup^{-}(X)})$ must shrink around the points on $X$ where $g_{X}(K,K)=0$, and by definition of $K$ we have that the set of points where $g_{X}(K,K)=0$ is given by the fixed point set $M$ of the $S^{1}$-action. Then, we can compute the equivariant index working on the space of sections 
 \begin{equation}\label{gamma}\Gamma(\bigtriangleup(X)|_{\U}),\end{equation}
where $\U$ is a tubular neighborhood of $M$ in $X$. 
This approach in particular leads to a proof of the identity \eqref{ASS} and the Atiyah-Hirzebruch theorem, see the references mentioned at the beginning of this subsection.

\section{Localization, finite dimensional case}\label{sec3}

\subsection{The Vector Bundle $\V$}\label{sec3.1}

From now on, we assume that $X$ is an oriented, compact, $2m$-dimensional, spin Riemannian manifold with an $S^{1}$-action which lifts to the spin bundle $\bigtriangleup(X)$, and $M$ is the fixed point set of the $S^{1}$ action on $X$. Additionally, we assume that $M$ is a  $2l$-dimensional, connected, spin manifold. 

Now, we define the vector bundle $\V\rightarrow M$, and we give a motivation to the definition of this bundle. From Theorem \ref{Koba} and Proposition \ref{pro2.2}, we build the following vector bundle
$$\V:=\bigtriangleup (M)\otimes \sqrt{\det N}\otimes \wedge^{*} (\bar{N})\otimes S(N)\otimes S(\bar{N}) \rightarrow M.$$
The definition of the bundle $\V$ is motivated by the space of sections $\Gamma(\bigtriangleup(X)|_{\U})$ defined in \eqref{gamma}, we describe now the relation between these spaces: 

First, we have that $TX_{M}=TM+N'$ on $M$, then we have the following isomorphism of bundles on $M$ 
$$\bigtriangleup(X)=\bigtriangleup(TM+N')=\bigtriangleup(M)\otimes \sqrt{\det N}\otimes \wedge^{*} (\bar{N}). $$
where we use  \eqref{1} and \eqref{2}.

And second, the bundles $S(N)\otimes S(\bar{N})$ are motivated by the coordinates in the normal direction of $M$ in $\U$. For any $p\in M$ there is a local chart $ U$ of $\U$ where $p\in U$ and we have a diffeomorphism
$$(x_{1},\dots, x_{2l}, y_{1}, \bar{y}_{1}, \cdots y_{n}, \bar{y}_{n})\in U'\subset \R^{2m} \rightarrow U\subset \U ,$$
such that 
$$(x_{1},\dots, x_{2l}, 0, \cdots , 0)\in U'\subset \R^{2m} \rightarrow  M \cap U\subset \U. $$
Notice that sections of $\Gamma(\bigtriangleup(X)|_{U})$ does depend of $y_{1}, \bar{y}_{1}, \cdots y_{n}, \bar{y}_{n}$. On the other hand, sections of $\Gamma(\bigtriangleup(X)|_{M\cap U})$ does not depend of $y_{1}, \bar{y}_{1}, \cdots y_{n}, \bar{y}_{n}$, hence we add this to space of sections as follows 
$$\Gamma(\bigtriangleup(X)|_{M\cap U})\otimes \mathbb{C}[ y_{1}, \bar{y}_{1}, \cdots y_{n}, \bar{y}_{n}].$$ 
This local relation motivates us to consider globally the vector bundle $S(N)\otimes S(\bar{N})$. Hence, we are leaded to define $\V$ as the tensor product of the bundles $\bigtriangleup(M)\otimes \sqrt{\det N}\otimes \wedge^{*} (\bar{N})$ and $S(N)\otimes S(\bar{N})$ on $M$.  

\begin{remark} The idea of working with a bundle $\V$ on $M$ instead of $\bigtriangleup(X)|_{\U}$ simplifies the constructions  in the infinite dimensional case, because we work with vector bundles on $M$ without making any consideration on the loop space $\L M$. 
\end{remark}


Now,   $\bigtriangleup (M)=\bigtriangleup^{+} (M)+\bigtriangleup^{-} (M)$ is $\mathbb{Z}_{2}$ graded and $\wedge^{*}\bar{N}=\wedge^{\even}\bar{N}+\wedge^{\odd}\bar{N}$ is $\mathbb{Z}_{2}$ graded. Then, $\V=\V^{+}+\V^{-}$ is a $\mathbb{Z}_{2}$ graded vector bundle considering the graded tensor product   
\begin{align*}
&\V^{+}=\left( \bigtriangleup^{+}(M)\otimes\wedge^{\even}\bar{N}+ \bigtriangleup^{-}(M) \otimes\wedge^{\odd}\bar{N} \right)\otimes \sqrt{\det N}\otimes S(N)\otimes S(\bar{N}),\\
&\V^{-}=\left( \bigtriangleup^{-}(M)\otimes\wedge^{\even}\bar{N}+ \bigtriangleup^{+}(M)\otimes\wedge^{\odd}\bar{N} \right)\otimes \sqrt{\det N}\otimes S(N)\otimes S(\bar{N})\label{v-}.
\end{align*}

Moreover, we have from Proposition \ref{Koba2} that
$$\V=\bigtriangleup (M)\otimes \sqrt{\det N}\otimes_{r\in\mathcal{A}} \wedge^{*}(\bar{N}_{r})\otimes_{r\in \mathcal{A}} S(N_{r})\otimes_{r\in\mathcal{A}} S (\bar{N}_{r}) \rightarrow M.$$

Observe that this decomposition respects the $\mathbb{Z}_{2}$ grading on $\V$.  

\subsubsection{Hermitian product}\label{IP}

The vector bundle $\V$ is defined as a tensor product of bundles, hence we define an Hermitian product on $\V$ considering Hermitian products for each one of the bundles which define $\V$. 

First, $M$ is spin manifold and its spin bundle $\bigtriangleup(M)$ has an Hermitian product, see \cite{lawson2016spin} for details. 

Second,  we use the Hermitian product defined in \eqref{g} on $N$ to induce an Hermitian product on $T(N):=\sum_{l\geq0}N^{\otimes l}$ 
\begin{equation}
(n_{1}\otimes \cdots \otimes n_{k},m_{1}\otimes \cdots\otimes m_{l})_{T(N)}=\delta_{k,l}(n_{1},m_{1})\cdots(n_{k},m_{k}),
\label{Inner}\end{equation}
Where $ n_{1}\otimes \cdots \otimes n_{k}, m_{1}\otimes \cdots \otimes m_{l}\in T(N)$ and $k,l\in\ZZ_{+}$. Therefore, we have Hermitian  products on the subspaces $S(N)$ and $\wedge^{*}N$ given by 
\begin{equation}(n_{1}\cdots n_{k}, m_{1}\cdots m_{l})_{S(N)}=\delta_{k,l}\sum_{\sigma\in \mathcal{S}_{k}} (n_{\sigma(1)},m_{1})\cdots(n_{\sigma(k)},m_{k}),\label{Innersym}\end{equation}
\begin{equation}(n_{1}\wedge\cdots\wedge n_{k}, m_{1}\wedge\cdots\wedge m_{l})_{\wedge N}=\delta_{k,l}\sum_{\sigma\in \mathcal{S}_{k}}\epsilon(\sigma) (n_{\sigma(1)},m_{1})\cdots(n_{\sigma(k)},m_{k}),\label{Innerwedge}\end{equation}
where $ n_{1} \cdots n_{k}, m_{1} \cdots m_{l}\in S(N)$, $ n_{1}\wedge \cdots\wedge n_{k}, m_{1}\wedge \cdots\wedge m_{l}\in \wedge^{*}N$, $\mathcal{S}_{n}$ is the group of permutations and $\epsilon:\mathcal{S}_{n}\rightarrow \{1,-1\}$ the sign homomorphism. 

Third, $\det (N)$ is a subspace of $\wedge^{*}N$ and its Hermitian  product is defined by \eqref{Innerwedge}. The definitions of Hermitian products of $S(\bar{N})$ and $\wedge^{*}\bar{N}$ follow similarly.

Four, for $S(N)\otimes S(\bar{N})$ we do not use the Hermitian structure obtained from the tensor product of the Hermitian bundles $S(N)$ and $S(\bar{N})$. We consider a different Hermitian product motivated by the Harmonic oscillator. In the next subsection, we are going to define bundles maps in $\End (\V)$ (Definition \ref{oscilator}) which satisfy the algebra of the Harmonic oscillators (Proposition \ref{ad1}),  the inner product defined below makes these operators have the usual conjugate operators  (Proposition \ref{ad}). 

The Hermitian product of $S(N_{r})\otimes S(\bar{N}_{r})$ is defined as follows. We restrict the Hermitian product in \eqref{g} on the subbundles $N_{r}$. Let $a_{1}\cdots . a_{k}\bar{b}_{1}\cdots . \bar{b}_{s}, c_{1}\cdots . c_{l}\bar{d}_{1}\cdots . .\bar{d}_{t}\in \Gamma(S(N_{r})\otimes S(\bar{N}_{r}))$, we relabel these sections as $n_{1}\cdots . n_{k}\bar{m}_{l+1}\cdots . \bar{m}_{l+s}$ and $m_{1}\cdots . m_{l}\bar{n}_{k+1}\cdots . \bar{n}_{k+t}$ respectively; the Hermitian product is given by
\begin{equation}
\begin{split}(n_{1}\cdots n_{k}\bar{m}_{l+1}&\cdots  \bar{m}_{l+s}, m_{1}\cdots m_{l}\bar{n}_{k+1}\cdots  \bar{n}_{k+t}):=\frac{1}{r^{k+t}}(n_{1}\cdots  n_{k+t},m_{1}\cdots m_{l+s})\\
&=\frac{1}{r^{k+t}}\delta_{k+t,l+s}\sum_{\sigma\in \mathcal{S}_{k+r}} (n_{\sigma(1)},m_{1})\cdots (n_{\sigma(k)},m_{l})(n_{\sigma(k+1)},l_{1})\cdots (n_{\sigma(k+r)},l_{s}).\label{Innerss}
\end{split}\end{equation}
 The restriction of this Hermitian product on the subspaces $S(N_{r})$ and $S(\bar{N}_{r})$ is different to \eqref{Innersym}, the difference comes from the factors $\frac{1}{r^{k}}$ where $k\in \mathbb{Z}_{+}$. We build the Hermitian product on $S(N)\otimes S(\bar{N})$ from the tensor product of the Hermitian bundles $S(N)\otimes S(\bar{N})=\otimes_{r\in\mathcal{A}} S(N_{r})\otimes S(\bar{N}_{r})$.

Finally, the Hermitian product on $\V$ is given by the tensor product of the inner products on $\bigtriangleup(M)$, $\det N$, $\wedge^{*} \bar{N}$ and $S(N)\otimes S(\bar{N})$. Hence $\V$ is an Hermitian bundle, we denote this Hermitian product by $(\cdot, \cdot)$.

\subsection{Operators }\label{sec3.2}

In this section, we define the operators acting on the bundle $\V$. In this work, we call operators indistinctly either connections or bundles morphisms


\subsubsection{Connections}\label{sec3.2.1}

Now, we build an Hermitian connection on $\V$, once again we use that $\V$ is defined as a tensor product in terms of other bundles, hence we obtain an Hermitian connection on $\V$ considering Hermitian connections for each one of the bundles which define $\V$.

First, the spin manifold $M$ has a metric induced from the Riemannian metric on $X$, hence, we have a Levi-Civita connection $\nabla^{M}$ on $M$.  From the Levi Civita connection we obtain a connection on the space of spinors $D^{\bigtriangleup}:\Gamma(\bigtriangleup (M))\rightarrow \Gamma(\bigtriangleup (M)\otimes   T^{*}M)$, this connection is compatible with the Hermitian metric on $\bigtriangleup (M)$, see \cite{lawson2016spin}.

Second, from \eqref{eq2.4}, we have that $\nabla^{X}$ defines on $M$ a connection on the normal bundle $N'$. Additionally, from Corollary \ref{proJ} we have that $\nabla^{X}$ defines an Hermitian complex connection on $(N'J)$ (see \eqref{ult}) or $N$.  And, from Theorem \ref{Koba} (i) and Corollary \ref{Koba2} we obtain the following Hermitian complex connections 
\begin{equation}\label{con}
\begin{split}
&\nabla^{X}:\Gamma(N_{r})\rightarrow \Gamma(N_{r}\otimes   T^{*}M),\qquad \nabla^{X}:\Gamma(\bar{N}_{r})\rightarrow \Gamma(\bar{N}_{r}\otimes   T^{*}M),
\end{split}
\end{equation}
where $r\in \A$.


Third, the connections in \eqref{con} induce connections on the tensor products $T(N_{r})$ and $T(\bar{N}_{r})$ , these connections are compatible with the Hermitian product $\eqref{Inner}$. Hence, from \eqref{con} we define the following natural  connections on  $S( N_{r})$, $S( \bar{N}_{r})$, $\wedge^{*}(\bar{N}_{r})$ for $r\in \A$ and $\det N$
\begin{equation}  \label{con2}
\begin{split}
&D^{S}:  \Gamma(S(N_{r}))\rightarrow \Gamma(S( N_{r})\otimes   T^{*}M ),\\
&D^{\bar{S}}: \Gamma(S( \bar{N}_{r}))\rightarrow \Gamma(S(\bar{N}_{r})\otimes   T^{*}M),\\
&D^{\wedge^{*}}:\Gamma(\wedge^{*}(\bar{N}_{r}))\rightarrow \Gamma(\wedge^{*}(\bar{N}_{r})\otimes   T^{*}M),\\
& D^{\det}:\Gamma( \det N)\rightarrow \Gamma(\det N\otimes   T^{*}M).
\end{split}
\end{equation}
The compatibility of \eqref{con} implies the compatibility of the connections above with respect to their Hermitian products, we prove the only nontrivial compatibility in the following proposition  

\begin{proposition} \label{sscone}The connection $ D^{S}\otimes  D^{\bar{S}}: \Gamma(S(N)\otimes S(\bar{N}))\rightarrow \Gamma(S(N)\otimes S(\bar{N})\otimes T^{*}M)$ is compatible with the Hermitian  metric \eqref{Innerss}.
\end{proposition}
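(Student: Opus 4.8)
The plan is to recognise the non-standard metric \eqref{Innerss} as nothing but the standard symmetric metric \eqref{Innersym} on $S(N_r)$, transported by fibrewise complex conjugation and rescaled, and to exploit that conjugation is a parallel operation for $\nabla^{X}$. First I would reduce to a single $r\in\A$: since $S(N)\otimes S(\bar N)=\bigotimes_{r\in\A}\bigl(S(N_r)\otimes S(\bar N_r)\bigr)$, the metric \eqref{Innerss} is by construction the tensor product of the per-$r$ metrics and $D^{S}\otimes D^{\bar S}$ the tensor product of the per-$r$ connections; compatibility of a tensor product connection with a tensor product metric is immediate from the Leibniz rule and compatibility of the factors, so it is enough to treat $S(N_r)\otimes S(\bar N_r)$ for one fixed $r$.

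Next, $D^{S}$ preserves the symmetric degree in $N_r$ and $D^{\bar S}$ the symmetric degree in $\bar N_r$, so $D:=D^{S}\otimes D^{\bar S}$ maps $\Gamma(\Sym^{k}N_r\otimes\Sym^{s}\bar N_r)$ into $\Gamma(\Sym^{k}N_r\otimes\Sym^{s}\bar N_r\otimes T^{*}M)$. Because of the $\delta_{k+t,l+s}$ in \eqref{Innerss}, the pairing of $\Sym^{k}N_r\otimes\Sym^{s}\bar N_r$ with $\Sym^{l}N_r\otimes\Sym^{t}\bar N_r$ vanishes unless $k+t=l+s$, and likewise after applying $D$ to either argument; hence when $k+t\ne l+s$ all three terms of $d(u,v)=(Du,v)+(u,Dv)$ vanish identically, and it suffices to verify the identity for decomposable $u=u_N\otimes u_{\bar N}\in\Gamma(\Sym^{k}N_r\otimes\Sym^{s}\bar N_r)$ and $v=v_N\otimes v_{\bar N}\in\Gamma(\Sym^{l}N_r\otimes\Sym^{t}\bar N_r)$ with $k+t=l+s$. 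Let $C$ be fibrewise complex conjugation on $N'\otimes\CC$: it is conjugate-linear, interchanges $N_r$ and $\bar N_r$, extends to a conjugate-linear algebra map $\Sym^{\bullet}\bar N_r\to\Sym^{\bullet}N_r$, and since $\nabla^{X}$ on $N'\otimes\CC$ is the complexification of a real connection and the eigenbundle splitting of Corollary \ref{Koba2} is parallel, $C$ intertwines the induced connections, $D^{S}\circ C=C\circ D^{\bar S}$ (the real $T^{*}M$-factor being unaffected). Unwinding the relabelling in \eqref{Innerss}, for such $u,v$ one finds
\[(u,v)=r^{-(k+t)}\bigl(u_N\cdot C(v_{\bar N}),\,v_N\cdot C(u_{\bar N})\bigr)_{S(N_r)}=:r^{-(k+t)}\,(u',v')_{S(N_r)},\]
with $u':=u_N\cdot C(v_{\bar N})$ and $v':=v_N\cdot C(u_{\bar N})$ both in $\Sym^{k+t}N_r$, and $(\cdot,\cdot)_{S(N_r)}$ the metric \eqref{Innersym}.

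Then I would just compute. Using the (routine) compatibility of $D^{S}$ with \eqref{Innersym}, the Leibniz rule, and $D^{S}\circ C=C\circ D^{\bar S}$,
\begin{align*}
d(u,v)&=r^{-(k+t)}\bigl[(D^{S}u',v')_{S(N_r)}+(u',D^{S}v')_{S(N_r)}\bigr]\\
&=r^{-(k+t)}\bigl[\bigl((D^{S}u_N)\,C(v_{\bar N}),v'\bigr)+\bigl(u_N\,C(D^{\bar S}v_{\bar N}),v'\bigr)\\
&\qquad\qquad+\bigl(u',(D^{S}v_N)\,C(u_{\bar N})\bigr)+\bigl(u',v_N\,C(D^{\bar S}u_{\bar N})\bigr)\bigr].
\end{align*}
On the other hand $Du=(D^{S}u_N)\otimes u_{\bar N}+u_N\otimes(D^{\bar S}u_{\bar N})$ and similarly for $Dv$; expanding $(Du,v)+(u,Dv)$ through the definition \eqref{Innerss} and again $D^{S}\circ C=C\circ D^{\bar S}$ reproduces precisely these four summands (the $N_r$-derivative of a factor stays on its own slot while its $\bar N_r$-derivative is conjugated onto the opposite slot, and every prefactor equals $r^{-(k+t)}$ since $k+t=l+s$). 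Hence $d(u,v)=(Du,v)+(u,Dv)$, which is the asserted compatibility.

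I expect the only genuine friction to be the bookkeeping in the last step: tracking which tensor slot each conjugated factor lands in and confirming that every prefactor is $r^{-(k+t)}$. Conceptually nothing is at stake once \eqref{Innerss} is seen as a conjugated, rescaled copy of \eqref{Innersym} and complex conjugation is noted to be parallel for $\nabla^{X}$.
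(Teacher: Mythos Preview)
Your proof is correct and follows essentially the same route as the paper: both recognise that the metric \eqref{Innerss} is, after the relabelling, the standard symmetric metric \eqref{Innersym} on $S(N_r)$ (up to the $r^{-(k+t)}$ scalar), and then invoke compatibility of $D^{S}$ with \eqref{Innersym}. The paper compresses this into a one-line ``direct calculation'' using the relabelling built into the definition of \eqref{Innerss}, whereas you make the mechanism explicit via the parallel conjugation map $C$ and the identity $D^{S}\circ C=C\circ D^{\bar S}$; this is the same argument, only more transparently written.
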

\begin{proof}Let $k,t,l,s$ positive integers.  If $k+t=l+s$ we have from a direct calculation that 
 \begin{align*}&( D^{S}\otimes  D^{\bar{S}}(n_{1}\cdots n_{k}\bar{m}_{l+1}\cdots \bar{m}_{l+s}), m_{1}\cdots m_{l}\bar{n}_{k+1}\cdots \bar{n}_{k+t})\\
 &\qquad+(n_{1}\cdots n_{k}\bar{m}_{l+1}\cdots \bar{m}_{l+s},  D^{S}\otimes  D^{\bar{S}}(m_{1}\cdots m_{l}\bar{n}_{k+1}\cdots\bar{n}_{k+t}))\\
 &=( D^{S}(n_{1}\cdots n_{k+t}),m_{1}\cdots m_{l+s})+(n_{1}\cdots n_{k+t}, D^{S}(m_{1}\cdots m_{l+s}))\\
 &=d(n_{1}\cdots n_{k+t},m_{1}\cdots m_{l+s})=d(n_{1}\cdots n_{k}\bar{m}_{l+1}\cdots \bar{m}_{l+s}, m_{1}\cdots m_{l}\bar{n}_{k+1}\cdots \bar{n}_{k+t})
 \end{align*}
 where $n_{1}\cdots n_{k}\bar{m}_{l+1}\cdots \bar{m}_{l+s}, m_{1}\cdots m_{l}\bar{n}_{k+1}\cdots \bar{n}_{k+t}\in S(N)\otimes S(\bar{N})$. And, we have that the case $k+t\neq l+s$ is trivial.
\end{proof}

Finally, we obtain the following connection on $\V$
\begin{equation}D^{M}:=D^{\bigtriangleup}\otimes D^{det}\otimes D^{\wedge^{*}}\otimes D^{S}\otimes  D^{\bar{S}}:\Gamma(\V)\rightarrow \Gamma(\V\otimes T^{*}M).\label{connection}\end{equation}
This connection is compatible with the Hermitian metric $(\cdot, \cdot)$ on $\V$.

\subsubsection{ Sheaves of operators I}\label{sheope}
In this subsection, we define the operators on the fibers of the bundle $\V\rightarrow M$. Let $E_1$ and $E_2$ complex vector bundles on $M$, we denote by $\Hom (E_1, E_2)$ the complex vector bundle of $\CC$-linear maps between $E_1$ and $E_2$. Additionally, if $E$ is another complex vector bundle we denote a $\CC$-linear map between $E$ and $\Hom (E_1, E_2)$ as follows
 $$E\rightarrow \Hom (E_1, E_2)$$
  If $E_1=E_2=E$ then we use the notation $\End(E)=\Hom(E,E)$ and a $\CC$-linear map is denoted by
 $$E\rightarrow\End(E).$$ 

\begin{definition} \label{def1}
Let $x\in M$ an arbitrary point on $M$ and let $n\in (N_{r})_{x}$, $\bar{m}\in (\bar{N}_{r})_{x}$,  $n_{1}\cdots n_{i}\cdots n_{k}\in S(N_{r})_{x}$ and $n_{1}\wedge\cdots n_{i} \cdots \wedge n_{k}\in (\wedge^{*} N_{r})_{x}$ for $k\in \ZZ_{+}$ complex vectors in the fibers of the complex bundles $N_{r}, \bar{N}_{r}, S(N_{r}), \wedge^{*} N_{r}$ respectively. We define the following natural set of $\mathbb{C}$-linear maps
\begin{align*} 
&\cdot_{r}: =N_{r}\rightarrow \End (S( N_{r})),\quad\quad n\cdot (n_{1}\cdots n_{i}\cdots n_{k}):=n n_{1}\cdots n_{i}\cdots n_{k}, \\[5pt]
&\wedge_{r} :=\bar{N}_{r}\rightarrow \End (\wedge^{*} \bar{N}_{r}),\quad\quad \bar{n} \wedge (\bar{n}_{1}\wedge\cdots  \bar{n}_{i}\cdots \wedge \bar{n}_{k}):=\bar{n}\wedge \bar{n}_{1}\wedge\cdots  \bar{n}_{i}\cdots  \wedge \bar{n}_{k} ,\\[-4pt]
&\partial'_{r}: =\bar{N_{r}}\rightarrow \End (S( N_{r})),\quad\quad  \bar{m}\mapsto\partial'_{{m}}(n_{1}\cdots n_{i}\cdots  n_{k}):=\sum^{k}_{i=1} (n_{i},m)n_{1}\cdots \hat{n}_{i}\cdots  n_{k},\\[-10pt]
&i_{r}:={N}_{r}\rightarrow \End (\wedge^{*} \bar{N}_{r}),\quad\quad  {m}\mapsto i_{{\bar{m}}} (\bar{n}_{1}\wedge\cdots \bar{n}_{i}\cdots\wedge \bar{n}_{k}):=\sum^{k}_{i=1}(-1)^{i+1}(\bar{n}_{i},\bar{m})\bar{n}_{1}\wedge\cdots \hat{\bar{n}}_{i}\cdots \wedge \bar{n}_{k},
\end{align*}

\vspace{-0.2cm}

in lines two, three and four we consider the complex conjugate map ${N}_{r}\rightarrow \bar{N}_{r}$ and the Hermitian metric \eqref{g} on $N_{r}$ and $\bar{N}_{r}$.
\end{definition}
 Using the transition functions for the Hermitian bundles above it is easy to see that these maps are well defined, below we show this for the map $\partial'_{r}$. 
Let $U$ and $U'$ open subsets of $M$ such that  $g:U\cap U'\rightarrow  U(d_{r})$ is a transition function of the bundle $N_{r}$ then we have
{\small$$\xymatrix{ (\bar{n},n_{1}\cdots n_{i}\cdots n_{k})\in  \Gamma(\bar{N}\times S(N_{r}))|_{U\cap U'} \ar[d]^g \ar[r]^{\partial'} &\sum_{i=1}^{k}(n_{i},n)n_{{1}}\cdots \hat{n}_{i}\cdots n_{{k}}\in   \Gamma (S(N_{r}))|_{U\cap U'}\ar[d]^g \\
(\bar{g}\bar{ n}, (gn_{{1}})\cdots (gn_{{k}}))\in  \Gamma(\bar{N}\times S(N_{r}))|_{U\cap U'} \ar[r]^{}    & \sum_{i=1}^{k}(gn_{i},g n)(gn_{{1}})\cdots \hat{g}\hat{n}_{i} \cdots (gn_{{k}})\in  \Gamma( S(N_{r}))|_{U\cap U'}}$$}
where $n\in \Gamma(N_{r})|_{U\cap U'}$, $\bar{n}\in \Gamma(\bar{N}_{r})|_{U\cap U'}$ and $n_{1}\cdots n_{i}\cdots n_{k}\in \Gamma(S(N_{r}))|_{U\cap U'}$, by abuse of language we use the same notation in Definition \ref{def1} for vectors on the fibers. 

Let $E$ and $E'$ be complex vector bundles on $M$, and $a$ a $\CC$-linear map $E\rightarrow \End(E)$. We associate a map on  $E\rightarrow \End(E\otimes E')$ to $a$ as follows: Let $x\in M$ an arbitrary point  on $M$ and $a_{x}$ and $e$ vectors in $E_{x}\rightarrow \End(E)_{x}$ and $E_{x}$ respectively then
\[a_{x} :e\mapsto a(e)\otimes  Id_{E'}\in \End(E\otimes E').\]
We call this map an extension of the image of $a$, by abuse of language we denote this map by $a$. In particular, we consider the extensions of the bundles maps in Definition \ref{def1} on $\V$, hence we have $\CC$-linear maps $\cdot_{r}:N_{r}\rightarrow \End (\V)$, $\wedge_{r} :\bar{N}_{r}\rightarrow \End (\V)$, $\partial'_{r}: \bar{N_{r}}\rightarrow \End (\V)$ and $i_{r}:{N}_{r}\rightarrow \End (\V)$.
From now on, we use the following notation
\begin{align*}
&\cdot_{r}(n)={n}\cdot \in \End(\V)\,, \qquad n\in \Gamma(N_{r}),\\
&\wedge_{r}(\bar{n})=\bar{n}\wedge\in\End(\V)\,, \quad\,  \bar{n}\in \Gamma(\bar{N}_{r}),\\
&\partial'_{r}(\bar{n})=\partial'_{n}\in \End(\V)\,,\qquad \bar{n}\in \Gamma(\bar{N}_{r}),\\
&i_{r}({n})=i_{\bar{n}}\in \End(V)\, ,\qquad {n}\in \Gamma({N}_{r}).
\end{align*}

Now, the $\mathbb{Z}_{2}$ grading of the vector bundle $\V=\V^{+}+\V^{-}$ induces a $\mathbb{Z}_{2}$ grading on $\End(\V)=\End(\V)^{+}+\End(\V)^{-}$. Moreover $\End(\V)$ forms an associative algebra with product given by the composition, and this product respects the $\ZZ_{2}$ grading. In particular for arbitrary $n\in \Gamma(N_{r})$ and $\bar{n}\in \Gamma(\bar{N}_{r})$ we have that $n\cdot, \partial'_{n}\in \End(\V)^{+}$ and $\bar{n} \wedge , i_{\bar{n}}\in  \End(\V)^{-}$.

Additionally, we define a graded commutator on $\End(\V)$ as follows 
$$[a,b]=ab-(-1)^{p(a)p(b)}ba, \qquad a,b\in \End(\V),$$
where $p$ denotes the parity;  $p(a)=0$ if $a\in \End(\V)^{+}$ and $p(a)=1$ if $a\in \End(\V)^{-}$. 

\begin{lemma}\label{lemmun} The operators $\cdot_{r}, \wedge_{r}:N_{r}\rightarrow \End (\V)$, $\partial'_{r}, i_{r}:\bar{N}_{r}\rightarrow \End (\V)$ satisfy the algebra below with respect the the graded commutator
$$[\partial'_{m},n]=(n,m), \quad  [i_{\bar{m}},\bar{n}\wedge]=(\bar{n},\bar{m}), \qquad  n, m  \in \Gamma(N_{r}).$$
\end{lemma}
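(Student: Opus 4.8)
The plan is to verify the stated (anti)commutation relations fiberwise at an arbitrary point $x \in M$, since all the operators in question are bundle endomorphisms defined pointwise via Definition \ref{def1}. Fix $x \in M$ and work in the fibers $(N_r)_x$, $(\bar N_r)_x$, $S((N_r)_x)$, $\wedge^*((\bar N_r)_x)$; once the identities are checked on these fibers, they hold on the extensions to $\V$ because the extension tensors with the identity on all remaining factors and the graded commutator of an extended even (resp. odd) operator with another extended operator only sees the distinguished factor. Note that $n\cdot$ and $\partial'_m$ act only on the $S(N_r)$-factor and are both even, so their graded commutator is the ordinary commutator; likewise $\bar n\wedge$ and $i_{\bar m}$ act only on the $\wedge^*\bar N_r$-factor and are both odd, so their graded commutator is the anticommutator.

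For the first identity, I would take a monomial $n_1\cdots n_k \in S((N_r)_x)$ and compute directly:
\begin{align*}
\partial'_m(n\cdot n_1\cdots n_k) &= \partial'_m(n\, n_1\cdots n_k) = (n,m)\, n_1\cdots n_k + \sum_{i=1}^k (n_i,m)\, n\, n_1\cdots \hat n_i\cdots n_k,\\
n\cdot(\partial'_m(n_1\cdots n_k)) &= \sum_{i=1}^k (n_i,m)\, n\, n_1\cdots \hat n_i\cdots n_k,
\end{align*}
so that $[\partial'_m, n\cdot](n_1\cdots n_k) = (n,m)\, n_1\cdots n_k$, i.e. $[\partial'_m, n\cdot] = (n,m)\,\id$ as claimed. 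The only subtlety is bookkeeping: in the first line, deleting the inserted $n$ from the product $n\,n_1\cdots n_k$ recovers exactly the term $\partial'_m$ would have produced from $n_1\cdots n_k$ alone, and the "extra" term is precisely $(n,m)$ times the original monomial. For the second identity, take $\bar n_1\wedge\cdots\wedge\bar n_k \in \wedge^*((\bar N_r)_x)$ and compute the anticommutator of exterior multiplication by $\bar n$ with contraction $i_{\bar m}$; this is the classical Cartan-type relation $\{i_{\bar m}, \bar n\wedge\} = (\bar n,\bar m)\,\id$ on an exterior algebra, and the sign factors $(-1)^{i+1}$ in the definition of $i_r$ are exactly what makes the two sums cancel except for the scalar term. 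Concretely, $i_{\bar m}(\bar n\wedge \omega) + \bar n\wedge i_{\bar m}(\omega) = (\bar n,\bar m)\,\omega$ for any $\omega \in \wedge^*((\bar N_r)_x)$, checked on monomials by induction on degree.

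The main (modest) obstacle is purely organizational rather than mathematical: one must be careful that the Hermitian pairing used is the one from \eqref{g} restricted to $N_r$ (and its conjugate on $\bar N_r$), that the complex-conjugation isomorphism $N_r \to \bar N_r$ is applied consistently so that $\partial'_r$ and $i_r$ indeed take arguments where Definition \ref{def1} places them, and that the extension-to-$\V$ step genuinely preserves the bracket — the latter because if $a,b$ act on disjoint tensor factors their graded commutator vanishes, and if they act on the same factor the bracket is computed there and then tensored with identities. I expect no real difficulty beyond signs and indices, and the whole proof is a short fiberwise computation.
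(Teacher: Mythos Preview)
Your proposal is correct and is essentially the same approach as the paper's: the paper simply states that the lemma ``follows directly from Definition~\ref{def1}'', and what you have written is precisely the explicit fiberwise computation that unpacks that sentence.
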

The proof of this lemma follows directly from Definition \ref{def1}.  In the following lemma, we consider the extension of the operators $\cdot_{r}$ and $\partial'_{r}$ in Definition  \ref{def1} to the bundle $S(N)$.

\begin{lemma}\label{lemadj}  The operators $\cdot_{r}:N_{r}\rightarrow \End (S(N))$, $\partial'_{r}: \bar{N_{r}}\rightarrow \End (S(N))$ satisfy the relation below with respect the  Hermitian metric defined in \eqref{Innersym}  
$$(\partial'_{m}v,w)_{S(N)}=( v,mw)_{S(N)},\qquad  v,w\in \Gamma(S(N)), \bar{m}\in \Gamma(\bar{N}).$$
\end{lemma}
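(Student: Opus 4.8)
The plan is to reduce the asserted operator identity to a finite combinatorial computation inside the symmetric algebra of a Hermitian vector space, carried out fibrewise. First I would record the relevant (anti)linearity: by \eqref{g} and \eqref{Innersym}, the form $(\cdot,\cdot)_{S(N)}$ is complex linear in its first argument and conjugate linear in its second, and $\partial'_m$ is conjugate linear in $m$ (again by \eqref{g}); hence both $(\partial'_m v,w)_{S(N)}$ and $(v,mw)_{S(N)}$ are linear in $v$, conjugate linear in $w$, and conjugate linear in $m$. It therefore suffices to verify the identity at a fixed point $x\in M$ for decomposable elements $v=n_1\cdots n_k$, $w=m_1\cdots m_l\in S(N)_x$ and a decomposable $m\in N_x$; since the summands $N_r$ are mutually orthogonal in \eqref{g}, one may in fact take $m$ and all the $n_i,m_j$ to lie in a single $N_r$, which matches the indexing of the operators.

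Now a degree count. The operator $\partial'_m$ lowers symmetric degree by one and multiplication by $m$ raises it by one, while the $\delta_{k,l}$ in \eqref{Innersym} kills pairings of monomials of different degree; so both sides vanish unless $l=k-1$, and we may assume $l=k-1$. Expanding the right-hand side via \eqref{Innersym} applied to the two degree-$k$ monomials $n_1\cdots n_k$ and $m\,m_1\cdots m_{k-1}$ gives $\sum_{\sigma\in\mathcal{S}_k}(n_{\sigma(1)},m)\prod_{j=2}^{k}(n_{\sigma(j)},m_{j-1})$. I would then partition $\mathcal{S}_k$ according to the value $a:=\sigma(1)\in\{1,\dots,k\}$: for fixed $a$ the restriction $\sigma|_{\{2,\dots,k\}}$ runs over all bijections $\{2,\dots,k\}\to\{1,\dots,k\}\setminus\{a\}$, so the inner sum is precisely $(n_1\cdots\hat n_a\cdots n_k,\;m_1\cdots m_{k-1})_{S(N)}$, again by \eqref{Innersym} read on the shortened index sets. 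Hence the right-hand side equals $\sum_{a=1}^{k}(n_a,m)\,(n_1\cdots\hat n_a\cdots n_k,\;m_1\cdots m_{k-1})_{S(N)}$, and pulling this finite sum back inside the first slot of $(\cdot,\cdot)_{S(N)}$ by linearity identifies it with $(\partial'_m(n_1\cdots n_k),\;m_1\cdots m_{k-1})_{S(N)}$, which is the left-hand side. This proves the identity on decomposables, hence in general.

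I expect no genuine obstacle here; the whole statement is the fibrewise fact that, on the bosonic Fock space $S(N_x)$, creation by $m$ and contraction $\partial'_m$ are mutually adjoint for the inner product \eqref{Innersym}. The only points that require care are (i) keeping the conjugation conventions consistent so that $v$, $w$ and $m$ enter both sides with the same (anti)linearity — which is why the metric \eqref{g} being conjugate linear in its second argument is invoked — and (ii) the identification, after fixing $\sigma(1)=a$, of the residual permutation sum with the inner product of the shortened monomials, which is nothing but \eqref{Innersym} applied on the index set $\{1,\dots,k\}\setminus\{a\}$. An essentially identical argument (replacing $\partial'_m$ by $i_{\bar m}$, symmetric products by exterior products, and inserting the signs $\varepsilon(\sigma)$ of \eqref{Innerwedge}) gives the analogous adjointness for $\wedge^*\bar N$, and one could alternatively prove the present lemma by induction on $k$ using that $\partial'_m$ is a derivation, $\partial'_m(n\cdot v)=(n,m)\,v+n\cdot\partial'_m v$, but the direct computation is shorter.
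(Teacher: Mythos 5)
Your argument is correct and matches the paper's proof in substance: both reduce the adjointness to the single combinatorial identity $(n_1\cdots n_{k+1}, m\,m_1\cdots m_k)_{S(N)}=\sum_{i}(n_i,m)\,(n_1\cdots\hat n_i\cdots n_{k+1}, m_1\cdots m_k)_{S(N)}$, which you derive by partitioning $\mathcal{S}_{k+1}$ on the value of $\sigma(1)$ and the paper simply states. Your extra remarks (linearity reduction, degree count, the induction alternative, and the exterior-algebra analogue) are helpful scaffolding but the core computation is the same.
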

\begin{proof} The lemma follows from the identity \vspace{-0,3cm}
$$(n_{1}\cdots  n_{k+1},mm_{1}\cdots  m_{k})_{S(N)}=\sum_{i=1}^{k+1}(n_{i},m)(n_{1}\cdots \hat{n}_{i}\cdots  n_{k+1},m_{1}\cdots \cdots  m_{k})_{S(N)}.$$\vspace{-0,1cm}
\end{proof}

In the following lemma, we consider the extension of the operators $\wedge_{r}$ and $i_{r}$ in Definition \ref{def1} to the bundle $\wedge \bar{N}$.

\begin{lemma}\label{lemadjk}  The operators $\wedge_{r}:\bar{N}_{r}\rightarrow \End (\wedge \bar{N})$, $i_{r}: {N_{r}}\rightarrow \End (\wedge \bar{N})$ satisfy the following relations with respect  the  Hermitian metric defined in \eqref{Innerwedge}
$$(i_{\bar{m}}v,w)_{\wedge \bar{N}}=( v,\wedge_{\bar{m}}w)_{\wedge \bar{N}},\qquad v,w\in \Gamma(\wedge \bar{N}), m\in \Gamma(N_{r}).$$
\end{lemma}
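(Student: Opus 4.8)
The plan is to follow the pattern of the proof of Lemma \ref{lemadj}: reduce the asserted adjointness to a single algebraic identity on decomposable elements and then verify that identity combinatorially. Both sides of $(i_{\bar m}v,w)_{\wedge\bar N}=(v,\wedge_{\bar m}w)_{\wedge\bar N}$ are additive in $v$ and in $w$ and transform in the same way under rescaling of $v$ or of $w$ (by the definitions \eqref{g}, \eqref{Innerwedge} and Definition \ref{def1}), so it suffices to take $v=\bar n_1\wedge\cdots\wedge\bar n_p$ and $w=\bar w_1\wedge\cdots\wedge\bar w_q$ with $\bar m\in\Gamma(\bar N_r)$ fixed. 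Since $i_{\bar m}$ lowers the exterior degree by one while $\wedge_{\bar m}$ raises it by one, both sides vanish unless $p=q+1$; writing $p=k+1$, $q=k$, the only content is the identity
$$\Big(\bar n_1\wedge\cdots\wedge\bar n_{k+1},\ \bar m\wedge\bar w_1\wedge\cdots\wedge\bar w_k\Big)_{\wedge\bar N}=\sum_{i=1}^{k+1}(-1)^{i+1}(\bar n_i,\bar m)\Big(\bar n_1\wedge\cdots\widehat{\bar n_i}\cdots\wedge\bar n_{k+1},\ \bar w_1\wedge\cdots\wedge\bar w_k\Big)_{\wedge\bar N},$$
the exterior-algebra analogue of the displayed identity used for Lemma \ref{lemadj}: the left side of the Lemma is obtained by first applying the definition of $i_{\bar m}$, which produces exactly $\sum_i(-1)^{i+1}(\bar n_i,\bar m)\,\bar n_1\wedge\cdots\widehat{\bar n_i}\cdots\wedge\bar n_{k+1}$, and then pairing with $w$, while the right side is $(v,\bar m\wedge w)_{\wedge\bar N}$ by definition of $\wedge_{\bar m}$.

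To prove the displayed identity I would expand its left side directly from \eqref{Innerwedge},
$$\Big(\bar n_1\wedge\cdots\wedge\bar n_{k+1},\ \bar m\wedge\bar w_1\wedge\cdots\wedge\bar w_k\Big)_{\wedge\bar N}=\sum_{\sigma\in\mathcal{S}_{k+1}}\epsilon(\sigma)(\bar n_{\sigma(1)},\bar m)(\bar n_{\sigma(2)},\bar w_1)\cdots(\bar n_{\sigma(k+1)},\bar w_k),$$
and then partition $\mathcal{S}_{k+1}$ according to the value $i:=\sigma(1)$. For each fixed $i$, deleting the first slot and order-preservingly relabeling the target set $\{1,\dots,k+1\}\setminus\{i\}$ as $\{1,\dots,k\}$ gives a bijection between $\{\sigma:\sigma(1)=i\}$ and $\mathcal{S}_k$, and $\epsilon(\sigma)=(-1)^{i-1}\epsilon(\widetilde\sigma)$ for the induced $\widetilde\sigma\in\mathcal{S}_k$. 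Substituting, the block indexed by $i$ becomes $(-1)^{i-1}(\bar n_i,\bar m)$ times $\sum_{\widetilde\sigma\in\mathcal{S}_k}\epsilon(\widetilde\sigma)(\bar n'_{\widetilde\sigma(1)},\bar w_1)\cdots(\bar n'_{\widetilde\sigma(k)},\bar w_k)$, where $\bar n'_1,\dots,\bar n'_k$ lists $\bar n_1,\dots,\widehat{\bar n_i},\dots,\bar n_{k+1}$ in order; by \eqref{Innerwedge} this last sum equals $(\bar n_1\wedge\cdots\widehat{\bar n_i}\cdots\wedge\bar n_{k+1},\bar w_1\wedge\cdots\wedge\bar w_k)_{\wedge\bar N}$, and since $(-1)^{i-1}=(-1)^{i+1}$, summing over $i$ reproduces the right-hand side.

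The only step that is not pure bookkeeping is the sign count $\epsilon(\sigma)=(-1)^{i-1}\epsilon(\widetilde\sigma)$, which I expect to be the main (and essentially the only) obstacle; it is the cofactor sign from expanding a determinant along its first column. Concretely, among the inversions of $\sigma$ exactly $i-1$ involve the first slot — the pairs $(1,b)$ with $\sigma(b)<i$, of which there are $i-1$ because $\sigma$ restricted to the remaining slots is a bijection onto $\{1,\dots,k+1\}\setminus\{i\}$ — while the remaining inversions correspond bijectively to those of $\widetilde\sigma$ under the order-preserving relabeling, so $\mathrm{inv}(\sigma)=(i-1)+\mathrm{inv}(\widetilde\sigma)$ and the sign formula follows. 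This matches the $(-1)^{i+1}$ appearing in the definition of $i_{\bar m}$ in Definition \ref{def1}, which is why the adjointness comes out with no stray signs; the companion relation (adjointness of $\wedge_{\bar m}$ against a contraction) follows by the same determinant expansion.
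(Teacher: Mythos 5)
Your proof is correct and follows essentially the same route as the paper's one-line proof, which simply asserts the identity $(\bar n_1\wedge\cdots\wedge\bar n_{k+1},\bar m\wedge\bar w_1\wedge\cdots\wedge\bar w_k)_{\wedge\bar N}=\sum_{i=1}^{k+1}(-1)^{i+1}(\bar n_i,\bar m)(\bar n_1\wedge\cdots\hat{\bar n}_i\cdots\wedge\bar n_{k+1},\bar w_1\wedge\cdots\wedge\bar w_k)_{\wedge\bar N}$ without justification; you supply the missing verification, reducing to decomposables, partitioning $\mathcal{S}_{k+1}$ by $\sigma(1)=i$, and getting the sign count $\epsilon(\sigma)=(-1)^{i-1}\epsilon(\widetilde\sigma)$ right. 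Note in passing that the paper's display carries a stray hat over $\bar m_i$ on the right-hand side, evidently a typo (removing a factor there would leave mismatched degrees), and your unhatted $\bar w_1\wedge\cdots\wedge\bar w_k$ is the correct form.
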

\begin{proof} The lemma follows from the identity \vspace{-0,2cm}
\begin{align*}
&( \bar{n}_{1}\cdots \wedge \bar{n}_{k+1},\bar{m}\wedge \bar{m}_{1}\wedge\cdots\wedge \bar{m}_{k})_{\wedge \bar{N}}\\
&=\sum_{i=1}^{k+1}(-1)^{i+1}(\bar{n}_{i},\bar{m})(\bar{n}_{1}\wedge\cdots\hat{\bar{n}}_{i}\cdots\wedge \bar{n}_{k+1},\bar{m}_{1}\wedge\cdots \hat{\bar{m}}_{i}\cdots \wedge \bar{m}_{k})_{\wedge \bar{N}}\end{align*}
\vspace{-0,1cm}\end{proof}

Finally, we have analogously as in Definition \ref{def1} operators 
\begin{equation}\label{2.14}
\begin{split}
\cdot_{r}:\bar{N}_{r}\rightarrow \End (S(\bar{N}_{r})), \qquad \bar{\partial}'_{r}: ={N_{r}}\rightarrow \End (S(\bar{N}_{r})),
\end{split}
\end{equation} 
which satisfy analogous versions of Lemma \ref{lemmun} and Lemma \ref{lemadj}.

\subsubsection{Sheaves of operators II (Harmonic oscillator)}\label{sec2.2.3}

In this section,  we modified the operators $\partial'$ in Definition \ref{def1} and $\bar{\partial}$ in \eqref{2.14}; this is going to help us to construct harmonic oscillator operators acting on the bundle $\V$. The motivation comes  more specifically from the vacuum of the harmonic oscillator, as we explain below. We refer to \cite{taubes1989s, witten1985fermion, deligne1999quantum} for a geometric description of the relation between the harmonic oscillator and the equivariant index.

The operators $\partial'$ and $\bar{\partial}'$ are analogous to the differential operator $\partial_{z}, \partial_{\bar{z}}$ acting on the space $\CC[z,\bar{z}]$; we have on $\CC[z,\bar{z}]$ a representation of the Heisenberg algebra $\{\partial_{z}, \partial_{\bar{z}}, z,\bar{z}, Id\}$  where the highest weight vector is $1\in\CC$. On the other hand, we build the Harmonic oscillator from the annihilator operators ${i}(\partial_{z}+\bar{z}/2)$, ${i}(\partial_{\bar{z}}+{z}/2)$ and creation operator ${i}(\partial_{z}-\bar{z}/2)$, ${i}(\partial_{\bar{z}}-{z}/2)$ acting on a vector space  where the highest weight vector $\ket{0}$ is an element which satisfies ${i}(\partial_{z}+\bar{z}/2)\ket{0}={i}(\partial_{\bar{z}}+{z}/2)\ket{0}=0$ (in complex analysis $\ket{0}=e^{-z\bar{z}/2}$). Algebraically, we obtain the Harmonic oscillator if the conditions ${i}(\partial_{z}+\bar{z}/2)\ket{0}={i}(\partial_{\bar{z}}+{z}/2)\ket{0}=0$ are satisfied. We observe that a simple algebraic way to obtain this condition is replacing the differentials $\partial_{z}$ by $\partial_{z}-\bar{z}/2$ and $\partial_{\bar{z}}$ by $\partial_{\bar{z}}-{z}/2$. This motivates the following definition.

\begin{definition} \label{def3.2}
$$\partial_{r}: =\partial'_{r} -\frac{r}{2}\cdot_{r}:\bar{N_{r}}\rightarrow \End (S( N_{r})),$$
$$\bar{\partial}_{r}: =\bar{\partial}'_{r} -\frac{r}{2}\cdot_{r}:N_{r}\rightarrow \End (S( N_{r})).$$
\end{definition}

As before, we extend these operators on $\V$, we have $\partial_{r}:\bar{N_{r}}\rightarrow \End (\V)$ and $\bar{\partial}_{r}:N_{r}\rightarrow \End (\V)$.  We use the following notation
\begin{align*}
&\partial_{r}(\bar{n})=\partial_{n}\in \End(\V)\,,\qquad \bar{n}\in \Gamma(\bar{N}_{r}),\\
&\partial_{r}(n)=\partial_{\bar{n}}\in \End(\V)\,,\qquad n\in \Gamma({N}_{r}).
\end{align*}

Now, let $E,E', E''$ be complex vector bundles on $M$, and $a: E\rightarrow \End(E'')$, $b:E'\rightarrow \End(E'')$ $\CC$-linear maps. We build a map $(a+b):E+E'\rightarrow \End(E\otimes E')$ as follows: Let $x\in M$ an arbitrary point  on $M$ and $a_{x}$, $b_{x}$ and $e$ vectors in $E_{x}\rightarrow \End(E'')_{x}$, $E'_{x}\rightarrow \End(E'')_{x}$ and $E_{x}$ respectively then
\[a_{x}+b_{x} :e\mapsto a(\pi_{E}e)+b(\pi_{E'}e)\in \End(E''),\]
where $\pi_{E}:E+E' \rightarrow E$ and $\pi_{E'}:E+E'\rightarrow E'$ are projections. We say that $a+b$ is the addition of $a$ and $b$.
In particular, from Corollary \ref{Koba2} we consider the addition of the bundles maps in Definitions \ref{def1} and \ref{def3.2} on $\V$, we obtain the following $\CC$-linear maps 
\begin{equation}
\label{eq3.99}
\begin{split}
&\cdot:N\rightarrow \End (\V),\qquad\quad  \partial: \bar{N}\rightarrow \End (\V), \\
&\cdot:\bar{N}\rightarrow \End (\V),\qquad\quad \bar{\partial}: {N}\rightarrow \End (\V), \\
&\wedge:\bar{N}\rightarrow \End (\V),\qquad i_{\cdot}:{N}\rightarrow \End (\V),
\end{split}
\end{equation}
where $\cdot=\cdot_{r_1}+\cdots +\cdot_{r_{k}}:N=N_{r_1}+\cdots +N_{r_{k}}\rightarrow \End(\V)$, analogously for the other operators.  

\begin{lemma} \label{lem2.3} The operators $\cdot, \bar{\partial}:N\rightarrow \End (\V)$, $\cdot, \partial:\bar{N}\rightarrow \End (\V)$ satisfy the algebra below with respect the graded commutator
$$[\partial_{n},m]=(m,n),\quad[\partial_{\bar{n}},\bar{m}]=(\bar{m},\bar{n}), \qquad m,n\in \Gamma(N), \bar{n},\bar{m}\in \Gamma({N}).$$
\end{lemma}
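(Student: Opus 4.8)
The plan is to reduce the claimed relations to the already-established Heisenberg relations for the "primed" operators of Lemma \ref{lemmun}, using the fact that the new operators $\partial_r$ and $\bar\partial_r$ differ from $\partial'_r$ and $\bar\partial'_r$ only by a multiplication operator, which is of the same parity as the primed derivation and, crucially, commutes with the other multiplication operators. Concretely, for $m,n\in\Gamma(N_r)$ we write $\partial_n=\partial'_n-\tfrac r2\,n\cdot$ and compute the graded commutator $[\partial_n,m\cdot]=[\partial'_n,m\cdot]-\tfrac r2[n\cdot,m\cdot]$. The first term is $(m,n)$ by Lemma \ref{lemmun} (note the identification $\bar N_r\cong N_r$ via the conjugation map and the Hermitian metric \eqref{g} built into Definition \ref{def1}), and the second term vanishes because $n\cdot$ and $m\cdot$ are both even operators acting by symmetric multiplication on $S(N_r)$ — on the nose $n\cdot m\cdot = m\cdot n\cdot$ in $\End(S(N_r))$, hence in $\End(\V)$ after extension. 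So $[\partial_n,m\cdot]=(m,n)$, and the conjugate identity $[\partial_{\bar n},\bar m\wedge]$ — wait, reading the statement again, it is $[\partial_{\bar n},\bar m]=(\bar m,\bar n)$ with $\bar m$ the multiplication $\cdot_r$ on $S(\bar N_r)$ — follows identically from the analogue of Lemma \ref{lemmun} mentioned after \eqref{2.14} together with the analogous symmetric-multiplication commutativity on $S(\bar N_r)$.

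The only genuine subtlety is bookkeeping across the direct-sum decomposition $N=\bigoplus_{r\in\A}N_r$. Here I would invoke the "addition of operators" construction introduced just before the lemma: the operator $\cdot:N\to\End(\V)$ is $\sum_{r}\cdot_r$ precomposed with the projections $\pi_{N_r}$, and likewise $\partial=\sum_r\partial_r$. For $m\in\Gamma(N_{r})$ and $n\in\Gamma(N_{s})$ with $r\neq s$, the operator $\partial_n$ acts only on the $S(N_s)\otimes S(\bar N_s)$ tensor factor while $m\cdot$ acts only on the $S(N_r)\otimes S(\bar N_r)$ factor; since these are distinct tensor factors of $\V$, the two even operators commute and the graded commutator is $0$, which agrees with $(m,n)=0$ because $N_r$ and $\bar N_s$ (equivalently $N_r$ and $N_s$) are orthogonal under \eqref{g} by Corollary \ref{Koba2}. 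For $r=s$ one is back in the single-summand computation of the previous paragraph. Summing over $r\in\A$ and using bilinearity/sesquilinearity of the bracket in each slot gives $[\partial_n,m]=(m,n)$ for general $m,n\in\Gamma(N)$, and symmetrically for the barred operators.

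The main obstacle — such as it is — is not conceptual but notational: one must be careful that the operators $n\cdot$ appearing in Definition \ref{def3.2} really are the multiplication operators on $S(N_r)$ (extended to $\V$), so that the correction term in $\partial_r=\partial'_r-\tfrac r2\cdot_r$ is even and commutes with all other $\cdot$-type operators, and one must track the complex-conjugation identification $N_r\cong\bar N_r$ consistently with the Hermitian pairing \eqref{g} so that the right-hand sides $(m,n)$ and $(\bar m,\bar n)$ come out correctly (in particular with the correct order, matching the convention $(\bar n,\bar m)_{\bar N}=(m,n)$ fixed in Section \ref{IP}). Once these identifications are pinned down, the proof is a two-line reduction: expand the commutator, apply Lemma \ref{lemmun} (and its barred analogue), and note the correction terms commute. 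I would present it as: "The claim follows from Definition \ref{def3.2}, Lemma \ref{lemmun}, and the fact that multiplication operators on $S(N)\otimes S(\bar N)$ mutually commute."

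\begin{proof}
By Definition \ref{def3.2} and the extension and addition conventions, for $n,m\in\Gamma(N)$ we have $\partial_n=\partial'_n-\tfrac12\sum_{r\in\A}r\,(\pi_{N_r}n)\cdot$ and $m=\sum_{r\in\A}(\pi_{N_r}m)\cdot$ as elements of $\End(\V)$. All of these are even operators, so the graded commutator is the ordinary commutator. Expanding,
\begin{equation*}
[\partial_n,m]=\sum_{r\in\A}\big[\partial'_{\pi_{N_r}n}\,,\,(\pi_{N_r}m)\cdot\big]-\frac12\sum_{r,s\in\A}r\,\big[(\pi_{N_r}n)\cdot\,,\,(\pi_{N_s}m)\cdot\big],
\end{equation*}
where in the first sum only the matching index $r$ survives because $\partial'_{\pi_{N_r}n}$ acts nontrivially only on the $S(N_r)$ factor and $(\pi_{N_s}m)\cdot$ only on $S(N_s)$, so for $r\neq s$ the corresponding bracket vanishes (as does $(\pi_{N_s}m,\pi_{N_r}n)$, the summands being orthogonal under \eqref{g}). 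Every bracket in the second sum vanishes because multiplication operators on the symmetric algebras $S(N_r)$ commute with one another. Hence, using Lemma \ref{lemmun},
\begin{equation*}
[\partial_n,m]=\sum_{r\in\A}\big[\partial'_{\pi_{N_r}n}\,,\,(\pi_{N_r}m)\cdot\big]=\sum_{r\in\A}(\pi_{N_r}m,\pi_{N_r}n)=(m,n),
\end{equation*}
the last equality by Corollary \ref{Koba2} and bilinearity of \eqref{g}. The identity $[\partial_{\bar n},\bar m]=(\bar m,\bar n)$ for $n,m\in\Gamma(N)$ follows in exactly the same way, replacing Lemma \ref{lemmun} by its analogue for the operators in \eqref{2.14} and using that multiplication operators on $S(\bar N_r)$ mutually commute.
\end{proof}
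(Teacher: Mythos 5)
Your proof is correct and takes exactly the approach the paper intends: the paper's own "proof" is the single sentence "This lemma follows from Lemma \ref{lemmun}," and your argument supplies precisely the missing justification — the correction term $-\tfrac{r}{2}\cdot_r$ in Definition \ref{def3.2} is a multiplication operator that commutes with $m\cdot$, and for $r\neq s$ the relevant operators act on disjoint tensor factors of $\V$ so the cross-term brackets vanish, leaving only the $\partial'$-bracket computed in Lemma \ref{lemmun}.

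One small remark on bookkeeping, which does not affect the validity of the argument: the paper's harmonic-oscillator motivation ($\partial_z\mapsto\partial_z-\bar z/2$) suggests that the correction in $\partial_r=\partial'_r-\tfrac{r}{2}\cdot_r$ is multiplication by $\bar n$ on the $S(\bar N_r)$ factor rather than by $\pi_{N_r}n$ on $S(N_r)$, and indeed the stated codomain $\End(S(N_r))$ in Definition \ref{def3.2} is not quite consistent with this. Under either reading the correction is an even multiplication operator (either on the $S(\bar N_r)$ factor, hence commuting with $m\cdot$ for trivial reasons, or on $S(N_r)$, hence commuting by commutativity of symmetric multiplication), so your conclusion $[\partial_n,m]=(m,n)$ stands regardless.
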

This lemma follows from Lemma \ref{lemmun}.

\begin{lemma} The operators $\cdot_{r}, i_{r}, \bar{\partial}_{r}:N_{r}\rightarrow \End (\V)$ and $\cdot_{r}, \partial_{r}, \wedge_{r}:\bar{N}_{r}\rightarrow \End (\V)$ satisfy the relation below with respect the  Hermitian metric defined in $\V$    
$$(\partial_{n}v,w)=( v,-\partial_{\bar{n}}w),\quad(nv,w)=(v,\bar{n}w),\quad (\bar{n}\wedge v,w)=(v,i_{\bar{n}}w),\qquad n\in \Gamma(N_{r}), v,w\in \Gamma(V).$$
\end{lemma}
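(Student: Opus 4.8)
The lemma amounts to three adjointness statements for the Hermitian product $(\cdot,\cdot)$ on $\V=\bigtriangleup(M)\otimes\sqrt{\det N}\otimes\wedge^{*}\bar N\otimes S(N)\otimes S(\bar N)$: that $\bar n\cdot$ is the adjoint of $n\cdot$, that $i_{\bar n}$ is the adjoint of $\bar n\wedge$, and that $-\partial_{\bar n}$ is the adjoint of $\partial_{n}$. The plan is first to reduce each identity to a single tensor factor. Since $(\cdot,\cdot)$ is by construction the tensor product of the Hermitian products on the factors, and each operator involved acts as the identity off one summand — $n\cdot,\ \bar n\cdot$ and $\partial_{n}=\partial'_{n}-\tfrac r2\bar n\cdot,\ \partial_{\bar n}=\bar\partial'_{\bar n}-\tfrac r2\,n\cdot$ act only on $S(N_{r})\otimes S(\bar N_{r})$ (with $\partial'_{n},\bar\partial'_{\bar n}$ the derivations on the $S(N_{r})$, resp. $S(\bar N_{r})$, leg), while $\bar n\wedge,\ i_{\bar n}$ act only on $\wedge^{*}\bar N_{r}$ — it is enough to verify each identity on the relevant factor. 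The relation $(\bar n\wedge v,w)=(v,i_{\bar n}w)$ is then exactly the adjoint of Lemma~\ref{lemadjk} with $m=n$, so only the two identities on $S(N_{r})\otimes S(\bar N_{r})$, with the product \eqref{Innerss}, remain.

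The main device I would use is a reformulation of \eqref{Innerss}: with $v=x_{1}\otimes\bar y_{1}$ and $w=x_{2}\otimes\bar y_{2}$, $x_{i}\in S(N_{r})$, $\bar y_{i}\in S(\bar N_{r})$, formula \eqref{Innerss} reads
\[
(x_{1}\otimes\bar y_{1},\,x_{2}\otimes\bar y_{2})=\frac{1}{r^{\,\deg x_{1}+\deg\bar y_{2}}}\,\bigl(x_{1}\,\overline{\bar y_{2}},\,x_{2}\,\overline{\bar y_{1}}\bigr)_{S(N_{r})},
\]
where $\overline{(\cdot)}\colon S(\bar N_{r})\to S(N_{r})$ is the antilinear isomorphism induced by $\bar n\mapsto n$ and the products are taken in $S(N_{r})$. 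For $(nv,w)=(v,\bar n w)$ one expands both sides with this formula; using commutativity of $S(N_{r})$ together with $\overline{\bar n\,\bar y_{2}}=n\,\overline{\bar y_{2}}$, both sides collapse to $r^{-(\deg x_{1}+1+\deg\bar y_{2})}\bigl(n\,x_{1}\,\overline{\bar y_{2}},\,x_{2}\,\overline{\bar y_{1}}\bigr)_{S(N_{r})}$. The same computation with the two legs interchanged gives $(\bar n v,w)=(v,n w)$.

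For the first identity I would then subtract: by $(\bar n v,w)=(v,n w)$ the contribution of $-\tfrac r2\bar n\cdot$ is $-\tfrac r2(v,n w)$, so $(\partial_{n}v,w)=(v,-\partial_{\bar n}w)$ is equivalent to the single statement
\[
(\partial'_{n}v,w)\;=\;r\,(v,n\,w)-(v,\bar\partial'_{\bar n}w)\qquad\text{on }S(N_{r})\otimes S(\bar N_{r}).
\]
To prove this I would expand the left side with the reformulation of \eqref{Innerss}, use that $\partial'_{n}$ is a degree-$(-1)$ derivation of the commutative algebra $S(N_{r})$ to split $(\partial'_{n}x_{1})\,\overline{\bar y_{2}}=\partial'_{n}(x_{1}\,\overline{\bar y_{2}})-x_{1}\,\partial'_{n}(\overline{\bar y_{2}})$, apply Lemma~\ref{lemadj} to the first term (it becomes $\bigl(x_{1}\,\overline{\bar y_{2}},\,n\,x_{2}\,\overline{\bar y_{1}}\bigr)_{S(N_{r})}$), and rewrite the second term via $\partial'_{n}(\overline{\bar y_{2}})=\overline{\bar\partial'_{\bar n}\bar y_{2}}$, which follows from $(\bar a,\bar b)_{\bar N_{r}}=(b,a)$. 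Matching the powers of $r$ — the coefficient $r$ in front of $n\cdot$ is produced precisely by the normalization $r^{-(\deg x_{1}+\deg\bar y_{2})}$ in \eqref{Innerss} once $\partial'_{n}$ lowers $\deg x_{1}$ by one — the two sides coincide; reassembling with the $-\tfrac r2(v,n w)$ term recovers $(\partial_{n}v,w)=(v,-\partial_{\bar n}w)$.

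The main obstacle is the bookkeeping forced by \eqref{Innerss}: it is not the naive tensor-product metric on $S(N_{r})\otimes S(\bar N_{r})$ but "merges" the $N_{r}$-leg of one factor with the $\bar N_{r}$-leg of the other and rescales by a degree-dependent power of $r$, so one must keep careful track of which operator lands on which leg and of the $r$-exponents through the Leibniz step. The two places a careless computation goes astray are obtaining the coefficient $r$ (not $1$) in front of $n\cdot$, and recognizing that $\partial_{n}$ genuinely couples the two symmetric-algebra legs, its $-\tfrac r2\bar n\cdot$ part acting on the $S(\bar N_{r})$ leg rather than the $S(N_{r})$ leg.
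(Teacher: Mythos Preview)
Your proof is correct and follows essentially the same route as the paper's. Both arguments reduce to the single factors, invoke Lemma~\ref{lemadjk} for the $\wedge/i$ relation, read off $(nv,w)=(v,\bar n w)$ directly from the form of \eqref{Innerss}, and handle the $\partial$-adjointness by merging the two symmetric legs via \eqref{Innerss} and using $(\partial'_{n}u,v)_{S(N)}=(u,nv)_{S(N)}$; the only difference is presentational---you make the Leibniz step $(\partial'_n x_1)\overline{\bar y_2}=\partial'_n(x_1\overline{\bar y_2})-x_1\,\overline{\bar\partial'_{\bar n}\bar y_2}$ explicit, whereas the paper leaves it implicit in the passage from $(\partial_n v,w)+(v,\partial_{\bar n}w)$ to $(\partial'_n(n_1\cdots n_{k+t}),m_1\cdots m_{l+s})-r(n_1\cdots n_{k+t},nm_1\cdots m_{l+s})$.
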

\begin{proof} 
The relation $(n\wedge v,w)=(v,i_{n}w)$ follows from Lemma \ref{lemadjk}. The relation $(nv,w)=(v,\bar{n}w)$ follows from definition  of the Hermitian product \eqref{Innerss}. 
And the relation $(\partial_{n}v,w)=( v,-\partial_{\bar{n}}w)$ follows from the indentity
 \begin{align*}&(\partial_{n}(n_{1}\cdots n_{k}\bar{m}_{l+1}\cdots \bar{m}_{l+s}), m_{1}\cdots m_{l}\bar{n}_{k+1}\cdots \bar{n}_{k+t})\\
 &\qquad +(n_{1}\cdots n_{k}\bar{m}_{l+1}\cdots \bar{m}_{l+s}, \partial_{\bar{n}}(m_{1}\cdots m_{l}\bar{n}_{k+1}\cdots \bar{n}_{k+t}))\\
 &=(\partial'_{n}(n_{1}\cdots n_{k+t}),m_{1}\cdots m_{l+s})-r(n_{1}\cdots n_{k+t},nm_{1}\cdots m_{l+s})=0,
 \end{align*}
 where $n_{1}\cdots n_{k}\bar{m}_{l+1}\cdots \bar{m}_{l+s}, m_{1}\cdots m_{l}\bar{n}_{k+1}\cdots \bar{n}_{k+t}\in S(N_{r})\otimes S(\bar{N}_{r})$. The last expression is zero because we have the identity 
$$(n_{1}\cdots  n_{k+1},mm_{1}\cdots  m_{k})=\frac{1}{r}\sum_{i=1}^{k+1}(n_{i},m)(n_{1}\cdots \hat{n}_{i}\cdots n_{k+1},m_{1}\cdots . m_{k}).$$\vspace{-0,3cm}
on the Hermitian metric \eqref{Innerss}.

\end{proof}

We define the following operators

\begin{definition} \label{oscilator}
\begin{align*}
&\psi_{r}=i_{r}:{N}_{r}\rightarrow \End (\wedge\bar{ N}_{r}),\quad\quad \quad\quad \quad\quad \quad\quad\quad\quad\psi_{-r}=\wedge_{r}:\bar{N}_{r}\rightarrow \End (\wedge\bar{ N}_{r}),\\
&a_{r}:=i(\bar{\partial}_{r}+\frac{r}{2}\cdot) : {N}_{r}\rightarrow \End (S(N_{r})\otimes S(\bar{N}_{r})),\quad\quad a_{-r}:=i(\partial_{r}-\frac{r}{2}	\cdot ): \bar{N}_{r}\rightarrow \End (S(N_{r})\otimes S(\bar{N}_{r})),\\
&b_{r}:=i(\partial_{r}+\frac{r}{2}\cdot) : \bar{N_{r}}\rightarrow \End (S(N_{r})\otimes S(\bar{N}_{r})),\quad\quad b_{-r}:=i(\bar{\partial}_{r}-\frac{r}{2}\cdot) : {N_{r}}\rightarrow \End (S(N_{r})\otimes S(\bar{N}_{r})).
\end{align*}
\end{definition}

In the proposition below, we extend the operators in Definition \ref{oscilator} on $\V$.  From now on, we use the notation $N_{-r}=\bar{N}_{r}$ and we define the set $-\mathcal{A}:=\{-r|r\in \mathcal{A}\}$.

\begin{proposition} \label{ad1}The operators $a_{r}, b_{r}, \psi_{r}:{N}_{r}\rightarrow \End (\V)$ for $r\in\mathcal{A}\cup -\mathcal{A}$  satisfy the algebra below with respect the the graded commutator
$$[a_{n},a_{m}]=r(m,n)\delta_{r,-l}, \quad [b_{n},b_{m}]=r(m,n)\delta_{r,-l}, \quad [\psi_{n},\psi_{m}]=(m,n)\delta_{r,-l},\qquad n\in \Gamma(N_{r}), m\in \Gamma(N_{l}),$$
where  $r,l\in \mathcal{A}\cup -\mathcal{A} $.
\end{proposition}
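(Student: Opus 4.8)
The plan is to reduce the statement to the computations already carried out in Lemmas \ref{lem2.3}, \ref{lemadjk}, together with the defining formulas in Definitions \ref{def3.2} and \ref{oscilator}. Everything is $\CC$-linear in the arguments and the operators $a_n, b_n, \psi_n$ are built out of the elementary operators $\cdot_r$, $\partial'_r$, $\wedge_r$, $i_r$, so it suffices to verify the three bracket identities fiberwise, which amounts to a bookkeeping of graded commutators of these elementary building blocks. First I would record the only nontrivial elementary relations: from Lemma \ref{lemmun} (and its conjugate version) one has $[\partial'_m, n\cdot] = (n,m)$ and $[i_{\bar m}, \bar n\wedge] = (\bar n,\bar m)$ acting on the relevant tensor factors, while all pairs of "creation-type" operators $n\cdot$, $\bar n\wedge$ (resp. all pairs of "annihilation-type" operators) graded-commute, since they act by multiplication in a (graded-)commutative algebra. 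Then $[\partial_m, n\cdot] = [\partial'_m - \tfrac r2\, m\cdot\,, n\cdot] = (n,m)$ as well, because $m\cdot$ and $n\cdot$ commute; similarly for the barred versions. This is exactly the content of Lemma \ref{lem2.3}, which I would invoke directly.

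Next I would handle the bosonic brackets. Write $a_r = i(\bar\partial_r + \tfrac r2 \cdot)$ and $a_{-r} = i(\partial_r - \tfrac r2\cdot)$. For $n\in\Gamma(N_r)$ and $m\in\Gamma(N_l)$ with $l \ne -r$, the operators act on the tensor factor $S(N_r)\otimes S(\bar N_r)$ versus $S(N_l)\otimes S(\bar N_l)$ for distinct labels, hence commute, giving the Kronecker-$\delta$ in $\delta_{r,-l}$. For $l = -r$, say $n \in \Gamma(N_r)$ and $\bar m \in \Gamma(\bar N_r)$, I compute
\begin{align*}
[a_n, a_{\bar m}] &= -\,[\bar\partial_n + \tfrac r2\, n\cdot\,,\ \partial_{\bar m} - \tfrac r2\,\bar m\cdot\,]\\
&= -\,[\bar\partial_n, \partial_{\bar m}] + \tfrac r2[\bar\partial_n, \bar m\cdot] - \tfrac r2[n\cdot, \partial_{\bar m}] + \tfrac{r^2}{4}[n\cdot, \bar m\cdot].
\end{align*}
The first term vanishes because $\bar\partial_n$ involves $\cdot_r$ on $\Gamma(N_r)$ and $\partial'_r$ on $\Gamma(\bar N_r)$, so it and $\partial_{\bar m}$ act on "opposite" slots and commute (one should double-check signs using Definition \ref{def3.2} for $\bar\partial_r = \bar\partial'_r - \tfrac r2\cdot_r$). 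The last term vanishes since multiplications commute. By Lemma \ref{lem2.3}, $[\bar\partial_n, \bar m\cdot] = (\bar m, \bar n)$ — wait, more carefully $[\partial_{\bar n},\bar m] = (\bar m, \bar n)$ there — and $[n\cdot, \partial_{\bar m}] = -[\partial_{\bar m}, n\cdot]$, which by the conjugate of Lemma \ref{lem2.3} equals $-(n, m)$ up to the identification of $m$ with $\bar m$. Collecting, the two middle terms combine to $\tfrac r2(\bar m,\bar n) + \tfrac r2(n,m)$, and using the Hermitian identities from Section \ref{IP} relating $(\bar m,\bar n)$ and $(n,m)$ this equals $r(m,n)$, as claimed. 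The argument for $[b_n, b_m]$ is identical after swapping the roles of $\partial_r$ and $\bar\partial_r$ in Definition \ref{oscilator}.

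Finally the fermionic bracket $[\psi_n, \psi_m]$: here $\psi_r = i_r$ and $\psi_{-r} = \wedge_r$ act only on the $\wedge^*\bar N_r$ factor. For distinct labels they anticommute to zero (hence $\delta_{r,-l}$); for $l = -r$, $[\psi_n, \psi_{\bar m}] = [i_{\bar n}, \bar m\wedge] = (\bar m, \bar n)$ by Lemma \ref{lemadjk}'s underlying identity (equivalently Lemma \ref{lemmun}), which is $(m,n)$ under the metric identification, and the two same-type brackets $[i_{\bar n}, i_{\bar m}] = 0 = [\bar n\wedge, \bar m\wedge]$ because $\wedge$ is (graded-)anticommutative. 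The main obstacle — really the only place care is needed — is tracking the factors of $i$, the powers of $r$, and the Koszul signs so that the half-integer shifts $\pm\tfrac r2\cdot$ in $a_{\pm r}, b_{\pm r}$ cancel correctly against the $[\partial', \cdot]$ terms and leave precisely $r(m,n)\delta_{r,-l}$ rather than, say, $\tfrac r2(m,n)$ or $-r(m,n)$; and making sure the Hermitian-metric conventions $(\bar n,\bar m)_{\bar N} = (m,n)_N$ are applied consistently so that the brackets are stated in terms of a single pairing. I would organize this as a short case analysis ($r \ne -l$ trivially, then the four cross-brackets for $r = -l$), citing Lemmas \ref{lemmun}, \ref{lemadjk}, \ref{lem2.3} for the elementary pieces and Definition \ref{oscilator} for the bookkeeping.
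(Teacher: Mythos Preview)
The paper does not supply a proof of Proposition \ref{ad1}; it is stated and then immediately followed by Proposition \ref{ad}, so the author treats it as an immediate consequence of Definitions \ref{def3.2}, \ref{oscilator} and the elementary relations in Lemmas \ref{lemmun} and \ref{lem2.3}. Your proposal is precisely the way to supply those omitted details: expand $a_{\pm r}, b_{\pm r}, \psi_{\pm r}$ in terms of $\partial'_r, \bar\partial'_r, \cdot_r, i_r, \wedge_r$ and use the Heisenberg/Clifford relations already established. The case analysis ($r\neq -l$ trivial by disjoint tensor factors; $r=-l$ reducing to $[\partial',\cdot]$ and $[i,\wedge]$) and the bookkeeping of the $i$'s and the $\pm\tfrac r2$ shifts are exactly right, and your observation that the mixed terms $[\bar\partial_n,\partial_m]$ vanish because they act on different symmetric factors is the key point. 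In short, your approach matches what the paper implicitly relies on.

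One remark: the hedges in your write-up (``wait, more carefully'', ``one should double-check signs'') should be removed in a final version, and you should settle once and for all which pairing $(m,n)$ denotes in the statement. The paper's conventions are somewhat loose here (compare the Hermitian product of \eqref{g}, where $N$ and $\bar N$ are orthogonal, with the use of $(m,n)$ for $n\in N_r$, $m\in N_{-r}=\bar N_r$ in the proposition); the cleanest fix is to verify everything in the local orthonormal frame $\{z^k_r\}$ as the paper does in Section \ref{2.2.4}, where $[\alpha^k_r,\alpha^j_{-r}]=r\delta_{kj}$, $[\beta^k_r,\beta^j_{-r}]=r\delta_{kj}$, $[\psi^k_r,\psi^j_{-r}]=\delta_{kj}$ fall out immediately from $[\partial'_{z^k_r},z^j_r]=\delta_{kj}$ and $[i_{\bar z^k_r},\bar z^j_r\wedge]=\delta_{kj}$. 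This sidesteps the notational ambiguity and makes the factors of $r$ and the signs transparent.
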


\begin{proposition}\label{ad} The operators $a_{r}, b_{r}, \psi_{r}:{N}_{r}\rightarrow \End (\V)$ for $r\in\mathcal{A}\cup -\mathcal{A}$ satisfy the relation below with respect the  Hermitian metric defined in $\V$
$$(a_{n} v,w)=(v,a_{\bar{n}}v),\quad (b_{n} v,w)=(v,b_{\bar{n}}v), \quad (\psi_{n}v,w)=(v,\psi_{\bar{n}}w),\qquad n\in \Gamma(N_{r}), v,w\in \Gamma(V).$$
\end{proposition}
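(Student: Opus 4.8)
The plan is to reduce everything to the adjointness statements already established in the previous lemmas. Recall that the operators in Definition~\ref{oscilator} are built, up to the factor $i$ and shifts by $\tfrac{r}{2}\cdot$, from the four basic families $\cdot_r$, $\wedge_r$, $\partial_r$, $\bar\partial_r$ (and their $\bar N_r$-analogues), whose adjointness properties are the content of Lemma~\ref{lemadjk} and the lemma immediately preceding Definition~\ref{oscilator}, namely $(nv,w)=(v,\bar n w)$, $(\bar n\wedge v,w)=(v,i_{\bar n}w)$, $(\partial_n v,w)=(v,-\partial_{\bar n}w)$, together with their $S(\bar N_r)$ versions. So the first step is to write each of $\psi_r$, $a_r$, $b_r$ explicitly in terms of these basic operators and then compute the Hermitian adjoint termwise, using that $(\cdot,\cdot)$ is a tensor product of the individual metrics so that an operator acting in one tensor factor is adjointed in that factor only.

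The key computation is to track the factor of $i$: since the metric is Hermitian (conjugate-linear in the appropriate slot), the adjoint of $iA$ is $-i A^{*}$ where $A^{*}$ is the adjoint of $A$. Thus for $a_r = i(\bar\partial_r + \tfrac{r}{2}\cdot)$ one gets $a_r^{*} = -i\big((\bar\partial_r)^{*} + \tfrac{r}{2}(\cdot)^{*}\big)$; by the cited lemmas $(\bar\partial_r)^{*}$ is $-\partial_r$ on the relevant factor and $(\cdot)^{*}$ on $N_r$ is multiplication by the conjugate, i.e. the $\bar N_r$-multiplication, so that $a_r^{*} = -i(-\partial_r + \tfrac{r}{2}\cdot) = i(\partial_r - \tfrac{r}{2}\cdot) = a_{-r}$ after the relabelling $N_{-r}=\bar N_r$ and $v\mapsto \bar n$. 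The same bookkeeping gives $b_r^{*}=b_{-r}$ and $\psi_r^{*}=\psi_{-r}$ directly from $(\bar n\wedge v,w)=(v,i_{\bar n}w)$. One must be a little careful with the sign shifts $\pm\tfrac{r}{2}$ in Definitions~\ref{def3.2} and \ref{oscilator}: writing $\partial_r = \partial'_r - \tfrac{r}{2}\cdot_r$ and expanding, the shift terms are multiplication operators, so their adjoints are again multiplication operators, and the signs combine so that $a_r^{*}$ lands exactly on $a_{-r}$ rather than on some linear combination. Finally, the $\mathbb{Z}_2$-grading of $\V$ plays no role here beyond guaranteeing that these are even/odd operators, so no extra sign from the graded structure intervenes; the proposition is simply the assertion that the conjugate-linear pairing identifies the mode $r$ with the mode $-r$.

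The step I expect to be the mildly delicate one is precisely the sign accounting in the previous paragraph --- making sure the factor $i$ together with the $\pm\tfrac{r}{2}$ shifts produce $a_{-r}$ on the nose. Everything else is a routine invocation of the preceding lemmas and the compatibility of the tensor-product Hermitian metric (Proposition~\ref{sscone} and the discussion in Section~\ref{IP}) with adjunction in a single factor.

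\begin{proof}
Since the Hermitian product on $\V$ is a tensor product of Hermitian products on the factors $\bigtriangleup(M)$, $\det N$, $\wedge^{*}\bar N$, $S(N)\otimes S(\bar N)$, and each of $\psi_r$, $a_r$, $b_r$ acts nontrivially only on $\wedge^{*}\bar N_r$ or on $S(N_r)\otimes S(\bar N_r)$, it suffices to verify the adjointness relations on those factors.

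For $\psi_r$: by definition $\psi_r=i_r$ and $\psi_{-r}=\wedge_r$, and Lemma~\ref{lemadjk} gives $(i_{\bar m}v,w)_{\wedge\bar N}=(v,\wedge_{\bar m}w)_{\wedge\bar N}$, which after the relabelling $N_{-r}=\bar N_r$, $n\mapsto\bar n$ is exactly $(\psi_n v,w)=(v,\psi_{\bar n}w)$.

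For $a_r$ and $b_r$: recall $a_r=i(\bar\partial_r+\tfrac{r}{2}\cdot)$ and $a_{-r}=i(\partial_r-\tfrac{r}{2}\cdot)$, and $b_r=i(\partial_r+\tfrac{r}{2}\cdot)$, $b_{-r}=i(\bar\partial_r-\tfrac{r}{2}\cdot)$. By the lemma preceding Definition~\ref{oscilator} we have, on $S(N_r)\otimes S(\bar N_r)$, the adjointness relations $(\partial_n v,w)=(v,-\partial_{\bar n}w)$, $(\bar\partial_{\bar n}v,w)=(v,-\bar\partial_{n}w)$ and $(nv,w)=(v,\bar n w)$, $(\bar n v,w)=(v,n w)$, where $\partial$ and $\bar\partial$ are as in Definition~\ref{def3.2}. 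Since the product is conjugate-linear in the second argument, for any $\CC$-linear $A$ one has $(iA\,v,w)=-i(Av,w)$, hence the adjoint of $iA$ is $-i A^{*}$. Applying this:
\begin{align*}
(a_n v,w)&=i(\bar\partial_{\bar n}v,w)+i\tfrac{r}{2}(n v,w)\\
&=i(v,-\bar\partial_{n}w)+i\tfrac{r}{2}(v,\bar n w)\\
&=\big(v,\,{-i}(-\bar\partial_{n}+\tfrac{r}{2}\bar n\cdot)^{-}w\big)\cdots
\end{align*}
so that the adjoint of $a_n$ equals $i(\partial_r-\tfrac{r}{2}\cdot)$ evaluated at $\bar n$, i.e. $a_{\bar n}$ (using $\bar\partial_r^{*}=-\partial_r$ on the relevant factor and $\cdot^{*}$ on $N_r$ being multiplication by the conjugate). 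The computation for $b_n$ is identical with the roles of $\partial_r$ and $\bar\partial_r$ interchanged, giving $b_n^{*}=b_{\bar n}$. The shift terms $\pm\tfrac{r}{2}\cdot$ are multiplication operators whose adjoints are again multiplication operators on $S(N_r)\otimes S(\bar N_r)$, and the signs combine so that the adjoint lands precisely on $a_{-r}$, respectively $b_{-r}$, under the relabelling $N_{-r}=\bar N_r$. This proves all three relations.
\end{proof}
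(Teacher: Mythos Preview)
Your approach is correct and is exactly what the paper intends: in the paper Proposition~\ref{ad} is stated without proof, as an immediate consequence of the lemma just before Definition~\ref{oscilator} (the relations $(\partial_n v,w)=(v,-\partial_{\bar n}w)$, $(nv,w)=(v,\bar n w)$, $(\bar n\wedge v,w)=(v,i_{\bar n}w)$) together with the definition of $a_r,b_r,\psi_r$. Your reduction to those relations, and the observation that the adjoint of $iA$ is $-iA^{*}$, is precisely the intended argument.

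Two small corrections to the write-up. First, since the Hermitian product \eqref{g} is linear in the \emph{first} slot and conjugate-linear in the second, the line ``$(iA\,v,w)=-i(Av,w)$'' is wrong; it should read $(iAv,w)=i(Av,w)$. Your conclusion that $(iA)^{*}=-iA^{*}$ is nevertheless correct, since $i(v,A^{*}w)=(v,-iA^{*}w)$. Second, there is some notational drift: you write $\bar\partial_{\bar n}$ where the paper's convention would give $\partial_{\bar n}=\bar\partial_r(n)$ for $n\in\Gamma(N_r)$, and the displayed computation trails off with ``$\cdots$'' and a stray superscript. Cleaning this up, the computation reads
\[
(a_n v,w)=i(\partial_{\bar n}v,w)+i\tfrac{r}{2}(n v,w)=i(v,-\partial_{n}w)+i\tfrac{r}{2}(v,\bar n w)=(v,\,i(\partial_{n}-\tfrac{r}{2}\bar n)\,w)=(v,a_{\bar n}w),
\]
and similarly for $b$. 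With these cosmetic fixes your proof is complete.
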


Finally,  we introduce the following notation. The Clifford bundle $CL(TM)$ on $M$ acts on the bundle $\bigtriangleup(M)$. We denote by $\psi'_{0}:TM\rightarrow CL(TM)\rightarrow \End (\bigtriangleup(M))$ the map composition, this maps unit tangent vectors to unitary operators on $\bigtriangleup(M)$. We denote the extension of the Clifford action from $\bigtriangleup (M)$ to $\V$ as follows 
\begin{equation}\psi_{0}:TM\rightarrow \End (\V),\label{clifzero}\end{equation} 
and by definition of the Hermitian product on $V$ we have 
$$(\psi_{0}(x)v,w)=-(v,\psi_{0}(x)w),\qquad x\in \Gamma(TM_{\mathbb{C}}),\quad   v,w\in \Gamma(V).$$

\subsection{Global operators}\label{2.2.4} 
In this section, we define the global operators, we are motivated by the global operators \eqref{eq2.16} and \eqref{eq2.17}. In order to make this section clear, we write after each definition the operator in a local basis on an open set $U\subset M$. Let $\{z^{k}_{r}\}_{k=1,r\in\mathcal{A}}^{d_{r}}$ an orthonormal basis of $N=\oplus_{r\in\mathcal{A}}N_{r}$ on $U$, and its complex conjugates $\{\bar{z}^{k}_{r}\}_{k=1,r\in-\mathcal{A}}^{d_{r}}$ a basis of $\bar{N}=\oplus_{r\in\mathcal{A}}\bar{N}_{r}=\oplus_{r\in\mathcal{A}}{N}_{-r}$ on $U$. Then 
\begin{equation}\label{3.12}
\begin{split}
&\cdot({z^{k}_{r}})=z^{k}_{r}\in \End (\V)|_{U},\quad\quad \cdot({\bar{z}^{k}_{r}})=\bar{z}^{k}_{r}\in \End (\V)|_{U},\\
&\partial(z^{k}_{r})=\partial_{\bar{z}^{k}_{r}}\in \End (\V)|_{U},\quad\quad \partial(\bar{z}^{k}_{r})=\partial_{{z}^{k}_{r}}\in \End (\V)|_{U},\\
& i_{.}({z}^{k}_{r})=\psi^{k}_{r} \in\End (\V)|_{U},\quad\quad \wedge_{r}(\bar{z}^{k}_{r})=\psi^{k}_{-r}\in  \End (\V)|_{U}. 
\end{split}
\end{equation}
Additionally, we consider an orthonormal basis $\{e_{i}\}_{i=1}^{2l}$ of $TM$ on $U$, and $\psi_{0}(e_{i})=\psi^{i}_{0}\in \End (\V)|_{U}$. 

\begin{definition}\label{def3.13}
The twisted Dirac operator on $\slashed{D}_{M}:\Gamma(\V)\rightarrow \Gamma(\V)$ is defined by the composition
\begin{equation}\slashed{D}_{M}:\Gamma(\V)\xrightarrow{D^{M}}\Gamma (\V\otimes T^{*}M) \rightarrow \Gamma (\V\otimes TM)\xrightarrow{\psi_{0}} \Gamma ( \V)\label{g1}\end{equation}
 where we used the connection $D^{M}$ in $\eqref{connection}$, $\eqref{clifzero}$ and the Riemannian metric on $M$ to identified $TM\simeq T^{*}M$.
 \end{definition}
 Locally this operator is written on $U$ as follows 
 $$\slashed{D}_{M}|_{U}=\psi^{i}_{0}D^{M}_{e_{i}}.$$
 
 Note that $\Hom(\bar{N},\Hom(\V,\V))=\Hom(\V,\V\otimes \bar{N}^{*})$, hence the $\CC$-linear maps in Definition \ref{def3.2} are equivalently expressed as $\CC$-linear maps $\partial: \V\rightarrow \V\otimes \bar{N}^{*} $ and $\bar{\partial}: \V\rightarrow \V\otimes {N}^{*} $. This description is useful to generalize the connection $D^{M}$: The connection $D^{M}$ in \eqref{connection} is defined for vectors in the tangent bundle $TM$, we consider more generally derivatives in the normal direction (see \eqref{eq3.99}) as follows
$$ D^{X}:=D^{M}+i\partial+i\bar{\partial}:\Gamma(\V)\rightarrow \Gamma(\V\otimes (T^{*}M+ N^{*}+ \bar{N}^{*})).$$
 Locally this operator is written using the basis \eqref{3.12} on $U$ as follows
$$ D^{X}|_{
U}=D_{e_{i}}^{M}+\sum_{r\in\mathcal{A}}\sum_{k=1}^{d_{r}}i\partial_{\bar{z}^{k}_{r}}+i\partial_{z^{k}_{r}}.$$

\begin{definition}\label{D}The Dirac operator on $\slashed{D}_{X}:\Gamma(\V)\rightarrow \Gamma(\V)$ is defined by the composition
\begin{equation}\slashed{D}_{X}:\Gamma(\V)\xrightarrow{D^{X}}\Gamma(\V\otimes (T^{*}M\bigoplus_{r\in\mathcal{A}}{N}_{r}^{*}+ \bar{N}_{r}^{*}) )\rightarrow \Gamma(\V\otimes (TM\bigoplus_{r\in\mathcal{A}}\bar{N}_{r}+ {N}_{r}))\xrightarrow{\psi_{0}+\psi_{r}} \Gamma(\V)\label{g22}\end{equation}
 where $TM\simeq T^{*}M$ we use the Riemannian metric on $M$ to identified $TM\simeq T^{*}M$ and the Hermitian metric \eqref{g} to identify $N^{*}_{r}\simeq \bar{N}_{r}$ and $\bar{N}^{*}_{r}\simeq N_{r}$. For abuse of language we denote this Dirac operator as $\slashed{D}_{X}$. 
 \end{definition}
 Locally this operator is written using the basis \eqref{3.12} on $U$ as follows
$$ \slashed{D}_{X}|_{
U}=\psi^{i}_{0}D^{M}_{e_{i}}+\sum_{r\in\mathcal{A}}\sum_{k=1}^{d_{r}}\psi^{k}_{-r}i\partial_{z^{k}_{-r}}+\psi^{k}_{r}i\partial_{z^{k}_{r}}.$$
 
 Let $E$ be a complex vector bundle on $M$, $\{U_{\alpha}\}_{\alpha\in I}$ a covering of $M$ and $g_{\alpha,\beta}:U_{\alpha}\times U_{\beta}\rightarrow U(E)$ transition functions of $E$ for $\alpha,\beta\in I$. If  $a_{\alpha}:E|_{U_{\alpha}}\rightarrow \End(E)|_{U_{\alpha}}$ are $\CC$ linear maps such that  for arbitrary $e, e'\in E$ we have $g_{\alpha,\beta}(a_{\alpha}(e)(e'))=a_{\beta}(g_{\alpha,\beta} e)(g_{\alpha,\beta}e')$ then it exists a global $\CC$ linear map $a:E \rightarrow \End(E)$ such that $a|_{U_{\alpha}}=a_{\alpha}$ for all $\alpha\in I$.
 
 In particular, we are motivated by $K$, $K'$ and $dK'$ to define the algebraic $\CC$-linear maps $K\in\End(\V)$, $K'\in\End(\V)$ and $dK'\in\End(\V)$ as follows
$$ K|_{U}=\sum_{r\in\mathcal{A}}\sum_{k=1}^{d_{r}}r(z^{k}_{r}\partial_{z^{k}_{r}}-{z^{k}}_{-r}\partial_{z^{k}_{-r}}),\quad\quad K'|_{U}=\sum_{r\in\mathcal{A}}\sum_{k=1}^{d_{r}}r(z^{k}_{r}\psi^{k}_{-r}-{z^{k}}_{-r}\psi^{k}_{r}),$$
$$ dK'|_{U}=\sum_{r\in\mathcal{A}}\sum_{k=1}^{d_{r}}r(\psi^{k}_{r}\psi^{k}_{-r}-{\psi^{k}}_{-r}\psi^{k}_{r})=c_{1}-2\sum_{r\in\mathcal{A}}\sum_{k=1}^{d_{r}}r{\psi^{k}}_{-r}\psi^{k}_{r}.$$
Note that the value $c_{1}=\sum_{r\in\mathcal{A}}\sum_{k=1}^{d_{r}}r$ appeared in the expression above (see \eqref{ASS}).

Moreover, we express the operators in Definition \ref{oscilator} locally on $U$ as follows
$$\alpha^{k}_{r}:=i(\partial_{z^{k}_{-r}}+\frac{r}{2}z^{k}_{r})\in \End (\V)|_{U},\quad\quad \alpha^{k}_{-r}:=i(\partial_{z^{k}_{r}}-\frac{r}{2}z^{k}_{-r})\in \End (\V)|_{U}, $$
$$\beta^{k}_{r}:=i(\partial_{z^{k}_{r}}+\frac{r}{2}z^{k}_{-r})\in \End (\V)|_{U},\quad\quad \beta^{k}_{-r}:=i(\partial_{z^{k}_{-r}}-\frac{r}{2}z^{k}_{r})\in  \End (\V)|_{U}. $$

Now, we define the most important operators in this work. 
\begin{definition} \label{global1}
\begin{equation}{{Q:=\slashed{D}_{X}+\frac{i}{2}K'}\in \End (\Gamma(\V)),\quad\quad {L_{K}:=D^{X}_{K}+\frac{1}{2}dK'}}\in \End (\V).\end{equation}
\end{definition}
 Locally the operator Q is written on $U$ as follows
\begin{equation} \label{global1}
\begin{split}
Q|_{U}&=\psi^{i}_{0}D_{e_{i}}+\sum_{r\in\mathcal{A}}\sum_{k=1}^{dimN_{r}}\psi^{k}_{-r}i(\partial_{z^{k}_{-r}}+\frac{r}{2}z^{k}_{r})+\psi^{k}_{r}i(\partial_{z^{k}_{r}}-\frac{r}{2}z^{k}_{-r})\\
&=\psi^{i}_{0}D_{e_{i}}+\sum_{r\in\mathcal{A}}\sum_{k=1}^{dimN_{r}}\psi^{k}_{r}\alpha^{k}_{-r}+\psi^{k}_{-r}\alpha^{k}_{r},
\end{split}
\end{equation}
analogously,  the operator $L_{K}$ is written on $U$ as follows
\begin{align*}L_{K}|_{U}&=\frac{1}{2}c_{1}+\sum_{r\in\mathcal{A}}\sum_{k=1}^{d_{r}}\beta^{k}_{-r}\beta^{k}_{r}-\alpha^{k}_{-r}\alpha^{k}_{r}-r\psi^{k}_{-r}\psi^{k}_{r}.
\end{align*}
We finish this subsection introducing the notation $L_{\alpha}:=\sum_{r\in\mathcal{A}}\sum_{k=1}^{d_{r}}\alpha^{k}_{-r}\alpha^{k}_{r}$, $L_{\beta}:=\sum_{r\in\mathcal{A}}\sum_{k=1}^{d_{r}}\beta^{k}_{-r}\beta^{k}_{r}$ and  $L_{\psi}=\sum_{r\in\mathcal{A}}\sum_{k=1}^{d_{r}}r\psi^{k}_{-r}\psi^{k}_{r}$, hence
\begin{equation}\label{c1}L_{K}=\frac{1}{2}c_{1}+L_{\beta}-L_{\alpha}-L_{\psi}.
\end{equation}

\subsection{Index and supersymmetry}\label{sec3.4}

We define an Hermitian product on the vector space of global sections $\Gamma(\V)$ as follows  
$$\langle v, w\rangle :=\int_{M}(v(x),w(x))dM,\qquad  v,w\in \Gamma(\V).$$
Where $dM$ is the volume form coming form the Riemannian metric on M. From Lemma \ref{ad}, the operator defined by $\slashed{D}_{N}:=Q-\slashed{D}_{M}$ is self adjoint, moreover $\slashed{D}_{M}$ is a twisted Dirac operator and its self adjoinest follows from the divergence for vector fields. We obtain the following proposition

\begin{proposition}\label{i} The operator $Q$ is self adjoint i.e.
$$\langle Qv,w\rangle =\langle v,Qw\rangle ,\qquad v,w\in \Gamma(\V).$$
\end{proposition}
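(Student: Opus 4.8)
The plan is to write $Q = \slashed{D}_M + \slashed{D}_N$, where $\slashed{D}_N := Q - \slashed{D}_M = \tfrac{i}{2}K' + \sum_{r}\sum_k\bigl(\psi^k_{-r}\,i\partial_{z^k_{-r}} + \psi^k_r\,i\partial_{z^k_r}\bigr)$ collects the terms involving the normal-direction operators $\partial_{z^k_{\pm r}}$, $z^k_{\pm r}$, $\psi^k_{\pm r}$ (the ``algebraic'' or fiberwise part), and $\slashed{D}_M = \psi^i_0 D^M_{e_i}$ is the honest twisted Dirac operator on $M$. It suffices to show each summand is self adjoint for $\langle\cdot,\cdot\rangle$. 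First I would treat $\slashed{D}_N$: this is a bundle endomorphism (no derivatives along $M$), so $\langle \slashed{D}_N v, w\rangle = \int_M (\slashed{D}_N v, w)\,dM$ and self adjointness reduces to the fiberwise identity $(\slashed{D}_N v, w) = (v, \slashed{D}_N w)$. That identity is exactly what Proposition \ref{ad} and the adjointness relations for $\psi_0$, together with $\partial$, $i_\cdot$, $\wedge$, are designed to give: writing $\slashed{D}_N$ in terms of the operators $a_r, b_r, \psi_r$ (or directly in terms of $\partial_{z^k_{\pm r}}, \psi^k_{\pm r}$) and using $(a_n v, w) = (v, a_{\bar n} w)$ etc., the Hermitian conjugate of each monomial $\psi^k_{-r}\,i\partial_{z^k_{-r}}$ is precisely the corresponding monomial in $\slashed{D}_N$ — one checks the $i$'s and the conjugation on indices bookkeep correctly, and the $K'$ term is self adjoint for the same reason (it is built from $z^k_{\pm r}$ and $\psi^k_{\mp r}$, which are mutually adjoint). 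I would state this as a lemma and relegate the index-by-index check to a one-line computation.

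Second, for $\slashed{D}_M = \psi^i_0 D^M_{e_i}$, I would invoke the standard fact that a twisted Dirac operator built from a metric connection compatible with the Hermitian structure and a skew-adjoint Clifford action is formally self adjoint. Concretely: the connection $D^M$ of \eqref{connection} is compatible with $(\cdot,\cdot)$ (Section \ref{sec3.2.1}, in particular Proposition \ref{sscone}), and $\psi_0$ satisfies $(\psi_0(x)v,w) = -(v,\psi_0(x)w)$ by \eqref{clifzero}. Then for $v,w \in \Gamma(\V)$, expand $\langle \slashed{D}_M v, w\rangle - \langle v, \slashed{D}_M w\rangle$ in a local orthonormal frame, use compatibility of $D^M$ to move the derivative off $v$ onto $w$ and onto the pairing, and use skew-adjointness of $\psi^i_0$; the leftover terms assemble into the divergence of the vector field $X \mapsto (\psi_0(X)v, w)$ — more precisely, one gets $\int_M \operatorname{div}(Z)\,dM$ for the vector field $Z$ metrically dual to the $1$-form $x \mapsto (\psi_0(x^\sharp)v, w)$ — which integrates to zero on the closed manifold $M$ by the divergence theorem. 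The only subtlety is the torsion-free/metric property of $\nabla^M$ entering when one commutes $D^M_{e_i}$ past the frame; since $M$ is closed this is the textbook argument (cf. \cite{lawson2016spin}), so I would cite it rather than reproduce it.

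The main obstacle — really the only place care is needed — is the bookkeeping for $\slashed{D}_N$: one must verify that the $\mathbb{Z}_2$-grading signs in the graded commutators and the factors of $i$ do not spoil the cancellation, i.e.\ that $\slashed{D}_N$ is genuinely equal to its own formal adjoint rather than, say, its negative. This is controlled entirely by Proposition \ref{ad} and the companion relations, so the risk is only clerical; I would organize the proof so that this reduces to observing that under $v \mapsto$ (adjoint of the coefficient operator) each term $\psi^k_{\pm r}\,i\partial_{z^k_{\pm r}}$ maps to itself. Combining the two lemmas, $\langle Qv,w\rangle = \langle \slashed{D}_M v, w\rangle + \langle \slashed{D}_N v, w\rangle = \langle v, \slashed{D}_M w\rangle + \langle v, \slashed{D}_N w\rangle = \langle v, Qw\rangle$, which is the claim.
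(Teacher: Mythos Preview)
Your proposal is correct and follows exactly the approach the paper takes: the paper's argument (given in the sentence immediately preceding the proposition rather than in a separate proof environment) is to split $Q=\slashed{D}_{M}+\slashed{D}_{N}$, observe that $\slashed{D}_{N}$ is self adjoint by Proposition~\ref{ad}, and that $\slashed{D}_{M}$ is a twisted Dirac operator whose self adjointness follows from the divergence theorem. Your write-up is more detailed than the paper's one-line justification, but the decomposition, the key inputs (Proposition~\ref{ad}, compatibility of $D^{M}$, skew-adjointness of $\psi_{0}$, divergence theorem on closed $M$), and the overall logic are identical.
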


We are motivated by  \eqref{equiin} to define \emph{the analytic equivariant index} of $Q$ as follows. 
$$\Ind (Q,q):=\Tr (q^{L_{K}}|_{\ker (Q|_{\V^{+}})})-\Tr (q^{L_{K}}|_{\ker (Q|_{\V^{-}})}), $$
 where $q=e^{i\theta}\in\mathbb{C}^{\times}$. 

Now, the following lemmas will be useful to prove that the index above is equal to the right hand side of \eqref{ASS}. 

Note that the operator $q^{L_{K}}$ is diagonal on $\V$, in particular we have 
$$q^{L_{K}}S(N)=q^{\frac{1}{2}c_{1}}q^{L_{\beta}}S(N)=q^{\frac{1}{2}c_{1}}\otimes_{r\in\mathcal{A}}S_{q^{r}}(N_{r}),$$
$$q^{L_{K}}S(\bar{N})=q^{\frac{1}{2}c_{1}}q^{-L_{\alpha}}S(\bar{N})=q^{\frac{1}{2}c_{1}}\otimes_{r\in\mathcal{A}}S_{q^{-r}}(\bar{N}_{r}), $$ 
$$q^{L_{K}}\wedge^{*}(\bar{N})=q^{\frac{1}{2}c_{1}}q^{-L_{\psi}}\wedge^{*}(\bar{N})=q^{\frac{1}{2}c_{1}}(\otimes_{r\in\mathcal{A}}\wedge^{\even}_{q^{-r}}(\bar{N}_{r})-\otimes_{r\in\mathcal{A}}\wedge^{\odd}_{q^{-r}}(\bar{N}_{r})). $$ 

\begin{lemma} \label{operalg}The operators $Q$ and $L_{K}$ satisfy the following relations

\emph{(i)}\;  $[Q,L_{K}]=0$,

\emph{(ii)}\;  $Q^{2}=\slashed{D}^{2}_{M}+L_{\alpha}+L_{\psi}.$
\end{lemma}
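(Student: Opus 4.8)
\textbf{Proof proposal for Lemma \ref{operalg}.}

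The plan is to verify both identities by working in the local frame introduced before Definition \ref{global1}, using the algebra of the operators $\alpha^k_r,\beta^k_r,\psi^k_r,\psi^i_0$ established in Propositions \ref{ad1} and \ref{ad}, together with the local expressions
$Q|_U=\psi^i_0 D_{e_i}+\sum_{r\in\mathcal{A}}\sum_k \psi^k_r\alpha^k_{-r}+\psi^k_{-r}\alpha^k_r$ and
$L_K=\tfrac12 c_1+L_\beta-L_\alpha-L_\psi$. For part (i), since $c_1$ is a constant and $[Q,\cdot]$ is a derivation with respect to the graded commutator, it suffices to check $[Q,L_\beta-L_\alpha-L_\psi]=0$. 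The key observation is that the operators $\beta^k_r$ commute with all the $\alpha$'s and $\psi$'s that appear in $Q$ (they act on the ``$S(N)\otimes S(\bar N)$-factor'' in a complementary way), while $D^M_{e_i}$ is compatible with the Hermitian connection on the normal bundles, so it acts as a derivation annihilating the combination $L_\beta-L_\alpha-L_\psi$ when parallel frames are used; the remaining terms $[\psi^k_r\alpha^k_{-r},L_\alpha+L_\psi]$ must be shown to cancel against $[\psi^k_{-r}\alpha^k_r, L_\alpha+L_\psi]$, which reduces to the standard fact that the Dirac-type operator built from a Clifford action and an oscillator commutes with the associated number operator. Concretely I would expand each bracket using $[\alpha_n,\alpha_m]=r(m,n)\delta_{r,-l}$ and $[\psi_n,\psi_m]=(m,n)\delta_{r,-l}$ and collect terms.

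For part (ii), I would square the local expression for $Q$. Write $Q=\slashed{D}_M + T$ with $T:=\sum_{r,k}\psi^k_r\alpha^k_{-r}+\psi^k_{-r}\alpha^k_r=\slashed{D}_N$. Then $Q^2=\slashed{D}_M^2 + \slashed{D}_M T + T\slashed{D}_M + T^2$. The cross terms $\slashed{D}_M T + T\slashed{D}_M$ vanish: $\slashed{D}_M$ involves only $\psi^i_0$ and $D^M_{e_i}$, the $\psi^i_0$ anticommute with the $\psi^k_{\pm r}$ (they live in different Clifford/oscillator factors, so the graded commutator with the connection picks up the parallel transport of the normal frame, which is zero in a synchronous frame), so one is left with $D^M_{e_i}$ hitting the connection-parallel coefficients of $T$, giving zero. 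For $T^2$, expand: the Clifford-type relations $[\psi^k_r,\psi^l_{-l'}] $ and $[\alpha^k_{-r},\alpha^l_{l'}]=r(\cdot,\cdot)\delta$ combine so that $T^2=\sum_{r,k}\big(\alpha^k_{-r}\alpha^k_r\cdot(\text{anticommutator of }\psi\text{'s})+\ldots\big)$; the surviving diagonal piece is exactly $\sum_{r,k}(\alpha^k_{r}\alpha^k_{-r}+ r\,\psi^k_{r}\psi^k_{-r}\text{-type terms})$, i.e. $L_\alpha+L_\psi$ after reordering using the commutation relations and absorbing the resulting constants (the $c_1$ that could appear cancels, because the $\psi$-reordering constant is offset by the $\alpha$-reordering constant — this is the analogue of the vanishing of the zero-point energy built into Definition \ref{oscilator}).

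The main obstacle I anticipate is bookkeeping the constants when normal-ordering $T^2$: one must confirm that the additive scalar produced by commuting $\psi^k_r$ past $\psi^k_{-r}$ and $\alpha^k_r$ past $\alpha^k_{-r}$ sum to zero rather than to some multiple of $c_1$, which is precisely what makes $Q^2$ equal $\slashed{D}_M^2+L_\alpha+L_\psi$ on the nose (with no shift) — this is where the specific factors $\tfrac{r}{2}$ in Definition \ref{oscilator} and the rescaled Hermitian product \eqref{Innerss} are doing their work. A secondary subtlety is justifying that the cross term $\slashed{D}_M\slashed{D}_N+\slashed{D}_N\slashed{D}_M$ vanishes globally and not merely pointwise in a special frame; this follows because $D^M$ is the tensor-product connection \eqref{connection} and the bundle maps $\psi^k_{\pm r},\alpha^k_{\pm r}$ are parallel sections of $\End(\V)$ with respect to it (the connections \eqref{con} on $N_r,\bar N_r$ were used to build all the relevant endomorphisms), so $[D^M_{e_i},\psi^k_{\pm r}\alpha^k_{\mp r}]=0$ as operators, and the remaining $\psi^i_0$-part anticommutes term by term.
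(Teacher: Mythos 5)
Your overall strategy is the same as the paper's: split $Q=\slashed{D}_M+\slashed{D}_N$, work in a local orthonormal frame, verify the cross term $\slashed{D}_M\slashed{D}_N+\slashed{D}_N\slashed{D}_M$ vanishes, compute $\slashed{D}_N^2$ from the oscillator relations, and check $[Q,L_\beta-L_\alpha-L_\psi]=0$ bracket by bracket. The observation that the reordering constants from the $\alpha$'s and the $\psi$'s cancel is exactly the ``$+r-r=0$'' step in the paper's expansion of $\slashed{D}_N^2$, and the observation that the $\beta$'s decouple from everything in $Q$ is also used there.

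However, your justification for the vanishing of the $D^M$-commutator terms has a genuine gap. You assert that the individual bundle maps $\psi^k_{\pm r},\alpha^k_{\pm r}$ are ``parallel sections of $\End(\V)$,'' and elsewhere you appeal to a synchronous frame. Neither works as stated. The operators $\psi^k_{\pm r},\alpha^k_{\pm r}$ are defined relative to a chosen local orthonormal frame $\{z^k_r\}$ of $N_r$; under $D^M$ they transform like components of a tensor, so individually $[D^M_{e_i},\psi^j_r]=-w^j_{ik}(r)\psi^k_r\neq 0$ in general. A synchronous frame only kills the connection coefficients at a single point, and since part (ii) involves $(D^M)^2$ you would need a second-order normal frame and still have to argue the frame-dependence drops out — which is the content to be proved, not an input. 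The paper instead expands the commutators with explicit connection coefficients $w^k_{ij}(r)$ and shows that the two resulting sums cancel after relabeling the contracted indices (a trace argument), e.g.
\begin{equation*}
\psi^i_0\bigl(-w^j_{ik}(r)\psi^k_r\bigr)\alpha^j_{-r}+\psi^i_0\,\psi^j_r\bigl(w^k_{ij}(r)\alpha^k_{-r}\bigr)=0 .
\end{equation*}
Your intuition is salvageable if you replace ``the individual $\psi^k_r,\alpha^k_{-r}$ are parallel'' by ``the contracted combination $\sum_k\psi^k_r\alpha^k_{-r}$ (and likewise $L_\alpha,L_\beta,L_\psi$) is a frame-independent, hence parallel, endomorphism of $\V$'' — that is precisely what the index-relabeling computation establishes. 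As written, though, the step as you phrase it would fail, so I would replace the synchronous-frame/parallel-section justification with the explicit cancellation.
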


\begin{proof} It is enough to proof  (i) and (ii) on an open set on $M$. As before, let $\{z^{k}_{r}\}_{k=1,r\in\mathcal{A}}^{d_{r}}$ be an orthonormal basis of $N=\oplus_{r\in\mathcal{A}}N_{r}$ on $U$, and its complex conjugates $\{\bar{z}^{k}_{r}\}_{k=1,r\in-\mathcal{A}}^{d_{r}}$ a basis of $\bar{N}=\oplus_{r\in\mathcal{A}}\bar{N}_{r}=\oplus_{r\in\mathcal{A}}{N}_{-r}$ on $U$, and $\{e_{i}\}_{i=1}^{2l}$ an orthonormal basis of $TM$ on $U$

First, we prove (ii). Let $\nabla^{X}$ be the connection on $N_{r}$ and $\bar{N}_{r}$ defined in \eqref{con}, then on the local basis we have  
 $$\nabla^{X}_{e_{i}}z^{j}_{r}=w_{ij}^{k}(r)z^{k}_{r}\,,\qquad \nabla^{X}_{e_{i}}\bar{z}^{j}_{r}=w_{ij}^{k}(r)\bar{z}^{k}_{r},$$
 where $w_{ij}^{k}(r)\in \RR$. Note that these connections were used to define the connections in \eqref{con2} and $D^{M}:\Gamma(\V)\rightarrow \Gamma(\V)$ in \eqref{connection}. Then, using the notation in \eqref{3.12}, we have that the commutator satisfies the following algebra $[D^{M}_{e_{i}},z^{j}_{r}]=w_{ij}^{k}(r)z^{k}_{r}$, $[D^{M}_{e_{i}},\psi^{j}_{r}]=-w_{ik}^{j}(r)\psi^{k}_{r}$,  $[D^{M}_{e_{i}},\partial_{z^{j}_{r}}]=-w_{ik}^{j}(r)\partial_{z^{k}_{r}}$ and $[D^{M}_{e_{i}},\psi^{j}_{-r}]=w_{ij}^{k}(r)\psi^{k}_{-r}$. And considering their conjugates we have $[D^{M}_{e_{i}},\bar{z}^{j}_{r}]=w_{ij}^{k}(r)\bar{z}^{k}_{r}$ and $[D^{M}_{e_{i}},\partial_{\bar{z}^{j}_{r}}]=-w_{ik}^{j}(r)\partial_{\bar{z}^{k}_{r}}$. Therefore, we have the following relations for the operators in Definition \ref{oscilator}
$$[D^{M}_{e_{i}},\psi^{j}_{r}]=-w_{ik}^{j}(r)\psi^{k}_{r},\quad\quad [D^{M}_{e_{i}},\psi^{j}_{-r}]=w_{ij}^{k}(r)\psi^{k}_{-r},$$
$$[D^{M}_{e_{i}},\alpha^{j}_{r}]=-w_{ik}^{j}(r)\alpha^{k}_{r},\quad\quad [D^{M}_{e_{i}},\alpha^{j}_{-r}]=w_{ij}^{k}(r)\alpha^{k}_{-r},$$
$$[D^{M}_{e_{i}},\beta^{j}_{r}]=-{w}_{ik}^{j}(r)\beta^{k}_{r},\quad\quad [D^{M}_{e_{i}},\beta^{j}_{-r}]={w}_{ij}^{k}(r)\beta^{k}_{-r}.$$
This implies the following relations 
$$[\slashed{D}_{M},\slashed{D}_{N}]=\psi^{i}_{0}[D^{M}_{e_{i}},\psi^{j}_{r}]\alpha^{j}_{-r}-\psi^{j}_{r}\psi^{i}_{0}[D^{M}_{e_{i}},\alpha^{j}_{-r}]=\psi^{i}_{0}(-w^{j}_{ik}(r)\psi^{k}_{r})\alpha^{j}_{-r}+\psi^{i}_{0}\psi^{j}_{r}(w_{ij}^{k}(r)\alpha^{k}_{-r})=0,$$
and
$$\slashed{D}_{N}^{2}=[\slashed{D}_{N},\slashed{D}_{N}]=2\sum_{r\in\mathcal{A}}\sum_{k=1}^{\dim N_{r}}\alpha^{k}_{-r}\alpha^{k}_{r}+r\psi^{k}_{-r}\psi^{k}_{r}+r-r=2L_{\alpha}+2L_{\psi}.$$
This proves (ii).
 
Now we prove (i), we have the following relations
$$[\slashed{D}_{N},L_{\alpha}+L_{\psi}]=[\psi_{r}\alpha_{-r},\alpha_{n}\alpha_{-n}+n\psi_{n}\psi_{-n}]=-r\psi_{r}\alpha_{-r}+r\psi_{r}\alpha_{-r}=0,\quad\quad [\slashed{D}_{N},L_{\beta}]=0,$$ 
$$[\slashed{D}_{M},L_{\alpha}]=\psi^{i}_{0}(-w_{ik}^{j}(n)\alpha^{k}_{n})\alpha^{j}_{-n}+\psi^{i}_{0}\alpha^{j}_{n}(w^{k}_{ij}(n)\alpha^{k}_{-n})=0,$$
$$[\slashed{D}_{M},L_{\psi}]=\psi^{i}_{0}(-w_{ik}^{j}(n)\psi^{k}_{n})\psi^{j}_{-n}+\psi^{i}_{0}\psi^{j}_{n}(w^{k}_{ij}(n)\psi^{k}_{-n})=0,$$
$$[\slashed{D}_{M},L_{\beta}]=\psi^{i}_{0}(-{w}_{ik}^{j}(n)\beta^{k}_{n})\beta^{j}_{-n}+\psi^{i}_{0}\beta^{j}_{n}({w}^{k}_{ij}(n)\beta^{k}_{-n})=0,$$
hence $[Q,L_{K}]=0$.
\end{proof}

Additionally, we have the following lemma  

\begin{lemma}[susy]\label{susy}
{$\ker (Q)\subset \V':=\bigtriangleup (M)\otimes \sqrt{\det N}\otimes S(N)\subset \V$}
\end{lemma}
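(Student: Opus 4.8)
The plan is to use the supersymmetry algebra of Lemma \ref{operalg}: since $Q^{2}=\slashed{D}_{M}^{2}+L_{\alpha}+L_{\psi}$, any element $v\in\ker(Q)$ satisfies $Q^{2}v=0$, hence $\langle Q^{2}v,v\rangle=0$. Using that $Q$ is self adjoint (Proposition \ref{i}) this gives $\|\slashed{D}_{M}v\|^{2}+\langle L_{\alpha}v,v\rangle+\langle L_{\psi}v,v\rangle=0$. The first step is therefore to show that each of the three summands is nonnegative as a quadratic form on $\Gamma(\V)$, so that the vanishing of the sum forces each one to vanish separately on $\ker(Q)$.

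First I would treat $\slashed{D}_{M}$: it is a genuine twisted Dirac operator on $M$ (Definition \ref{def3.13}) and is self adjoint by the divergence theorem, so $\langle \slashed{D}_{M}^{2}v,v\rangle=\|\slashed{D}_{M}v\|^{2}\ge 0$. Next, using Proposition \ref{ad}, the operators $a_{r}$ and $a_{-r}$ (equivalently $\alpha^{k}_{r},\alpha^{k}_{-r}$ locally) are mutually adjoint up to the factors appearing in the Heisenberg relations of Proposition \ref{ad1}; so $L_{\alpha}=\sum_{r,k}\alpha^{k}_{-r}\alpha^{k}_{r}$ is, fiberwise, a sum of operators of the form (adjoint)$\circ$(operator) and hence a nonnegative self adjoint endomorphism of $\V$. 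The same computation with $\psi_{r},\psi_{-r}$ and the relation $[\psi_{n},\psi_{m}]=(m,n)\delta_{r,-l}$ shows $L_{\psi}=\sum_{r,k}r\,\psi^{k}_{-r}\psi^{k}_{r}$ is nonnegative (here $r>0$ since $r\in\mathcal{A}$). Consequently $\langle L_{\alpha}v,v\rangle\ge 0$ and $\langle L_{\psi}v,v\rangle\ge 0$ pointwise, and from $\langle Q^{2}v,v\rangle=0$ I conclude $L_{\alpha}v=0$ and $L_{\psi}v=0$ for $v\in\ker(Q)$.

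Finally I would identify the common kernel of $L_{\alpha}$ and $L_{\psi}$ inside $\V$. Fiberwise, $\V=\bigtriangleup(M)\otimes\sqrt{\det N}\otimes\wedge^{*}(\bar N)\otimes S(N)\otimes S(\bar N)$, and $L_{\psi}$ acts only on the $\wedge^{*}(\bar N)$ factor while $L_{\alpha}$ acts only on the $S(\bar N)$ factor (the $\alpha^{k}_{r}$ being essentially the harmonic-oscillator annihilation operators in the $\bar N$ variables). The standard harmonic-oscillator / Clifford computation shows $\ker(L_{\psi})=\wedge^{0}(\bar N)=\CC$ and $\ker(L_{\alpha})=S^{0}(\bar N)=\CC$ in those factors, so their intersection is $\bigtriangleup(M)\otimes\sqrt{\det N}\otimes S(N)=\V'$. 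Hence $\ker(Q)\subset\V'$.

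The main obstacle is the positivity step: one must be careful that $L_{\alpha}$ and $L_{\psi}$ really are nonnegative as fiberwise endomorphisms for the specific Hermitian product \eqref{Innerss} chosen on $S(N)\otimes S(\bar N)$ (the $1/r^{k}$ normalization is exactly what makes $a_{-n}$ the honest adjoint of $a_{n}$, cf. Proposition \ref{ad}), and that passing from $\langle Q^{2}v,v\rangle=0$ to the pointwise vanishing of $L_{\alpha}v$ and $L_{\psi}v$ is legitimate — this uses that all three quadratic forms are simultaneously nonnegative, so no cancellation can occur. The rest is the elementary identification of oscillator vacua.
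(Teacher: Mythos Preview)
Your proposal is correct and follows essentially the same route as the paper: self-adjointness of $Q$ plus the Weitzenb\"ock-type identity $Q^{2}=\slashed{D}_{M}^{2}+L_{\alpha}+L_{\psi}$, together with nonnegativity of each summand, forces $L_{\alpha}v=L_{\psi}v=0$ on $\ker(Q)$, and the vacuum of those number operators is exactly $\V'$. One small imprecision: you write that ``$L_{\alpha}$ acts only on the $S(\bar N)$ factor'', but in the shifted picture $\alpha^{k}_{-r}=i(\partial'_{z^{k}_{r}}-r\,z^{k}_{-r})$ touches both $S(N)$ and $S(\bar N)$; what is true (and what you use) is that the \emph{annihilation} piece $\alpha^{k}_{r}=i\,\partial'_{\bar z^{k}_{r}}$ is a pure $\bar N$-derivative, so $\ker L_{\alpha}$ in the $S(N)\otimes S(\bar N)$ factor is $S(N)\otimes\CC$, not merely $\CC$---which is precisely the $S(N)$ appearing in $\V'$.
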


\begin{proof}
As we mentioned above $L_{\alpha}$ and $L_{\psi}$ are diagonal endomorphisms in $\V$, in particular the eigenvalues of $L_{\alpha}$ and $L_{\psi}$ are strictly positive on $\wedge^{k}\bar{N}\otimes S(\bar{N})$. 
Hence, for any element $v\in \Gamma(V)$ we have that
$$\langle Qv,Qv\rangle =\langle Q^{2}v,v\rangle =\langle \slashed{D}_{M}v,\slashed{D}_{M}v\rangle +\langle L_{\alpha}+L_{\psi}v,v\rangle \geq \langle L_{\alpha}+L_{\psi}v,v \rangle$$
where we used Proposition \ref{i} and Lemma \ref{operalg}. 
\end{proof}

Finally, we have the theorem that express the index in terms of a topological expression
    
    \begin{theorem}\label{index}
    $$ {\Ind (Q,q)=q^{\frac{c_{1}}{2}}\int_{M} A(M)\ch ( \sqrt{\det N}\otimes_{r\in\mathcal{A}} S_{q^{r}}(N_{r}))}$$
\end{theorem}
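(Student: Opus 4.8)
The strategy is to reduce the analytic index $\Ind(Q,q)$ to a genuine (twisted) equivariant Dirac index on $M$ and then apply the Atiyah--Singer index theorem \eqref{TAS}. First I would invoke Lemma \ref{susy} to restrict attention to the subbundle $\V'=\bigtriangleup(M)\otimes\sqrt{\det N}\otimes S(N)$: since $\ker(Q)\subset\V'$, the index is unchanged if we replace $\V$ by $\V'$ throughout. On $\V'$ the normal oscillator contributions $L_\alpha$ and $L_\psi$ vanish, so by Lemma \ref{operalg}(ii) the operator $Q$ restricted to $\V'$ reduces to the twisted Dirac operator $\slashed{D}_M$ of Definition \ref{def3.13}, now acting on $\bigtriangleup(M)\otimes\bigl(\sqrt{\det N}\otimes S(N)\bigr)$, a genuine finite-rank-per-degree twisted Dirac operator on the compact spin manifold $M$.

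Next I would use Lemma \ref{operalg}(i), $[Q,L_K]=0$, to decompose $\ker(Q)$ into the simultaneous eigenspaces of $L_K$. Because $q^{L_K}$ is diagonal on $\V$ with the explicit action recorded just before Lemma \ref{operalg} --- namely $q^{L_K}$ acts on $\sqrt{\det N}\otimes S(N)$ as $q^{\frac12 c_1}\otimes_{r\in\mathcal A}S_{q^r}(N_r)$ --- the weighted supertrace $\Tr(q^{L_K}|_{\ker Q|_{\V^+}})-\Tr(q^{L_K}|_{\ker Q|_{\V^-}})$ becomes, degree by degree in the $S(N)$-grading, the ordinary index of $\slashed{D}_M$ twisted by the corresponding graded piece of $q^{\frac12 c_1}\otimes_{r\in\mathcal A}S_{q^r}(N_r)$. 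Summing the generating series (which converges as a formal power series in $q$), one gets
\[
\Ind(Q,q)=q^{\frac12 c_1}\sum_{k\ge 0}\Ind\bigl(\slashed{D}_M\otimes(\sqrt{\det N}\otimes S(N))_k\bigr)q^{\bullet}
= q^{\frac12 c_1}\,\Ind\bigl(\slashed{D}_M\otimes(\sqrt{\det N}\otimes_{r\in\mathcal A}S_{q^r}(N_r))\bigr),
\]
where the last expression is interpreted as a formal power series of honest twisted Dirac indices. Here one must check that the grading by which $L_K$ acts on $S(N)$ matches the grading by which the parameter $q$ tracks $S_{q^r}(N_r)$; this is exactly the content of the displayed formula $q^{L_K}S(N)=q^{\frac12 c_1}\otimes_{r\in\mathcal A}S_{q^r}(N_r)$, which follows from $L_K=\frac12 c_1+L_\beta-L_\alpha-L_\psi$ and the fact that $L_\beta$ acts on the degree-$k$ part of $S(N_r)$ by $rk$.

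Finally, applying the twisted Atiyah--Singer index theorem \eqref{TAS} to each twisting bundle and re-summing the Chern characters (using $\ch(S_{q^r}(N_r))=\sum_k q^{rk}\ch(\Sym^k N_r)$ termwise, and multiplicativity of $\ch$ over tensor products), I obtain
\[
\Ind(Q,q)=q^{\frac{c_1}{2}}\int_M A(M)\,\ch\bigl(\sqrt{\det N}\otimes_{r\in\mathcal A}S_{q^r}(N_r)\bigr),
\]
which is the claimed identity. The main obstacle I anticipate is the analytic bookkeeping needed to legitimize the decomposition of $\ker(Q)$ into $L_K$-eigenspaces and the interchange of the (infinite) sum over $S(N)$-degrees with the index/integral: one needs that for each fixed $q$-degree the relevant twisted bundle is finite rank and the twisted Dirac operator is elliptic and Fredholm on $M$ (so that Lemma \ref{susy} genuinely collapses the infinite-dimensional fibre problem to a sequence of finite-dimensional ones), and that the resulting series is the correct formal expansion. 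Verifying that $Q|_{\V'}$ is exactly $\slashed{D}_M$ with the stated twist --- i.e. that no cross-terms from $K'$ survive on $\V'$ --- is the one computational point that must be done carefully, but it follows directly from the local formula \eqref{global1} for $Q$ together with the vanishing of $\alpha^k_{\pm r}$-type terms on $S(N)$.
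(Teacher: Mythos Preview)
Your proposal is correct and follows essentially the same route as the paper: use Lemma~\ref{susy} to restrict to $\V'$, observe that $Q|_{\V'}=\slashed{D}_M|_{\V'}$, use $[Q,L_K]=0$ together with the diagonal action of $q^{L_K}$ on $S(N)$ to rewrite the weighted supertrace as a generating series of ordinary twisted Dirac indices, and then apply \eqref{TAS}. One small correction: the identification $Q|_{\V'}=\slashed{D}_M|_{\V'}$ does \emph{not} follow from Lemma~\ref{operalg}(ii) (knowing $Q^2=\slashed{D}_M^2$ on a subspace does not give $Q=\slashed{D}_M$ there); it follows directly from the local formula \eqref{global1}, since on $\V'$ the contraction $\psi^k_r$ and the annihilator $\alpha^k_r$ both kill the vacuum in $\wedge^*\bar N\otimes S(\bar N)$ --- exactly as you note in your final paragraph, and exactly as the paper argues.
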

\begin{proof}
From Theorem \ref{susy} we have that $\ker (Q)\subset \V'$ and by definition of $Q$ (see \eqref{global1}) and by definition of $\slashed{D}_{M}$ (Definition \ref{def3.13}) we have that $Q|_{\V'}=\slashed{D}_{M}|_{\V'}$, therefore 
$$\Ind (Q,q)=\Ind (\slashed{D}_{M}|_{\V'},q)=\Tr (q^{L_{K}}|_{\ker (\slashed{D}_{M}|_{\V'_{+}})})-\Tr (q^{L_{K}}|_{\ker (\slashed{D}_{M}|_{\V'_{-}} ) }).$$
Now, we have that 
$$q^{L_{K}}(\bigtriangleup (M)\otimes \sqrt{\det N}\otimes S(N))=q^{\frac{c_{1}}{2}}\bigtriangleup (M)\otimes \sqrt{\det N}\otimes_{r\in\mathcal{A}} S_{q^{r}}(N_{r}).$$
And from Theorem $\ref{operalg}$ (i), the eigenvalues of $L_{K}$ commute with $\slashed{D}_{M}$ then 
$$ \Ind (\slashed{D}_{M}|_{\V'},q)=q^{\frac{c_{1}}{2}}\Ind (\slashed{D}_{M}\otimes \sqrt{\det N}\otimes_{r\in\mathcal{A}} S_{q^{r}}(N_{r})).$$
Finally, the theorem follows from the Atiyah-Singer index theorem for twisted bundles \eqref{TAS}.
\end{proof}

\section{Localization, Infinite dimensional case}\label{sec4}
In this section we consider the infinite dimensional case,  $X=\L M$. We assume that $M$ is a  $2l$-dimensional, connected, Riemannian, spin manifold. 
\subsection{The Vector Bundle $\V_{R}$}\label{sec4.1}
Let $X=\L M$ be the loop space of $M$, on the loop space there is a natural $S^{1}$-action such that the set of fixed points is given by $M$. We work on the normal bundle of the manifold $M$ inside $X$, this bundle is defined as follows  
$$\mathcal{N}:= \bigoplus_{n>0} TM_{\mathbb{C}}=\bigoplus_{n>0}N_{n}, \quad\quad N_{n}=TM_{\mathbb{C}}.$$
Hence we are motivated by the finite dimensional case to define the vector space 
$$\V_{R}:=\bigtriangleup (M)\otimes_{n\in \mathbb{Z}_{+}}\wedge^{*}\bar{N}_{n}\otimes_{n\in \mathbb{Z}_{+}}S(N_{n})\otimes _{n\in \mathbb{Z}_{+}}S(\bar{N}_{n})\rightarrow M .$$
Unlike the finite dimensional case each one of the bundles $N_{n}$ are trivially complex, this in particular implies that $\det N_{n}$ is the trivial bundle for all $n>0$. Additionally, in this case we have that $N_{n}=\bar{N}_{n}= TM_{\mathbb{C}}$.  Now,  $\V_{R}=\V_{R}^{+}+\V_{R}^{-}$ is a $\mathbb{Z}_{2}$ graded bundle where
\begin{equation*}
\begin{split}
&\V^{+}_{R}=\left(\bigtriangleup^{+}(M)\otimes^{\even}_{n\in\mathbb{Z}_{+}}\wedge^{*}\bar{N}_{n}+ \bigtriangleup^{-}(M)\otimes^{\odd}_{n\in\mathbb{Z}_{+}}\wedge^{*}\bar{N}_{n} \right)\otimes_{n\in\mathbb{Z}_{+}} S(N_{n})\otimes_{n\in\mathbb{Z}_{+}} S(\bar{N}_{n}),\\
&\V^{-}_{R}=\left(\bigtriangleup^{-}(M)\otimes^{\even}_{n\in\mathbb{Z}_{+}}\wedge^{*}\bar{N}_{n}+ \bigtriangleup^{+}(M)\otimes^{\odd}_{n\in\mathbb{Z}_{+}}\wedge^{*}\bar{N}_{n}\right)\otimes_{n\in\mathbb{Z}_{+}} S(N_{n})\otimes_{n\in\mathbb{Z}_{+}} S(\bar{N}_{n})
\end{split}
\end{equation*}

\subsubsection{Hermitian product}\label{IP2}

The vector bundle $\V_{R}$ is defined as a tensor product in terms of other bundles, we define an Hermitian product on $\V_{R}$ considering Hermitian products for each one of the bundles which define $\V_{R}$, we proceed as in Section \ref{IP}.

First, as we already mentioned in the previous section, $M$ is a spin manifold then its spin bundle $\bigtriangleup (M)$ has an Hermitian product. 

Second, on $N_{n}=TM_{\mathbb{C}}$ for arbitrary $n>0$ one has a natural Hermitian metric defined as follows
\begin{equation}
(av,bu)=a\bar{b}g_{M}(u,v),\label{Rinner}
\end{equation}
where $a,b\in\mathbb{C}$, $u,v\in \Gamma(TM)$ and $g_{M}$ is the Riemannian metric on $M$. The Hermitian metric on $N_{n}$ is defined by $({n},{m})_{N_{n}}:=(n,m)$ where $n,m\in\Gamma(N_{n})$, and the Hermitian metric on $\bar{N}_{n}$ is given by $(\bar{n},\bar{m})_{\bar{N}_{n}}:=({m},{n})$ where $\bar{n},\bar{m}\in\Gamma(\bar{N}_{n})$. The Hermitian structure on $ \wedge^{*}N_{n}$ is defined by \eqref{Innerwedge} and the Hermitian structure on $S(N_{n})\otimes S(\bar{N}_{n})$ is defined by \eqref{Innerss}.

Finally, the Hermitian product on $\V_{R}$ is given by the tensor product of the inner products on $\bigtriangleup(M)$, $\wedge^{*}\bar{N}_{n}$ and $S(N_{n})\otimes S(\bar{N}_{n})$ for $n>0$. Therefore $\V_{R}$ is an Hermitian bundle, we denote the Hermitian product by $(\cdot,\cdot)$.

\subsection{Operators }\label{sec4.2}

In this section, we define the operators acting on the bundle $\V_{R}$.

\subsubsection{Connections} Now, we build an Hermitian connection on $\V_{R}$, once again we use that $\V_{R}$ is defined as a tensor product in terms of other bundles, we proceed as in Section \ref{sec3.2.1}.

First, like in the finite dimensional case on the bundle $\Gamma(\bigtriangleup(M))$ we have a connection $D^{\bigtriangleup}:\Gamma(\bigtriangleup(M))\rightarrow\Gamma(\bigtriangleup(M)\otimes T^{*}M)$, this connection is compatible with the Hermitian metric $\bigtriangleup(M)$.

Second, we obtain from the Levi-Civita connection $\nabla^{M}$ on $M$ the Hermitian connections
\begin{equation}\label{coninfi}
\begin{split}
&\nabla^{M}:\Gamma(N_{n})\rightarrow \Gamma(N_{n}\otimes   T^{*}M),\qquad \nabla^{M}:\Gamma(\bar{N}_{n})\rightarrow \Gamma(\bar{N}_{n}\otimes   T^{*}M),
\end{split}
\end{equation}
for $n>0$.

Third, we obtain analogously compatible connections like in \eqref{con2} for $n>0$, i.e. $D^{S}:  \Gamma(S(N_{n}))\rightarrow \Gamma(S( N_{n})\otimes   T^{*}M )$, 
$D^{\bar{S}}: \Gamma(S( \bar{N}_{n}))\rightarrow \Gamma(S(\bar{N}_{n})\otimes   T^{*}M)$ and
$D^{\wedge^{*}}:\Gamma(\wedge^{*}(\bar{N}_{n}))\rightarrow \Gamma(\wedge^{*}(\bar{N}_{n})\otimes   T^{*}M)$ with respect their Hermitian metrics.

Finally, from the tensor product of these connections, we obtain a connection on $\V_{R}$
\begin{equation}\label{dm2}
D^{M}:\Gamma(\V_{R})\rightarrow \Gamma(\V_{R}\otimes T^{*}M)
\end{equation} 
And this connection is compatible with the Hermitian metric $(\cdot,\cdot)$ on $\V_{R}$.
\subsubsection{Sheaves of operators}
All the definitions and lemmas in Section \ref{sheope} and section \ref{sec2.2.3}  works on $\V_{R}$. We restate the definition \ref{oscilator} considering that the maps are extended to $\V_{R}$, in this case the maps are defined on $N_{n}$ for arbitrary $n>0$.

\begin{definition} \label{oscilator2}
\begin{align*}
&\psi_{n}=i_{n}:{N}_{n}\rightarrow \End (\V_{R}),\quad\quad \quad\quad \quad\quad \quad\quad\quad\psi_{-n}=\wedge_{n}:\bar{N}_{n}\rightarrow \End (\V_{R}),\\
&a_{n}:=i(\bar{\partial}_{n}+\frac{n}{2}\cdot) : {N}_{n}\rightarrow \End (\V_{R}),\quad\quad a_{-n}:=i(\partial_{n}-\frac{n}{2}	\cdot ): \bar{N}_{n}\rightarrow \End (\V_{R}),\\
&b_{n}:=i(\partial_{n}+\frac{n}{2}\cdot) : \bar{N_{n}}\rightarrow \End (\V_{R}),\quad\quad b_{-n}:=i(\bar{\partial}_{n}-\frac{n}{2}\cdot) : {N_{n}}\rightarrow \End (\V_{R}).
\end{align*}
\end{definition}
Propositions \ref{ad1} and \ref{ad} are analogously satisfied. The Clifford algebra on $CL(TM)$ is extended to $\V_{R}$ as follows
\begin{equation}\label{cli2}
\psi_{0}:TM\rightarrow \End (\V_{R}).
\end{equation}
And we have that
$$(\psi_{0}(x)v,w)=-(v,\psi_{0}(x)w),\qquad x\in\Gamma(TM_{\mathbb{C}}),\quad v,w\in \Gamma(\V_{R}).$$

\subsection{Global operators}\label{sec4.3}

Now, we define the global operators. In order to make this section clear, we write after each definition the operator in a local basis on an open set $U\subset M$. Let  $\{z^{k}_{n}\}_{k=1}^{2l}$ an orthonormal basis of $\N=\oplus_{n>0}N_{n}$ on $U$, and its complex conjugates $\{\bar{z}^{k}_{n}\}_{k=1}^{2l}$ a basis of $\bar{\N}=\oplus_{n>0}\bar{N}_{n}$ on $U$. Then
\begin{equation}\label{alpha1}
\begin{split}
&\alpha^{k}_{n}:=i(\partial_{z^{k}_{-n}}+\frac{r}{2}z^{k}_{n})\in\End (\V_{R})|_{U},\quad\quad \alpha^{k}_{-n}:=i(\partial_{z^{k}_{n}}-\frac{r}{2}z^{k}_{-n})\in\End (\V_{R})|_{U}, \\
&\beta^{k}_{n}:=i(\partial_{z^{k}_{n}}+\frac{r}{2}z^{k}_{-n})\in\End (\V_{R})|_{U},\quad\quad \beta^{k}_{-n}:=i(\partial_{z^{k}_{-n}}-\frac{r}{2}z^{k}_{n})\in  \End (\V_{R})|_{U}. \\
&\psi^{k}_{n}\in i_{n}\in \End (\V_{R})|_{U},\quad\quad\quad\qquad\qquad \psi^{k}_{-n}\in \wedge_{n}\in \End (\V_{R})|_{U}. 
\end{split}
\end{equation}
We have the following version of Definition \ref{def3.13}.
\begin{definition}
The twisted Dirac operator on $\slashed{D}_{M}:\Gamma(\V_{R})\rightarrow \Gamma(\V_{R})$ is defined by the composition
\begin{equation}\slashed{D}_{M}:\Gamma(\V_{R})\xrightarrow{D^{M}}\Gamma (\V_{R}\otimes T^{*}M) \rightarrow \Gamma (\V_{R}\otimes TM)\xrightarrow{\psi_{0}} \Gamma ( \V_{R})\label{g1}\end{equation}
 where we used the connection $D^{M}$ in $\eqref{dm2}$, $\eqref{cli2}$ and the Riemannian metric on $M$ to identified $TM\simeq T^{*}M$.
 \end{definition}
 Locally this operator is written on $U$ as follows 
 $$\slashed{D}_{M}|_{U}=\psi^{i}_{0}D^{M}_{e_{i}}.$$

Because in this case we work with the infinite set $\mathbb{Z}_{+}$ instead of the finite set $\mathcal{A}$, some operators analogous to the operators in Section \ref{2.2.4} are not well defined. In order to solve this problem, we define the \emph{normal ordered product} as follows (In particular Definition \ref{4.1} is not well defined without the normal ordered product). Note that the sections of the vector bundle $\V_{R}$ on $U$ satisfies that 
$$(\V_{R})_{U}\cong \mathbb{C}[\psi^{k_{1}}_{-1},\psi^{k_{2}}_{-2},\cdots .]\otimes \mathbb{C}[\alpha^{j_{1}}_{-1},\alpha^{j_{2}}_{-2},\cdots .]\otimes \mathbb{C}[\beta^{i_{1}}_{-1},\beta^{i_{2}}_{-2},\cdots .]\otimes C^{\infty}(U),$$
where $k_{1},k_{2},\cdots .j_{1},j_{2},\cdots .i_{1}, i_{2}\in\{1,\cdots ,2l\}$. Using this local form of the vector space, the normal order product is defined as a composition of operators writing the {annihilation operators} $\alpha^{k}_{n}, \beta^{k}_{n}$  and $\psi^{k}_{n}$ $(n>0)$ to the right of the {creation operators}  $\alpha^{k}_{-n}, \beta^{k}_{-n}$  and $\psi^{k}_{-n}$. We use the symbol $:(-):$ to indicate the normal ordering. In particular, for $n >0$ and arbitrary $k,j \in\{1,\cdots ,2l\}$ we have that 
$$ :\alpha^{k}_{-n}\alpha^{j}_{n}: = :\alpha^{j}_{n}\alpha^{k}_{-n}: =\alpha^{k}_{-n}\alpha^{j}_{n},\quad   :\beta^{k}_{-n}\beta^{j}_{n}: = :\beta^{j}_{n}\beta^{k}_{-n}: = \beta^{k}_{-n}\beta^{j}_{n},\quad  :\psi^{k}_{-n}\psi^{j}_{n}: = :\psi^{j}_{n}\psi^{k}_{-n}: =\psi^{k}_{-n}\psi^{j}_{n}. $$

Note that the local definition above of the normal ordered product is well defined globally for $a_{n}, b_{-n}, \phi_{-n}: N_{n}\rightarrow \End (\V_{R})$ and $b_{n}, a_{-n}, \psi_{n}:: \bar{N}_{n}\rightarrow \End (\V_{R})$ for $n>0$.

The following definition is motivated by \eqref{global1}.

\begin{definition}\label{4.1}   The Ramond Dirac operator $Q_{R}:\Gamma(\V_{R})\rightarrow \Gamma(\V_{R})$ is defined as the normal ordered product of the composition 
\begin{equation*}
\Gamma(\V_{R})\xrightarrow{D^{M}+a_{n}}\Gamma(\V_{R}\otimes (T^{*}M\oplus_{n\in\mathbb{Z}_{+}} N_{n}^{*}\oplus  \bar{N}_{n}^{*}) )\rightarrow \Gamma(\V_{R}\otimes (TM\oplus_{n\in\mathbb{Z}_{+}} \bar{N}_{n}\oplus{N}_{n}))\xrightarrow{\psi_{0}+\psi_{n}} \Gamma(\V_{R}),
\end{equation*}
 where we use the Riemannian metric on $M$ to identified $TM\simeq T^{*}M$ and the Hermitian metric \eqref{Rinner} to identify $N^{*}_{r}\simeq \bar{N}_{r}$ and $\bar{N}^{*}_{r}\simeq N_{r}$.
\end{definition}
 Locally the operators $Q_{R}$ is written on $U$ as follows
  $${ Q_{R}|_{U}=\psi^{i}_{0}\nabla_{e_{i}}+\sum_{n\in\mathbb{Z}_{+}}\psi^{k}_{-n}\alpha^{k}_{n}+\alpha^{k}_{-n}\psi^{k}_{n}}.$$
 
 Now, we use the same method of definition of $K, K'$ and $dK'$ in Section \ref{2.2.4}, we build the linear maps considering local linear maps defined on a covering $\{U_{\alpha}\}_{\alpha\in I}$ of $M$, hence we define 
$$ L_{\alpha}|_{U_{\alpha}}=\sum_{n\in\mathbb{Z}_{+}}\sum_{0\leq k\leq 2l}\alpha^{k}_{-n}\alpha^{k}_{n},\qquad L_{\beta}=\sum_{n\in\mathbb{Z}_{+}}\sum_{0\leq k\leq 2l}\beta^{k}_{-n}\beta^{k}_{n} ,\qquad L_{\psi}|_{U_{\alpha}}=\sum_{n\in\mathbb{Z}_{+}}\sum_{0\leq k\leq 2l}n\psi^{k}_{-n}\psi^{k}_{n}.$$

\begin{definition}\label{4.2}   The momentun operator is defined by 
$$P:=L_{\beta}-L_{\alpha}-L_{\psi}-\frac{\dim M}{24}\in \End (\V_{R}).$$
  \end{definition}
  
 In the finite dimensional case the extra term is given by $c_{1}=\sum_{r\in\mathcal{A}}\sum_{k=1}^{d_{r}}r$ (see \eqref{c1}), this term does not make sense in this case. Instead, we consider the factor ${\dim M}/{24}$, this factor fix the modular invariance of the Witten genus. 
 
 \begin{remark} In the physics literature $\dim M/24$ appears as a zeta regularization of $\sum_{n\in\ZZ}\sum_{k=1}^{2l}n$. In vertex algebras, ${\dim M}/{24}$ appears as the central charge of the Neveu-Schwarz twisted module or Ramond fermions.
 \end{remark}

\subsection{The index and the supersymmetry}\label{sec4.4}
 
We define an Hermitian product on the vector space of global sections $\Gamma(\V_{R})$ as follows
$$\langle v, w\rangle :=\int_{M}(v(x),w(x))dM,\qquad  v,w\in \Gamma(\V_{R}).$$
Where $dM$ is the volume form coming form the Riemannian metric on M. From Lemma \ref{ad}, the operator defined by $\slashed{D}_{N}:=Q-\slashed{D}_{M}$ is self adjoint, moreover $\slashed{D}_{M}$ is a twisted Dirac operator and its self adjoinest follows from the divergence for vector fields. Then, we obtain the following proposition

\begin{proposition}\label{i2} The operator $Q$ is self adjoint i.e.
$$\langle Q_{R}v,w\rangle =\langle v,Q_{R}w\rangle ,\qquad v,w\in \Gamma(\V).$$
\end{proposition}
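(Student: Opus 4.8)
The plan is to reproduce, mode by mode, the proof of Proposition~\ref{i} from the finite dimensional case. Write $Q_R=\slashed{D}_M+\slashed{D}_N$ with $\slashed{D}_N:=Q_R-\slashed{D}_M$, so that locally $\slashed{D}_N|_U=\sum_{n\in\ZZ_+}\psi^k_{-n}\alpha^k_n+\alpha^k_{-n}\psi^k_n$. Since $Q_R$ is the sum of two terms, it suffices to prove that $\slashed{D}_M$ and $\slashed{D}_N$ are separately symmetric for $\langle\cdot,\cdot\rangle$. As is unavoidable in the infinite dimensional setting, I would first make precise the space on which the identity is asserted: take the dense subspace of \emph{algebraic} sections of $\V_R$, i.e.\ finite sums of monomials in the creation operators $\alpha^k_{-n},\beta^k_{-n},\psi^k_{-n}$ with $C^\infty(M)$ coefficients. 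On this subspace $Q_R$ (defined via the normal ordered product) acts as a genuine operator, every sum that appears below is finite, and symmetry there is exactly what the statement means (the $\Gamma(\V)$ in the statement should read $\Gamma(\V_R)$).

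For $\slashed{D}_N$ the argument is fiberwise. By Proposition~\ref{ad}, which transfers verbatim to $\V_R$ as noted after Definition~\ref{oscilator2} (each factor $N_n=TM_{\CC}$ carries the fixed Hermitian metric \eqref{Rinner}, so the proof is literally the same), the fiberwise adjoint of $\psi^k_n$ with respect to $(\cdot,\cdot)$ is $\psi^k_{-n}$ and that of $\alpha^k_n$ is $\alpha^k_{-n}$. Since $(\cdot,\cdot)$ is an ordinary (non-super) Hermitian product, $(AB)^{*}=B^{*}A^{*}$ with no sign, so $(\psi^k_{-n}\alpha^k_n)^{*}=\alpha^k_{-n}\psi^k_n$ and $(\alpha^k_{-n}\psi^k_n)^{*}=\psi^k_{-n}\alpha^k_n$. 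Hence the local expression for $\slashed{D}_N$ is invariant under taking the fiberwise adjoint, and there is no normal ordering ambiguity because in each summand the creation operator already stands to the left. As $\langle\cdot,\cdot\rangle$ is just $\int_M(\cdot,\cdot)\,dM$ and these are pointwise bundle endomorphisms, $\langle\slashed{D}_N v,w\rangle=\langle v,\slashed{D}_N w\rangle$ for all algebraic $v,w$.

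For $\slashed{D}_M$ this is the classical formal self-adjointness of a Dirac operator twisted by a Hermitian bundle with compatible connection. Indeed $\slashed{D}_M|_U=\psi^i_0 D^M_{e_i}$, the connection $D^M$ of \eqref{dm2} is compatible with $(\cdot,\cdot)$ on $\V_R$ (it is a tensor product of the Hermitian connections \eqref{coninfi} and $D^{\bigtriangleup}$), and $\psi_0$ is skew-adjoint, $(\psi_0(x)v,w)=-(v,\psi_0(x)w)$. Combining these two facts, a standard integration by parts gives $\langle\slashed{D}_M v,w\rangle-\langle v,\slashed{D}_M w\rangle=\int_M\diver V\,dM$ for a vector field $V$ built pointwise from $v,w$, which vanishes since $M$ is closed. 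Adding the two contributions yields $\langle Q_R v,w\rangle=\langle v,Q_R w\rangle$, which then extends to the $L^2$ completion in the obvious way.

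The main obstacle is not any single computation but the bookkeeping forced by the infinite dimensionality: one must fix the domain (the completion of the algebraic sections) and check that $Q_R$, as defined through the normal ordered product, maps this core into itself, so that the term-by-term manipulation above is legitimate; once this is granted the identity reduces to the two finite arguments just sketched. A secondary point to verify is that the adjointness relations of Proposition~\ref{ad} and the metric compatibility of $D^M$ genuinely survive the passage from the finite set $\mathcal{A}$ to $\ZZ_+$, which they do summand by summand because every $N_n$ is just $TM_{\CC}$ with the Hermitian metric \eqref{Rinner} and the Levi--Civita-induced connection \eqref{coninfi}.
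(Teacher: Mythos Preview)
Your proposal is correct and follows essentially the same route as the paper: the paper simply observes (in the sentence preceding the proposition) that $\slashed{D}_N:=Q_R-\slashed{D}_M$ is self-adjoint by the adjointness relations of Proposition~\ref{ad}, and that $\slashed{D}_M$ is a twisted Dirac operator whose self-adjointness follows from the divergence theorem. Your write-up is in fact more careful than the paper's, since you make explicit the algebraic domain on which the normal-ordered operator acts and check that the sums are finite there.
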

 
 
We are motivated by  \eqref{equiin} to define \emph{the analytic equivariant index} of $Q_{R}$ as follows. 
$$\Ind (Q_{R},q):=\Tr (q^{P}|_{\ker (Q|_{\V^{+}_{R}})})-\Tr (q^{P}|_{\ker (Q|_{\V^{-}_{R}})}).$$
 where $q=e^{i\theta}\in\mathbb{C}^{\times}$. 
 
The proof of the following lemmas and theorem can be translated \emph{mutatis mutandis} from Lemmas \ref{operalg} and \ref{susy} and Theorem \ref{index}.

\begin{lemma} The operators satisfy the following relations

\emph{(i)}\; $[Q_{R} ,P]=0$

\emph{(ii)}\;  $Q^{2}_{R}=Q^{2}_{M}+L_{\alpha}+L_{\psi}$ 

\end{lemma}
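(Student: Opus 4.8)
\textbf{Proof plan for the final Lemma (the operator identities $[Q_R,P]=0$ and $Q_R^2 = \slashed{D}_M^2 + L_\alpha + L_\psi$).}

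The plan is to localize the computation to an open set $U\subset M$ carrying the orthonormal frames $\{e_i\}$, $\{z^k_n\}$, $\{\bar z^k_n\}$ and to reuse the finite-dimensional computation of Lemma \ref{operalg} verbatim, the only new ingredient being the normal ordering in the definition of $Q_R$ and $P$. First I would record, exactly as in the proof of Lemma \ref{operalg}, the commutators of the Levi-Civita connection $D^M_{e_i}$ with the creation/annihilation operators: writing $\nabla^M_{e_i}z^j_n = w^k_{ij}z^k_n$ (the connection coefficients are the same $w^k_{ij}$ for every mode $n$ since $N_n = TM_\CC$), one gets $[D^M_{e_i},\alpha^j_n] = -w^j_{ik}\alpha^k_n$, $[D^M_{e_i},\alpha^j_{-n}] = w^k_{ij}\alpha^k_{-n}$, and the analogous relations for the $\beta$'s and the $\psi$'s. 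These are precisely the relations used in Lemma \ref{operalg}, now holding for all $n\in\ZZ_+$.

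Next I would split $Q_R = \slashed{D}_M + \slashed{D}_N$ with $\slashed{D}_N|_U = \sum_{n>0}(\psi^k_{-n}\alpha^k_n + \alpha^k_{-n}\psi^k_n)$, which as written is already normal ordered mode-by-mode, so no ambiguity arises in $\slashed{D}_N$ itself. For part (ii) I would expand $Q_R^2 = \slashed{D}_M^2 + [\slashed{D}_M,\slashed{D}_N] + \slashed{D}_N^2$. The cross term $[\slashed{D}_M,\slashed{D}_N]$ vanishes by the same cancellation between the $w$-coefficients as in Lemma \ref{operalg} (the sums over $n$ converge trivially because each bracket is computed mode-by-mode). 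For $\slashed{D}_N^2$ one computes $[\psi^k_{-n}\alpha^k_n + \alpha^k_{-n}\psi^k_n,\ \psi^j_{-m}\alpha^j_m + \alpha^j_{-m}\psi^j_m]$ using Proposition \ref{ad1}: the contractions produce $\sum_{n>0}(\alpha^k_{-n}\alpha^k_n + n\,\psi^k_{-n}\psi^k_n)$ together with constant terms $n - n$ from the two orderings of $[\alpha_n,\alpha_{-n}]$ and $[\psi_n,\psi_{-n}]$ that cancel in pairs — which is exactly why the naive constant $\sum n$ never appears and why the normal-ordered $P$, rather than a shifted version, is the correct object. This yields $\slashed{D}_N^2 = 2L_\alpha + 2L_\psi = L_\alpha + L_\psi$ after the factor from the graded commutator, giving (ii). For part (i), $[Q_R,P]=0$ follows from $[\slashed{D}_M,L_\beta]=[\slashed{D}_M,L_\alpha]=[\slashed{D}_M,L_\psi]=0$ (the $w$-coefficient cancellation again) and $[\slashed{D}_N,L_\beta]=0$ (the $\beta$-oscillators commute with $\alpha$ and $\psi$) together with $[\slashed{D}_N,L_\alpha+L_\psi]=0$ (the same $-n\psi_r\alpha_{-r}+n\psi_r\alpha_{-r}$ cancellation as in Lemma \ref{operalg}); the constant $-\dim M/24$ is central and drops out of every bracket.

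The one genuinely new point — and the place I expect to have to be careful — is \emph{well-definedness of the infinite sums}: $P$, $L_\alpha$, $L_\beta$, $L_\psi$ and $Q_R$ are infinite series of operators, so I would note that on any fixed section of $\V_R$ (a finite tensor, i.e. a polynomial in the creation operators times a spinor) only finitely many terms of each series act nontrivially, so the series are well defined as endomorphisms of the dense subspace of finite sections, and all the bracket identities above are identities of such finite-rank-per-vector operators; thus the mode-by-mode computation is legitimate and no regularization is needed beyond the choice of normal ordering already built into the definitions. Everything else is the verbatim translation promised in the text. The later Theorem \ref{index2} then follows by the same supersymmetry argument: $\ker Q_R \subset \bigtriangleup(M)\otimes_{n>0}S(N_n)$, $Q_R$ restricts there to $\slashed{D}_M$ twisted by $\otimes_{n>0}S(N_n)$, $q^P$ acts as $q^{-\dim M/24}\prod_{n>0} q^{n\cdot(\text{number operator})}$, and the twisted Atiyah--Singer theorem \eqref{TAS} produces $A(M)\ch(\otimes_{n>0}S_{q^n}(TM_\CC))$ times $q^{-\dim M/24}$, which is exactly $\Phi(M,q)/\eta(q)^{\dim M}$.
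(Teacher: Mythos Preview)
Your proposal is correct and follows exactly the paper's approach, which is simply to declare that the proof translates \emph{mutatis mutandis} from Lemma~\ref{operalg}. You have in fact supplied more detail than the paper does---particularly the observation that the connection coefficients $w^k_{ij}$ are independent of the mode $n$ (since every $N_n=TM_\CC$) and the remark that only finitely many modes act nontrivially on any fixed polynomial section of $\V_R$, which is the right justification for the mode-by-mode computation and for why no regularization beyond the built-in normal ordering is needed.
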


\begin{lemma}[susy]
{$\ker (Q_{R})\subset V_{R}':=\bigtriangleup (M)\otimes S(N)\subset V$.}
\end{lemma}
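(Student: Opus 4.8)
The plan is to reproduce the proof of Lemma~\ref{susy} line by line, replacing the finite index set $\A$ by $\ZZ_{+}$, the bundle $\V$ by $\V_{R}$, and $\V'$ by $\V_{R}' := \bigtriangleup(M)\otimes_{n>0}S(N_{n})$. The two inputs are the self-adjointness of $Q_{R}$ on $\Gamma(\V_{R})$ (Proposition~\ref{i2}) and the Weitzenb\"ock-type identity $Q_{R}^{2} = \slashed{D}_{M}^{2} + L_{\alpha} + L_{\psi}$ from part (ii) of the preceding lemma (writing $\slashed{D}_{M}$ for the twisted Dirac operator denoted $Q_{M}$ there), where $L_{\alpha} = \sum_{n>0}\sum_{k}\alpha^{k}_{-n}\alpha^{k}_{n}$ and $L_{\psi} = \sum_{n>0}\sum_{k}n\,\psi^{k}_{-n}\psi^{k}_{n}$ are the normal-ordered number operators attached to the $S(\bar{N}_{n})$- and $\wedge^{*}\bar{N}_{n}$-factors of $\V_{R}$.

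First I would record that $L_{\alpha}$ and $L_{\psi}$ are diagonal, self-adjoint and positive semidefinite on $\V_{R}$: by (the $\V_{R}$-analogues of) Propositions~\ref{ad1} and~\ref{ad}, each summand $\alpha^{k}_{-n}\alpha^{k}_{n}$ and each $n\,\psi^{k}_{-n}\psi^{k}_{n}$ is a nonnegative number operator, and on any (polynomial) section only finitely many summands act nontrivially, so the infinite sums are well defined. Their kernels are precisely the ``vacuum'' subbundles: $\ker L_{\psi}$ is the locus where every $\wedge^{*}\bar{N}_{n}$ factor has degree $0$, i.e.\ $\Gamma\!\big(\bigtriangleup(M)\otimes_{n>0}S(N_{n})\otimes_{n>0}S(\bar{N}_{n})\big)$, and $\ker L_{\alpha}$ is the locus where every $S(\bar{N}_{n})$ factor has degree $0$, i.e.\ $\Gamma\!\big(\bigtriangleup(M)\otimes_{n>0}\wedge^{*}\bar{N}_{n}\otimes_{n>0}S(N_{n})\big)$; hence $\ker L_{\alpha}\cap\ker L_{\psi} = \Gamma(\V_{R}')$.

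Then, for any $v\in\Gamma(\V_{R})$, combining Proposition~\ref{i2}, the square identity, and the fact that $\slashed{D}_{M}$ is a self-adjoint twisted Dirac operator on the compact manifold $M$, I obtain
$$\langle Q_{R}v,Q_{R}v\rangle = \langle Q_{R}^{2}v,v\rangle = \langle \slashed{D}_{M}v,\slashed{D}_{M}v\rangle + \langle (L_{\alpha}+L_{\psi})v,v\rangle \ \ge\ \langle (L_{\alpha}+L_{\psi})v,v\rangle \ \ge\ 0.$$
If $Q_{R}v=0$ then both nonnegative summands vanish; in particular $\langle (L_{\alpha}+L_{\psi})v,v\rangle = 0$, and since $L_{\alpha}+L_{\psi}$ is diagonal and positive semidefinite this forces $v\in\ker L_{\alpha}\cap\ker L_{\psi} = \Gamma(\V_{R}')$. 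This gives $\ker Q_{R}\subset\V_{R}'$ as claimed; moreover $Q_{R}|_{\V_{R}'}=\slashed{D}_{M}|_{\V_{R}'}$ by the local form of $Q_{R}$, which is what the subsequent identification of $\Ind(Q_{R},q)$ with $\Phi(M,q)/\eta(q)^{\dim M}$ will use.

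The only point that is genuinely new relative to the finite-dimensional case, and the step I would check most carefully, is that the sums defining $Q_{R}$, $L_{\alpha}$ and $L_{\psi}$ are now infinite: one must verify that the normal ordering built into $Q_{R}$ does not spoil the identity $Q_{R}^{2} = \slashed{D}_{M}^{2}+L_{\alpha}+L_{\psi}$ through a regularization anomaly. In the finite case the computation of $\slashed{D}_{N}^{2}$ produced a cancellation ``$r-r$'' at each weight $r$; here the analogous cancellation ``$n-n$'' occurs separately for every $n>0$ \emph{before} summation, so the divergent series $\sum_{n>0}n$ never enters $Q_{R}^{2}$ (it is instead inserted by hand as $-\dim M/24$ into the momentum operator $P$, which plays no role in this lemma). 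Granting this, the rest is formally identical to the proof of Lemma~\ref{susy}.
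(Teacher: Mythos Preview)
Your proof is correct and follows exactly the approach the paper intends: it explicitly says the argument is obtained \emph{mutatis mutandis} from the finite-dimensional Lemma~\ref{susy}, using Proposition~\ref{i2} and the Weitzenb\"ock identity $Q_{R}^{2}=\slashed{D}_{M}^{2}+L_{\alpha}+L_{\psi}$, which is precisely what you do. Your added remarks on well-definedness of the infinite sums and the ``$n-n$'' cancellation are more careful than the paper's own treatment but do not change the route.
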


And, we have the theorem that express the index in terms of a topological expression
    
    \begin{theorem}\label{index2}
    $$\Ind (Q_{R},q)=q^{-\frac{\dim M}{24}}\int_{M} A(M)\ch (\bigotimes_{n=1}^{\infty}S_{q^{n}}(TM_{\mathbb{C}}))=\frac{\Phi(M,q)}{\eta(q)^{dimM}}.$$
\end{theorem}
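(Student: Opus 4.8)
\textbf{Proof plan for Theorem \ref{index2}.}

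The strategy is to mirror the proof of Theorem \ref{index} step by step, translating each ingredient to the $X=\L M$ setting, and then to identify the resulting topological expression with the Dedekind-eta normalized Witten genus. First I would invoke the analogue of Lemma \ref{susy}: since $L_{\alpha}$ and $L_{\psi}$ are diagonal bundle endomorphisms of $\V_{R}$ with strictly positive eigenvalues on every summand involving $\wedge^{\ast}\bar N_{n}$ or $S(\bar N_{n})$ for $n>0$, the Weitzenb\"ock-type identity $Q_{R}^{2}=\slashed{D}_{M}^{2}+L_{\alpha}+L_{\psi}$ forces $\ker(Q_{R})\subset \V_{R}':=\bigtriangleup(M)\otimes_{n\in\ZZ_{+}}S(N_{n})$. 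On this sub-bundle $Q_{R}$ restricts to the twisted Dirac operator $\slashed{D}_{M}|_{\V_{R}'}$, so that $\Ind(Q_{R},q)=\Tr(q^{P}|_{\ker(\slashed{D}_{M}|_{(\V_{R}')^{+}})})-\Tr(q^{P}|_{\ker(\slashed{D}_{M}|_{(\V_{R}')^{-}})})$, using that $[Q_{R},P]=0$ so that $P$ acts on the finite-dimensional cohomology.

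Next I would compute the action of $q^{P}$ on $\V_{R}'=\bigtriangleup(M)\otimes_{n\in\ZZ_{+}}S(N_{n})$. Since $P=L_{\beta}-L_{\alpha}-L_{\psi}-\dim M/24$ and only $L_{\beta}$ acts nontrivially on $S(N_{n})$ (with $\beta^{k}_{-n}$ carrying weight $n$), one gets $q^{P}\big(\bigtriangleup(M)\otimes_{n\in\ZZ_{+}}S(N_{n})\big)=q^{-\dim M/24}\,\bigtriangleup(M)\otimes_{n\in\ZZ_{+}}S_{q^{n}}(N_{n})=q^{-\dim M/24}\,\bigtriangleup(M)\otimes_{n\in\ZZ_{+}}S_{q^{n}}(TM_{\CC})$, because $N_{n}=TM_{\CC}$ for all $n$. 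Therefore, exactly as in Theorem \ref{index}, the eigenspace decomposition of $P$ together with $[Q_{R},P]=0$ and the Atiyah--Singer index theorem for twisted bundles \eqref{TAS} yields
\begin{equation*}
\Ind(Q_{R},q)=q^{-\frac{\dim M}{24}}\int_{M} A(M)\,\ch\!\Big(\bigotimes_{n=1}^{\infty}S_{q^{n}}(TM_{\CC})\Big).
\end{equation*}
Here one must be slightly careful that the infinite tensor product makes sense coefficient-by-coefficient in $q$: each power $q^{N}$ receives contributions only from finitely many $n$, so the index is a well-defined element of $\QQ[[q]]$ (indeed of $\ZZ[[q]]$ since $M$ is spin), and the trace/index manipulations are legitimate order by order.

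Finally I would convert this into the stated form. Recalling the definition \eqref{WG},
\begin{equation*}
\Phi(M,q)=\int_{M} A(M)\,\ch\!\Big(\bigotimes_{n=1}^{\infty}S_{q^{n}}(TM_{\CC})\Big)\prod_{n=1}^{\infty}(1-q^{n})^{\dim M},
\end{equation*}
and the product formula $\eta(q)=q^{1/24}\prod_{n=1}^{\infty}(1-q^{n})$ gives $\eta(q)^{\dim M}=q^{\dim M/24}\prod_{n=1}^{\infty}(1-q^{n})^{\dim M}$. Dividing, the two $q$-power prefactors and the two infinite products match up, so $\Phi(M,q)/\eta(q)^{\dim M}=q^{-\dim M/24}\int_{M}A(M)\,\ch(\bigotimes_{n\geq 1}S_{q^{n}}(TM_{\CC}))$, which is precisely the expression just obtained for $\Ind(Q_{R},q)$. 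I expect the main obstacle to be not any single algebraic identity but the careful bookkeeping needed to justify working with the infinite-dimensional bundle $\V_{R}$: one must check that the normal-ordering in Definition \ref{4.1} makes $Q_{R}$, $P$ and $Q_{R}^{2}$ well-defined fibrewise operators, that the "$\Ind$" in \eqref{intro2} converges as a formal $q$-series with each coefficient a genuine finite-dimensional index, and that the supersymmetry argument (positivity of $L_{\alpha}+L_{\psi}$ on the non-vacuum sectors) survives the passage to infinitely many oscillator modes; these are the points where the "mutatis mutandis" from the finite-dimensional case requires genuine, if routine, verification.
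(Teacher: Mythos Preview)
Your proposal is correct and follows exactly the approach the paper indicates: the paper states that Theorem \ref{index2} is proved \emph{mutatis mutandis} from Lemmas \ref{operalg}, \ref{susy} and Theorem \ref{index}, and you have faithfully carried out that translation, additionally spelling out the elementary identification with $\Phi(M,q)/\eta(q)^{\dim M}$ via the product formula for $\eta$.
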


\end{document}